\newcommand{\np}{{\em NP}\xspace}
\newcommand{\nphard}{\np-hard\xspace} 
\DeclareMathOperator*{\Exp}{E}
\newcommand{\E}[2][{}]{\ensuremath{\Exp_{#1}\bigl[#2\bigr]}}
\newtheorem{theorem}{Theorem}[section]
\newtheorem{lemma}[theorem]{Lemma}
\newtheorem{claim}[theorem]{Claim}
\newtheorem{defn}[theorem]{Definition}
\newtheorem{remark}[theorem]{Remark}}
\def\blksquare{\rule{2mm}{2mm}}
\def\qedsymbol{\blksquare}
\newcommand{\bg}[1]{\medskip\noindent{\bf #1}}
\newcommand{\ed}{{\hfill\qedsymbol}\medskip}
\newenvironment{proof}{\bg{Proof : }}{\ed}
\newenvironment{proofof}[1]{\bg{Proof of #1 : }}{\ed}
\newcommand{\R}{\ensuremath{\mathbb R}}
\newcommand{\A}{\ensuremath{\mathcal{A}}}
\newcommand{\I}{\ensuremath{\mathcal I}}
\newcommand{\F}{\ensuremath{\mathcal F}}
\newcommand{\D}{\ensuremath{\mathcal D}}
\newcommand{\Nc}{\ensuremath{\mathcal N}}
\newcommand{\Sc}{\ensuremath{\mathcal S}}
\newcommand{\Pc}{\ensuremath{\mathcal P}}
\newcommand{\OPT}{\ensuremath{\mathit{OPT}}}
\newcommand{\cost}{\ensuremath{\mathit{cost}}}
\newcommand{\argmin}{\ensuremath{\mathrm{argmin}}}
\newcommand{\frall}{\ensuremath{\text{ for all }}}
\newcommand{\es}{\ensuremath{\emptyset}}
\newcommand{\assign}{\ensuremath{\leftarrow}}
\newcommand{\poly}{\operatorname{\mathsf{poly}}}
\newcommand{\e}{\ensuremath{\epsilon}}
\newcommand{\gm}{\ensuremath{\gamma}}
\newcommand{\sse}{\subseteq}
\def\br#1{{{(#1)}}}
\newcounter{one} \newcounter{two}
\newcommand{\on}{\ensuremath{\mathrm{\Roman{one}}}}
\newcommand{\bsc}{{\small \textsf{RASC}}\xspace}
\newcommand{\dsc}{{\small \textsf{DSC}}\xspace}
\newcommand{\bscp}{(BSC-P)\xspace}
\newcommand{\bufl}{{\small \textsf{RAUFL}}\xspace}
\newcommand{\dufl}{{\small \textsf{DUFL}}\xspace}
\newcommand{\sufl}{{\small \textsf{SUFL}}\xspace}
\newcommand{\robopt}{\ensuremath{\mathit{OPT_{Rob}}}}
\newcommand{\tx}{\ensuremath{\tilde x}}
\newcommand{\ty}{\ensuremath{\tilde y}}
\newcommand{\hd}{\ensuremath{\hat d}}
\newcommand{\htf}{\ensuremath{\widehat f}}
\newcommand{\hx}{\ensuremath{\hat x}}
\newcommand{\hy}{\ensuremath{\hat y}}
\newcommand{\hz}{\ensuremath{\hat z}}
\newcommand{\hh}{\ensuremath{\widehat h}}
\newcommand{\hp}{\ensuremath{\widehat p}}
\newcommand{\hq}{\ensuremath{\widehat q}}
\newcommand{\by}{\ensuremath{\bar y}}
\newcommand{\bz}{\ensuremath{\bar z}}
\newcommand{\brr}{\ensuremath{\bar r}}
\newcommand{\sx}{\ensuremath{x^*}}
\newcommand{\sy}{\ensuremath{y^*}}
\newcommand{\sz}{\ensuremath{z^*}}
\newcommand{\ld}{\ensuremath{\lambda}}
\newcommand{\Ld}{\ensuremath{\Lambda}}
\newcommand{\kp}{\ensuremath{\kappa}}
\newcommand{\al}{\ensuremath{\alpha}}
\newcommand{\tht}{\ensuremath{\theta}}
\newcommand{\dt}{\ensuremath{\delta}}
\newcommand{\Dt}{\ensuremath{\Delta}}
\newcommand{\sg}{\ensuremath{\sigma}}
\newcommand{\w}{\ensuremath{\omega}}
\newcommand{\bo}{\ensuremath{\boldsymbol{0}}}
\newcommand{\ve}{\ensuremath{\varepsilon}}
\newcommand{\vro}{\ensuremath{\varrho}}
\newcommand{\hro}{\ensuremath{\hat\rho}}
\newcommand{\hkp}{\ensuremath{\hat\kp}}
\newcommand{\findmin}{{\small \textsf{FindMin}}\xspace}
\newcommand{\chancealg}{{\small \textsf{RiskAlg}}\xspace}
\newcommand{\budgalg}{{\small \textsf{SA-Alg}}\xspace}
\newcommand{\lb}{\ensuremath{\mathsf{LB}}}
\newcommand{\ub}{\ensuremath{\mathsf{UB}}}
\newcommand{\Gm}{\ensuremath{\Gamma}}
\newcommand{\VaR}{VaR\xspace}
\title{Algorithms for Probabilistically-Constrained Models of Risk-Averse Stochastic
Optimization with Black-Box Distributions} 
\author{
         Chaitanya Swamy\thanks{{\tt cswamy@math.uwaterloo.ca}.
         Dept. of Combinatorics and Optimization, Univ. Waterloo, Waterloo, ON N2L 3G1.
         Supported in part by NSERC grant 32760-06.}
}
\date{}
\begin{document}

\maketitle

\begin{abstract}
We consider various stochastic models that incorporate the notion of risk-averseness  
into the standard 2-stage recourse model, 
and develop novel techniques for solving the algorithmic problems arising in these models.  
A key notable feature of our work that distinguishes it from work in some other related
models, such as the (standard) budget model and the (demand-) robust model,
is that we obtain results in the {\em black-box} setting, that is, 
where one is given {\em only} sampling access to the underlying distribution. 
Our first model, which we call the {\em risk-averse budget model}, incorporates the notion
of risk-averseness via a {\em probabilistic constraint} 
that restricts the probability (according to the underlying distribution) with which the 
second-stage cost may exceed a given budget $B$ to at most a given input threshold $\rho$.
We also a consider a closely-related model 
that we call the {\em risk-averse robust model}, where we seek to minimize the first-stage
cost and the {\em $(1-\rho)$-quantile} (according to the distribution) of the second-stage
cost. 

We obtain approximation algorithms for a variety of combinatorial optimization
problems including the set cover, vertex cover, multicut on trees,  
min cut, and facility location problems, in the risk-averse budget and robust models with  
{\em black-box distributions}. 
We first devise a {\em fully polynomial approximation scheme} for solving the
{\em LP-relaxations} of a wide-variety of risk-averse budgeted problems. Complementing
this, we give a rounding procedure that lets us use existing LP-based approximation
algorithms for the 2-stage stochastic and/or deterministic counterpart of the problem to
round the fractional solution. Thus, we obtain near-optimal solutions to risk-averse
problems that preserve the budget approximately and incur a small blow-up of the
probability threshold (both of which are unavoidable). 
To the best of our knowledge, these are the {\em first approximation results} for problems    
involving {\em probabilistic constraints and black-box distributions}.
Our results extend to the setting with non-uniform scenario-budgets,
and to a generalization of the risk-averse robust model, where the goal is to minimize the
sum of the first-stage cost and a weighted combination of the expectation and the
$(1-\rho)$-quantile of the second-stage cost. 

\end{abstract}

\section{Introduction}
Stochastic optimization models provide a means to model uncertainty in
the input data where the uncertainty is modeled by a probability
distribution over the possible realizations of the actual data, called {\em scenarios}.  
Starting with the work of Dantzig~\cite{Dantzig55} and Beale~\cite{Beale55} in the 1950s,
these models have found increasing application in a wide variety of areas; see,
e.g.,~\cite{BirgeL97,RuszczynskiS03} and the references therein.  
An important and widely-used model in stochastic programming is the {\em 2-stage recourse 
model}: 
first, given the underlying distribution over scenarios, one may take some 
{\em first-stage} actions to construct an anticipatory part of the solution, $x$, incurring an
associated cost $c(x)$.  
Then, a scenario $A$ is realized according to the distribution, and one may take
additional {\em second-stage recourse actions} $y_A$ incurring a certain
cost $f_A(x,y_A)$. 
The goal in the standard 2-stage model is to minimize the total expected cost,
$c(x)+\E[A]{f_A(x,y_A)}$. 
Many applications come under this setting. 
An oft-cited motivating example is the {\em 2-stage stochastic facility location
problem}. A company has to decide where to set up its facilities  
to serve client demands. The demand-pattern is not known precisely at the outset,
but one does have some statistical information 
about the demands. 
The first-stage decisions consist of deciding which facilities to open initially, 
given the distributional information about the demands; 
once the client demands are realized according to this distribution, we can extend the
solution by opening more facilities, incurring their recourse costs. 
The recourse costs are usually higher than the original ones (e.g., because opening a 
facility later involves deploying resources with a small lead time), could be
different for the different facilities, and could even depend on the realized scenario. 

A common criticism of the standard 2-stage model is that the expectation measure fails to
adequately measure the  {\em ``risk''} associated with the first-stage decisions: 
two solutions with the same expected cost are valued equally. 
But in realistic settings, one also considers the risk involved in the 
decision. For example, in the stochastic facility location problem, given two
solutions with the same expected cost, one which incurs a moderate second-stage cost in
all scenarios, and one where there is a non-negligible probability that a ``disaster  
scenario'' with a huge associated cost occurs, a company would naturally \mbox{prefer the
former solution.} 

\medskip
\noindent {\bf Our models and results.\ }
We consider various stochastic models that incorporate the notion of risk-averseness  
into the standard 2-stage model 
and develop novel techniques for solving the algorithmic problems arising in these models.  
A key notable feature of our work that distinguishes it from work in some other related
models~\cite{GuptaRS04,DhamdhereGRS05},  
is that we obtain results in the {\em black-box} setting, that is, 
where one is given {\em only} sampling access to the underlying distribution. 
To better motivate our models, 
we first give an overview of some related models considered in the
approximation-algorithms literature that also embody the idea of risk-protection,
and point out why these models are ill-suited to the design of \mbox{algorithms in the  
black-box setting.} 

One simple and natural way of providing some assurance against the risk due to 
scenario-uncertainty is to provide bounds on the second-stage cost incurred in each
scenario. Two closely related models in this vein are the {\em budget model}, considered
by Gupta, Ravi and Sinha~\cite{GuptaRS04}, and the {\em (demand-) robust model},
considered by Dhamdhere, Goyal, Ravi and Singh~\cite{DhamdhereGRS05}. 
In the budget model, one seeks to minimize the expected total cost subject to the
constraint that the second-stage cost $f_A(x,y_A)$ incurred in every scenario $A$ be at
most some input budget $B$. (In general, one could have a different budget $B_A$ for each  
scenario $A$, but for simplicity we focus on the uniform-budget model.)
Gupta et al. considered the budget model in the {\em polynomial scenario} setting, where
one is given explicitly a list of all scenarios (with non-zero probability) and their
probabilities, thereby 
restricting their attention to distributions with a polynomial-size support. 
In the robust model considered by Dhamdhere et al.~\cite{DhamdhereGRS05}, which is more in
the spirit of robust optimization, the goal is to minimize
$c(x)+\max_A f_A(x,y_A)$. 
It is easy to see how the two models are related:
if one ``guesses'' the maximum second-stage cost $B$ incurred by the optimum, then the
robust problem essentially reduces to the budget problem with budget $B$, except that the 
second-stage cost term in the objective function is replaced by $B$ (which is a constant).  
Notice that 
it is not clear how to even {\em specify} problems with exponentially many scenarios in
the robust model. 
Feige et al.~\cite{FeigeJMM07} expanded the model of~\cite{DhamdhereGRS05} 
by considering 
exponentially many scenarios, where the scenarios are 
implicitly specified by a cardinality constraint. 
However, considering scenario-collections that are determined only by a cardinality
constraint 
seems rather specialized and somewhat artificial, 
especially in the context of stochastic optimization; e.g., 
in facility location, it is rather stylized (and overly conservative) to assume that 
{\em every} set of $k$ clients (for some $k$) may show up in the second-stage. 
We will consider a more general way of specifying (exponentially many) scenarios in robust
problems, where the input specifies a black-box distribution and the collection
of scenarios is then given by the support of this distribution.
We shall call this model the {\em distribution-based} robust-model.

Both the budget model and the (distribution-based) robust model 
suffer from certain common drawbacks.
A serious {\em algorithmic limitation} of both these models (see Section~\ref{lbounds}) is
that 
for almost any (non-trivial) stochastic problem (such as {\em fractional} stochastic set
cover with at most 3 elements, 3 sets, 3 scenarios), 
one cannot obtain {\em any} approximation guarantees in the black-box setting using any 
{\em bounded number of samples} (even allowing for a bounded violation of the budget in
the budget model). 
Intuitively, the reason for this is that there could be scenarios that occur with
vanishingly small probability that one will almost never encounter in our samples,
but which essentially force one to take certain first-stage actions in order to satisfy
the budget constraints in the budget model, or to obtain a low-cost solution in the robust
model. 
Notice also that 
both the budget and robust models adopt the conservative view that one needs to bound the
second-stage cost in {\em every} scenario, regardless of {\em how likely} it is for the
scenario to occur. 
(By the same token, they also provide the greatest amount of risk-aversion.) In contrast,
many of the risk-models considered in the finance and stochastic-optimization literature, 
such as the mean-risk model~\cite{Markowitz52}, value-at-risk (\VaR)
constraints~\cite{Prekopa95,Jorion96,Pritsker97}, 
conditional \VaR~\cite{RockafellarU02},
do factor in the probabilities of different scenarios.

Our models for risk-averse stochastic optimization address the above issues, and
significantly refine and extend the budget and robust models. 
Our goal is to come up with a model that is sufficiently rich in modeling power to allow
for black-box distributions, {\em and} in which one can obtain strong algorithmic results.
Our models are motivated by the observation (see Appendix~\ref{budget}) that it {\em is}
possible to obtain approximation guarantees in the budget model with black-box
distributions, if one allows the second-stage cost to exceed the budget with some 
``small'' probability $\rho$ (according to the underlying distribution). 
We can turn this solution concept around
and incorporate it into the model to arrive at the following. 
We are now {\em given} a probability threshold $\rho\in[0,1]$.
In our new budget model, which we call the {\em risk-averse budget model}, given a budget
$B$, we seek $(x,\{y_A\})$ so as to minimize $c(x)+\E[A]{f_A(x,y_A)}$ subject to the 
{\em probabilistic constraint} $\Pr_A[f_A(x,y_A)>B]\leq\rho$. 
The corresponding {\em risk-averse (distribution-based) robust model} seeks to minimize
$c(x)+Q_\rho[f_A(x,y_A)]$, 
where $Q_\rho[f_A(x,y_A)]$ is the $(1-\rho)$-quantile of $\{f_A(x,y_A)\}_{A\in\A}$, which
is the smallest number $B$ such that $\Pr_A[f_A(x)>B]\leq\rho$.
Notice that the parameter $\rho$ allows us to control the risk-aversion level and tradeoff
risk-averseness against conservatism (in the spirit of~\cite{BertsimasS04,SoZY06}).
Taking $\rho=1$ in the risk-averse budget model gives the standard 2-stage recourse
model, whereas taking $\rho=0$ in the risk-averse budget- or robust-models
recovers the standard budget- and robust models respectively. 
In the sequel, we treat $\rho$ as a constant that is not part of the input. 

We obtain approximation algorithms for a variety of combinatorial optimization
problems (Section~\ref{apps}) including the set cover, vertex cover, multicut on trees,  
min cut, and facility location problems, in the risk-averse budget and robust models with  
{\em black-box distributions}. 
We obtain near-optimal solutions that preserve the budget approximately and
incur a small blow-up of the probability threshold. (One should expect to
violate the budget here; otherwise, by setting very high first-stage costs, one    
would be able to solve the decision version of an \nphard problem!)
To the best of our knowledge, these are the {\em first approximation results} for problems  
with {\em probabilistic constraints and black-box distributions}.
Our results extend to the setting with non-uniform scenario-budgets,
and to a generalization of the risk-averse robust model, where the goal is to minimize
$c(x)$ plus a weighted combination of $\E[A]{f_A(x,y_A)}$ and 
$Q_\rho[f_A(x,y_A)]$. 
In the sequel, we focus primarily on the risk-averse budget model since results obtained 
this model essentially translate to the risk-averse robust model (the budget-violation 
can be absorbed into the approximation ratio).  

Our results are built on two components. First, and this is the technically more difficult 
component 
and our main contribution, we devise a {\em fully polynomial
approximation scheme} for solving the LP-relaxations of a wide-variety of risk-averse
problems (Theorem~\ref{chancethm}). 
We show that in the black-box setting, for a wide variety of 2-stage problems, for any
$\e,\kp>0$, 
in time $\poly\bigl(\frac{\ld}{\e\kp\rho}\bigr)$, one can compute (with high probability)
a solution to the LP-relaxation of the risk-averse budgeted problem, of cost at most
$(1+\e)$ times the optimum where the probability that the second-stage cost exceeds the
budget $B$ is at most $\rho(1+\kp)$. Here $\ld$ is the maximum ratio between the
costs of the same action in stage II and stage I (e.g., opening a facility or choosing 
a set). We show in Section~\ref{lbounds} that the dependence on $\frac{1}{\kp\rho}$,
and hence, the violation of the probability-threshold, is unavoidable in the black-box
setting. We believe that this is a general tool of independent interest that will find
application in the design of approximation algorithms for other discrete 
risk-averse stochastic optimization problems, 
and that our techniques will find use in solving other probabilistic programs.  

The second component is 
a simple rounding procedure (Theorem~\ref{round}) 
that complements (and motivates) the above approximation scheme. 
As we mention below, our LP-relaxation is a relaxation of even
the {\em fractional risk-averse problem} (i.e., where one is allowed to take fractional
decisions). We give a general rounding procedure to convert a solution to our
LP-relaxation to a solution to the fractional risk-averse problem 
losing a certain factor in the solution cost, budget, and the probability of
budget-violation. This allows us to then use an  LP-based ``local'' approximation
algorithm for the corresponding 2-stage problem to obtain an integer solution, where a
local algorithm is one that approximately preserves the LP-cost of each scenario.
In particular, for various covering problems, one can use the local $2c$-approximation
algorithm in~\cite{ShmoysS06}, which is obtained using an LP-based
$c$-approximation algorithm for the deterministic problem.    

We need to overcome various obstacles 
to devise our approximation scheme. 
The first difficulty faced in solving a probabilistic program such as ours, is that  
the feasible region of {\em even the fractional problem}
is a non-convex set. 
Thus, even in the {\em polynomial-scenario} setting, it is not clear how to solve (even)
the fractional risk-averse problem.   
(In contrast, in the standard 2-stage recourse model, the fractional problem can be easily
formulated and solved as a linear program (LP) in the polynomial-scenario setting.)  
We formulate an LP-relaxation (which is also a relaxation of the fractional problem), 
where we introduce a variable $r_A$ for every scenario $A$ that is supposed to indicate 
whether the budget is exceeded in scenario $A$. 
Correspondingly, we have two sets of decision variables 
to denote the decisions taken in scenario $A$ in the two cases respectively where the
budget is exceeded 
and where it is not exceeded. 
The constraints that enforce this 
semantics will of course be problem-specific, but a common constraint that figures in all
these formulations  
is $\sum_A p_Ar_A\leq\rho$, which captures our probabilistic 
constraint. 
This constraint, which {\em couples} the different scenarios, 
creates significant challenges in solving the LP-relaxation. (Again, notice the contrast 
with the standard 2-stage recourse model.) 
We get around the difficulty posed by this coupling constraint by taking the 
{\em Lagrangian dual} with respect to this constraint, introducing a dual variable
$\Dt\geq 0$. 
The resulting maximization problem (over $\Dt$) has a 2-stage minimization LP 
embedded inside it. 
Although this 2-stage LP does not belong to the class of problems defined
in~\cite{ShmoysS06,SwamyS05,CharikarCP05}, we prove that for any fixed
$\Dt$, this 2-stage LP can be solved to ``near-optimality'' using the {\em sample average 
approximation} (SAA) method. 
The crucial insight here is to realize that for the purpose of obtaining a near-optimal
solution to the risk-averse LP, it suffices to obtain a rather {\em weak} guarantee for
the 2-stage LP, where we allow for an additive error proportional to $\Dt$. 
This guarantee is specifically tailored so that it is weak enough that one can prove such 
a guarantee by showing ``closeness-in-subgradients'' and the analysis in~\cite{SwamyS05},
and yet can be leveraged to obtain a near-optimal solution to (the relaxation of) our
risk-averse problem.   
Given this guarantee, we show that 
one can efficiently find a suitable value for $\Dt$ 
such that the solution obtained for this $\Dt$ (via the SAA method) 
satisfies the desired guarantees.

\medskip
\noindent {\bf Related work.\ }
Stochastic optimization is a field with a vast amount of literature; we direct the reader 
to~\cite{BirgeL97,Prekopa95,RuszczynskiS03} 
for more information on the subject. We survey the work that is most relevant
to our work. Stochastic optimization problems have only recently been studied 
from an approximation-algorithms perspective. 
A variety of approximation results have been obtained in the 2-stage recourse model, 
but more general models, such as risk-optimization or probabilistic-programming models 
have received little or no attention. 

The (standard) budget model was first considered by Gupta et al.~\cite{GuptaRS04}, who
designed approximation algorithms for stochastic network design problems
in this model. 
Dhamdhere et al.~\cite{DhamdhereGRS05} introduced the demand-robust model (which we call
the robust model), and obtained algorithms for the robust versions of various combinatorial
optimization problems;   
some of their guarantees were later improved 
by Golovin et al.~\cite{GolovinGR06}.
All these works focus on the polynomial-scenario setting. Feige, Jain, Mahdian, and
Mirrokni~\cite{FeigeJMM07} considered the robust model with exponentially many scenarios
that are specified implicitly via a cardinality constraint, and derived approximation
algorithms for various covering problems in this more general model.

There is a large body of work in the finance and stochastic-optimization literature,
dating back to Markowitz~\cite{Markowitz52}, that deals with risk-modeling and
optimization; see  
e.g.,~\cite{RockafellarU02,AcerbiT02,RuszczynskiS05} and the references therein.  
Our risk-averse models are related to some models in finance. In fact, 
the probabilistic constraint that we use is called a {\em value-at-risk} (\VaR) constraint 
in the finance literature, and its use in risk-optimization is quite popular in finance
models; it has even been written into some industry regulations~\cite{Jorion96,Pritsker97}. 

Problems involving probabilistic constraints are called {\em probabilistic}
or {\em chance-constrained programs}~\cite{CharnesC59,Prekopa73} in the
stochastic-optimization literature, and have been extensively studied (see, e.g., 
Pr\'{e}kopa~\cite{Prekopa03}). 
Recent work in this area~\cite{CalafioreC06,NemirovskiS05,ErdoganI07} has focused on
replacing the original probabilistic constraint by more tractable constraints 
so that any solution satisfying the new constraints also satisfies the original
probabilistic constraint with high probability. Notice that this type of ``relaxation'' is  
{\em opposite} to what one aims for in the design of approximation algorithms, where we
want that every solution to the original problem remains a solution to the relaxation (but
most likely, not vice versa). Although some approximation results in the opposite
direction are obtained in~\cite{CalafioreC06,NemirovskiS05,ErdoganI07}, they are obtained
for very structured constraints of the type $\Pr_\xi[G(x,\xi)\notin C]\leq\rho$, where $C$
is a convex set, $\xi$ is a continuous random variable whose distribution satisfies a
certain concentration-of-measure property, and $G(.)$ is a bi-affine or convex mapping;
also the bounds obtained involve a relatively large violation of the probability threshold
(compared to our $(1+\kp)$-factor).
To the best of our knowledge, there is no prior work in the stochastic-optimization or
finance literature on 
the design of {\em efficient algorithms} with {\em provable worst-case guarantees} for
discrete risk-optimization or probabilistic-programming problems. 
In the Computer Science literature, \cite{KleinbergRT00} and ~\cite{GoelI99} consider the
stochastic bin packing and knapsack problems with probabilistic constraints that limit the
overflow probability of a bin or the knapsack, and obtained novel approximation algorithms
for these problems. Their results are however obtained for {\em specialized distributions}
where the item sizes are {\em independent} random variables following Bernoulli,
exponential, or Poisson distributions specified in the input. In the context of stochastic
optimization, this constitutes a rather stylized setting that is far from the black-box
setting.    

The work closest in spirit to ours is that of So, Zhang, and Ye~\cite{SoZY06}. 
They consider the problem of minimizing the first-stage cost plus a
risk-measure called the {\em conditional \VaR} (C\VaR)~\cite{RockafellarU02}. 
Their model interpolates between the 2-stage recourse model and the (standard) robust
model (as opposed to the budget model in our case). 
They give an approximation scheme for solving the LP-relaxations of a broad class of
problems in the black-box setting, using which they obtain approximation algorithms for
certain discrete optimization problems. 
Our methods are however quite different from theirs. In their model, the fractional 
problem yields a {\em convex program} and moreover, they are able to use a nice
representation theorem in~\cite{RockafellarU02} for the C\VaR measure to convert their
problem into a 2-stage problem and then adapt the methods in~\cite{CharikarCP05}.
In our case, 
the non-convexity inherent in the probabilistic constraint creates various difficulties
(first the non-convexity, then the coupling constraint) and we consequently need to work
harder to obtain our result.  
We  remark that our techniques can be used to solve a {\em generalization of their
model}, where we have the same objective function 
but also include a probabilistic budget constraint as in our risk-averse budget model.

We now briefly survey the approximation results in recourse models.
The first such approximation result appears to be due to Dye, Stougie, and
Tomasgard~\cite{DyeST03}.   
The recent interest and flurry of algorithmic activity in this area
can be traced to the work of Ravi and Sinha~\cite{RaviS04} and
Immorlica, Karger, Minkoff and Mirrokni~\cite{ImmorlicaKMM04}, which gave approximation 
algorithms for the 2-stage variants of various discrete optimization problems 
in the polynomial scenario~\cite{RaviS04,ImmorlicaKMM04} and 
{\em independent-activation}~\cite{ImmorlicaKMM04} settings. 
Approximation algorithms for 2-stage problems with black-box distributions were first
obtained by Gupta, P\'{a}l, Ravi and Sinha~\cite{GuptaPRS04}, and subsequently by Shmoys
and Swamy~\cite{ShmoysS06} (see also preliminary version~\cite{ShmoysS04}). 
Various other approximation results for 2-stage problems have since been obtained; 
see, e.g., the survey~\cite{SwamyS06}. 
Multistage recourse problems in the black-box model were considered
by~\cite{GuptaPRS05,SwamyS05}; 
both obtain approximation algorithms with guarantees that deteriorate with the number of 
stages, either exponentially~\cite{GuptaPRS05} (except for multistage Steiner tree 
which was also considered in~\cite{HayrapetyanST05}), 
or linearly~\cite{SwamyS05}; improved guarantees for set cover and vertex cover have
been subsequently obtained~\cite{Srinivasan07}.

Our approximation scheme makes use of the SAA method, which is a simple and appealing
method for solving stochastic problems that is quite often used in practice. 
In the SAA method one samples a certain number of scenarios to estimate the scenario
probabilities by their frequency of occurrence, and then solves the 2-stage problem
determined by this approximate distribution. 
The effectiveness of this method depends on the sample size (ideally, polynomial) required
to guarantee that an optimal solution to the SAA-problem is a provably near-optimal
solution to the original problem. 
Kleywegt et al.~\cite{KleywegtSH01} (see also~\cite{Shapiro03}) prove a bound that depends
on the variance of a certain quantity that need not be  polynomially bounded. 
Subsequently, Swamy and Shmoys~\cite{SwamyS05}, and Charikar et al.~\cite{CharikarCP05}
obtained improved (polynomial) sample-bounds for a large class of structured 2-stage
problems. 
The proof in~\cite{SwamyS05}, which also applies to multistage programs, is based on
leveraging approximate subgradients, and  
our proof makes use of portions of their analysis.
The proof of Charikar et al.~\cite{CharikarCP05} is quite different; it applies to 
2-stage programs but proves the stronger theorem that even approximate solutions to the
SAA problem translate to approximate solutions to the original problem.

\section{Preliminaries} \label{prelim}

Let $\R_+$ denote $\R_{\geq 0}$. Let $\|u\|$ denote the $\ell_2$ norm of $u$. 
The {\em Lipschitz constant} of a function $f:\R^m\mapsto\R$ is the smallest $K$ such that
$|f(x)-f(y)|\leq K\|x-y\|$.  
We consider convex minimization problems $\min_{x\in\Pc} f(x)$, where
$\Pc\sse\R_+^m$ with $\Pc\sse B(\bo,R)=\{x:\|x\|\leq R\}$ for a suitable $R$, \mbox{and
$f$ is convex.}

\begin{defn} \label{apsgrad}
Let $f:\R^m\mapsto\R$ be a function. We say that $d\in\R^m$ is a {\em subgradient} of $f$
at the point $u$ if the inequality $f(v)-f(u)\geq d\cdot(v-u)$ holds for every $v\in\R^m$.
We say that $\hd$ is an {\em $(\w,\xi)$-subgradient} of $f$ at the point
$u\in\Pc$ if for every $v\in\Pc$, we have $f(v)-f(u)\geq\hd\cdot(v-u)-\w f(v)-\w f(u)-\xi$. 
\end{defn}

The above definition of an $(\w,\xi)$-subgradient is slightly weaker than the notion of
an $\w$-subgradient as defined in~\cite{ShmoysS06}, where one requires that
$f(v)-f(u)\geq d\cdot(v-u)-\w f(u)$. But this difference is superficial; one could also
implement the algorithm in~\cite{ShmoysS06} using the weaker notion of an
$(\w,\xi)$-subgradient. It is well known (see~\cite{BorweinL00}) that a convex function has
a subgradient at every point. 
One can infer from Definition~\ref{apsgrad} that,
letting $d_x$ denote a subgradient of $f$ at $x$, the Lipschitz constant of $f$ is at most
$\max_x\|d_x\|$. 

Let $K$ be a positive number, and $\tau,\vro$ be two parameters with $\tau<1$. 
Let $N=\log\bigl(\frac{2KR}{\tau}\bigr)$. 
Let $G'_\tau\sse\Pc$ be a discrete set such that for any $x\in\Pc$, 
there exists $x'\in G'_\tau$ with $\|x-x'\|\leq\frac{\tau}{KN}$. 
Define $G_\tau=G'_\tau\cup\bigl\{x+t(y-x), y+t(x-y): x,y\in G'_\tau,\ t=2^{-i},\ i=1,\ldots,N\bigr\}$.
We call $G_\tau$ and $G'_\tau$, an {\em $\frac{\tau}{KN}$-net} and an
{\em extended $\frac{\tau}{KN}$-net} respectively of $\Pc$.
As shown in~\cite{SwamyS05}, if $\Pc$ contains a ball of radius $V$ (where $V\leq 1$
without loss of generality), then one can construct $G'_\tau$ so that
$|G_\tau|=\poly\bigl(\log(\frac{KR}{V\tau})\bigr)$. 
As mentioned earlier, our algorithms make use of the sample average approximation (SAA)
method. 
The following result from Swamy and Shmoys~\cite{SwamyS05}, which we have
adapted to our setting, will be our main tool for analyzing the SAA method.

\begin{lemma}[\cite{SwamyS05}] \label{SAAlem}
Let $\htf$ and $f$ be two nonnegative convex functions with Lipschitz constant at
most $K$ such that at every point $x\in G_\tau$, there exists a vector $\hd_x\in\R^m$ 
that is a subgradient of $\htf(.)$ and an $\bigl(\frac{\vro}{8N},\xi\bigr)$-subgradient  
of $f(.)$ at $x$. Let $\hx=\argmin_{x\in\Pc}\htf(x)$. Then,
$f(\hx)\leq(1+\vro)\min_{x\in\Pc}f(x)+6\tau+2N\xi$.  
\end{lemma}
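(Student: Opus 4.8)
The plan is to exploit the dual role of the common vector $\hd_x$: at each net point it is a genuine subgradient of $\htf$ and simultaneously an $\bigl(\frac{\vro}{8N},\xi\bigr)$-subgradient of $f$. Since $\hx$ minimizes $\htf$ over $\Pc$, I would extract first-order optimality information about $\hx$ from the $\htf$-subgradients and then transfer it to $f$ via the approximate-subgradient inequalities. The one mismatch to overcome is that optimality is a statement about the continuous minimizer $\hx$, whereas the approximate-subgradient guarantee holds only on the discrete net $G_\tau$; bridging this gap is exactly what the extended net, with its dyadic points, is designed for. As a first step I would pass from $\hx$ and the true minimizer $\sx=\argmin_{x\in\Pc}f(x)$ to their nearest net points $\bx,\by\in G'_\tau$, each within $\frac{\tau}{KN}$. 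Since $f$ and $\htf$ are $K$-Lipschitz, each such switch costs only $O(\tau/N)$ in function value, and since $\hd_x$ is a subgradient of the $K$-Lipschitz function $\htf$ we also get $\|\hd_x\|\le K$.

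The heart of the argument is a telescoping along the dyadic sequence $w_i=\bx+2^{-i}(\by-\bx)$ for $i=1,\dots,N$ (with $w_0:=\by$), all of which lie in $G_\tau$ by construction and in $\Pc$ by convexity. Writing the approximate-subgradient inequality for $f$ at $w_i$ evaluated at $w_{i-1}$ and summing over $i$ makes the terms $\hd_{w_i}\cdot(w_{i-1}-w_i)$ telescope. The key identity is that the residual direction satisfies $w_{i-1}-w_i=w_i-\bx$, so first-order optimality of $\hx$ for $\htf$, namely $\hd_{w_i}\cdot(w_i-\hx)\ge\htf(w_i)-\htf(\hx)\ge0$, together with $\|\hd_{w_i}\|\le K$ and $\|\hx-\bx\|\le\frac{\tau}{KN}$, forces each such term to be at least $-\tau/N$. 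Choosing $N=\log\bigl(\frac{2KR}{\tau}\bigr)$ guarantees $w_N$ lands within $\frac{\tau}{K}$ of $\bx$, so the telescoped sum controls $f(\bx)$, hence $f(\hx)$, in terms of $f(\by)$, hence $f(\sx)$.

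Collecting terms, the $N$ copies of the per-scale multiplicative slack $\frac{\vro}{8N}$ sum to a global factor of the form $(1+\vro)$, the $N$ additive errors $\xi$ give the $2N\xi$ term, and the Lipschitz/net roundings together with the $-\tau/N$ bounds on the wrong-sign subgradient contributions supply the $6\tau$ term; bounding every $f(w_i)$ uniformly by $\max\bigl(f(\bx),f(\by)\bigr)$ via convexity keeps the multiplicative error from feeding back on itself. The main obstacle I anticipate is precisely this bookkeeping across the $N$ scales: one must verify that the accumulated multiplicative error stays below $\vro$ (not merely $O(\vro)$ with a loose constant), and that the wrong-sign terms, which arise only because optimality holds at $\hx$ while the approximate-subgradient property holds at the nearby net point, are uniformly $O(\tau/N)$. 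Once these two estimates are secured, rearranging the resulting inequality of the form $f(\hx)\bigl(1-\tfrac{\vro}{2}\bigr)\le f(\sx)+O(\tau)+N\xi$ and using $\frac{1}{1-\vro/2}\le 1+\vro$ yields the claimed bound.
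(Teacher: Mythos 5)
Your argument is correct and reconstructs, in essentially identical form, the proof of this lemma in \cite{SwamyS05}; the paper itself imports the statement without proof. The dyadic telescoping along $w_i=\bx+2^{-i}(\by-\bx)$, the identity $w_{i-1}-w_i=w_i-\bx$ combined with first-order optimality of $\hx$ for $\htf$ at the net points, and the convexity bound $f(w_i)\le\max\bigl(f(\bx),f(\by)\bigr)$ are exactly the ingredients of that proof, and your error accounting (total multiplicative slack $2N\cdot\frac{\vro}{8N}=\vro/4$, additive slack $N\xi$, and the $\tau$-terms from the net roundings and the $-\tau/N$ wrong-sign bounds) is consistent with the stated constants $6\tau$ and $2N\xi$.
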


\begin{lemma}[Chernoff-Hoeffding bound~\cite{Hoeffding63}] \label{chernoff}
Let $X_1,\ldots,X_N$ be iid random variables with each \mbox{$X_i\in[0,1]$} and
$\mu=\E{X_i}$. Then, 
$\Pr[\bigl|\frac{1}{N}\sum_i X_i-\mu\bigr|>\e]\leq 2e^{-2\e^2N}$.
\end{lemma}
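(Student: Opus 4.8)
The plan is to establish the one-sided tail bound $\Pr[\frac{1}{N}\sum_i X_i - \mu > \e] \leq e^{-2\e^2 N}$ and then obtain the two-sided statement by symmetry and a union bound. For the one-sided bound I would use the standard exponential-moment (Chernoff) method. Writing $S = \sum_i (X_i - \mu)$, for any $t>0$ Markov's inequality applied to the nonnegative random variable $e^{tS}$ gives $\Pr[S > N\e] = \Pr[e^{tS} > e^{tN\e}] \leq e^{-tN\e}\E{e^{tS}}$. Since the $X_i$ are independent and identically distributed, $\E{e^{tS}} = \prod_i \E{e^{t(X_i - \mu)}} = \bigl(\E{e^{t(X_1-\mu)}}\bigr)^N$, so the whole problem reduces to bounding the moment generating function of a single centered variable $Y = X_1 - \mu$, which satisfies $\E{Y}=0$ and $Y \in [-\mu,\, 1-\mu]$.

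The technical heart is Hoeffding's lemma: for any zero-mean random variable $Y$ taking values in an interval $[a,b]$, one has $\E{e^{tY}} \leq e^{t^2(b-a)^2/8}$. I would prove this by convexity of the exponential: for $y \in [a,b]$, $e^{ty} \leq \frac{b-y}{b-a}e^{ta} + \frac{y-a}{b-a}e^{tb}$; taking expectations and using $\E{Y}=0$ gives $\E{e^{tY}} \leq e^{\psi(t)}$ for the explicit function $\psi(t) = \log\bigl(\frac{b}{b-a}e^{ta} - \frac{a}{b-a}e^{tb}\bigr)$. A second-order Taylor expansion of $\psi$ around $0$, noting $\psi(0)=\psi'(0)=0$ and that $\psi''(s)$ is the variance of a random variable supported on $[a,b]$ (hence at most $(b-a)^2/4$), yields $\psi(t) \leq t^2(b-a)^2/8$. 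In our case $b-a = (1-\mu)-(-\mu) = 1$, so $\E{e^{t(X_1-\mu)}} \leq e^{t^2/8}$, and therefore $\Pr[S > N\e] \leq e^{-tN\e + Nt^2/8}$.

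It then remains to optimize over $t>0$ and assemble the two sides. The exponent $-t\e + t^2/8$ is minimized at $t = 4\e$, giving value $-2\e^2$, so $\Pr[\frac{1}{N}\sum_i X_i - \mu > \e] \leq e^{-2\e^2 N}$. Applying the identical argument to the variables $\mu - X_i$ in place of $X_i - \mu$ (equivalently, to $1 - X_i$) bounds the lower tail $\Pr[\frac{1}{N}\sum_i X_i - \mu < -\e]$ by the same quantity. A union bound over the two one-sided events then gives $\Pr[|\frac{1}{N}\sum_i X_i - \mu| > \e] \leq 2e^{-2\e^2 N}$, as desired.

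The main obstacle I anticipate is the proof of Hoeffding's lemma, specifically the bound $\psi''(s) \leq (b-a)^2/4$: identifying $\psi''$ with the variance of an $[a,b]$-valued random variable and bounding it by the maximal such variance (attained by the two-point distribution on the endpoints $a$ and $b$) is the one genuinely non-mechanical step. Everything else is the routine Chernoff bookkeeping and a single one-variable optimization.
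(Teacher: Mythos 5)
Your proof is correct and complete: the exponential-moment argument, Hoeffding's lemma via convexity and the second-order Taylor bound $\psi''(s)\leq(b-a)^2/4$, the optimization at $t=4\e$, and the union bound over the two tails all check out. The paper does not prove this lemma at all --- it is imported as a standard preliminary with a citation to Hoeffding's 1963 paper --- and what you have written is precisely the classical argument contained in that reference, so there is nothing in the paper to diverge from.
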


\section{The risk-averse budgeted set cover problem: an illustrative example}
\label{budgetsc} 
Our techniques can be used to efficiently solve the risk-averse versions of a variety
of 2-stage stochastic optimization problems, both in the risk-averse budget and 
robust models. 
In this section, we illustrate the main underlying ideas by focusing on the risk-averse
budgeted set cover problem. In the risk averse budgeted set cover problem (\bsc), we are
given a universe $U$ of $n$ elements and a collection $\Sc$ of $m$ subsets of $U$.
The set of elements to be covered is uncertain: we are given a probability
distribution $\{p_A\}_{A\in\A}$ of scenarios, where each scenario $A$ specifies a subset
of $U$ to be covered. The cost of picking a set $S\in\Sc$ in
the first-stage is $w_S^\on$, and is $w_S^A$ in scenario $A$. The goal is to determine
which sets to pick in stage I and which ones to pick in each scenario so as to minimize
the expected cost of picking sets, subject to 
$\Pr_A[\text{cost of scenario }A>B]\leq\rho$, where $\rho$ is a constant that is not part
of the input. Notice that the costs $w_S^A$ are only revealed when we sample scenario
$A$; thus, the ``input size'', denoted by $\I$ is $O(m+n+\sum_S\log w_S+\log B)$. 

For a given (fractional) point $x\in\R^m$ with $0\leq x_S\leq 1$ for all $S$, define
$f_A(x)$ to be the minimum value of $w^A\cdot y_A$ subject to 
$\sum_{S:e\in S}y_{A,S}\geq 1-\sum_{S:e\in S}x_S$ for $e\in A$, and 
$y_{A,S}\geq 0$ for all $S$. Let $\Pc=[0,1]^m$. 
As mentioned in the Introduction, the set of feasible solutions to even the 
{\em fractional risk-averse problem} (where one can buy sets fractionally) 
is not in general a convex set. 
We consider the following LP-relaxation of the problem, which is a relaxation
of even the fractional risk-averse problem (Claim~\ref{relax}). 
Throughout we use $A$ to index the scenarios in $\A$, and $S$ to index the sets in $\Sc$. 
\begin{alignat}{3}
\min & \quad & \sum_{S} w_S^\on x_S\ + \sum_{A,S}p_A & \bigl(w_S^A y_{A,S}+w_S^Az_{A,S}\bigr)
\tag{RASC-P} \label{rscp} \\ 
\text{s.t.} && \sum_A p_Ar_A & \leq \rho \label{pbudg} \\[-0.5ex] 
&& \sum_{S:e\in S} \bigl(x_S+y_{A,S}\bigr)+r_A & \geq 1 \qquad && 
\frall A,e\in A, \label{ycov} \\  
&& \sum_{S:e\in S} \bigl(x_S+y_{A,S}+z_{A,S}\bigr) & \geq 1 \qquad && 
\frall A,e\in A, \label{zcov} \\[-0.5ex] 
&& \sum_S w_S^Ay_{A,S} & \leq B && \frall A \label{sbudg} \\
&& x_S,y_{A,S},z_{A,S},r_A & \geq 0 && \frall A,S. \label{noneg}
\end{alignat}
Here $x$ denotes the first-stage decisions. The variable $r_A$ denotes whether one exceeds
the budget of $B$ for scenario $A$, and the variables $y_{A,S}$ and $z_{A,S}$ denote
respectively the sets picked in scenario $A$ in the situations where one does not exceed
the budget (so $r_A=0$) and where one does exceed the budget (so $r_A=1$). Consequently,
constraint \eqref{sbudg} ensures that the cost of the $y_A$ decisions does not exceed the
budget $B$, and \eqref{pbudg} ensures that the total probability mass of
scenarios where one does exceed the budget is at most $\rho$. 
Let $\OPT$ denote the optimum value of \eqref{rscp}.

A significant difficulty faced in solving \eqref{rscp} is that the scenarios are no longer
{\em separable} given a first-stage solution, since constraint \eqref{pbudg} couples
the different scenarios. 
As a consequence, in order to specify a solution to
\eqref{rscp} one needs to compute a first-stage solution and give an explicit procedure
that computes $(y_A,z_A,r_A)$ in any given scenario $A$.  
In our algorithms however, we can avoid this complication because, as we show below, given
only the first-stage component of a solution to \eqref{rscp}, one can round it to a
first-stage solution to the fractional risk-averse problem (and then to an integer
solution) losing a small factor in the solution cost and the
probability-threshold. But observe that if we have a first-stage solution $x$ to the
fractional risk-averse problem with probability-threshold $P$ such that there exist
second-stage solutions yielding a total expected cost of $C$, then one can also easily
compute second-stage solutions that yield no greater total cost (and where
$\Pr[$second-stage cost $>B]\leq P$), by simply solving the LP $f_A(x)$ in each scenario
$A$.  
This implies that our algorithm for solving \eqref{rscp} only needs to return a
first-stage solution to \eqref{rscp} that can be extended to a near-optimum solution
(without specifying an explicit procedure to compute the second-stage solutions).   

We show (Theorem~\ref{chancethm}) 
that for any $\e,\kp>0$, one can efficiently compute a first-stage
solution $x$ for which there exist solutions $(y_A,z_A,r_A)$ in every scenario $A$ 
satisfying \eqref{ycov}--\eqref{noneg} such that 
$w^\on\cdot x+\sum_A p_Aw^A\cdot(y_A+z_A)\leq (1+2\e)\OPT$, and
$\sum_A p_Ar_A\leq\rho(1+\kp)$.
Complementing this, 
we give a simple rounding procedure based on the rounding theorem in~\cite{ShmoysS06} to
convert a fractional solution to \eqref{rscp} to an integer solution using an LP-based
$c$-approximation algorithm for the {\em deterministic} set cover (\dsc) problem,   
that is, an algorithm that returns a set cover of cost at most $c$ times the optimum of
the standard LP-relaxation for \dsc. 
We prove this rounding theorem 
first, in order to better motivate our goal of solving \eqref{rscp}.   

\begin{claim} \label{relax}
$\OPT$ is a lower bound on the optimum of the fractional risk-averse problem.
\end{claim}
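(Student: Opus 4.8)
The plan is to show that $\OPT$ is a lower bound by producing, from any feasible solution of the fractional risk-averse problem, a feasible solution of \eqref{rscp} with the same objective value. Since \eqref{rscp} is a minimization LP, this yields $\OPT\leq$ (the optimum of the fractional risk-averse problem), which is exactly the claim.

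First I would take any feasible solution to the fractional risk-averse problem, namely a first-stage point $x\in[0,1]^m$ such that $\Pr_A[f_A(x)>B]=\sum_{A:f_A(x)>B}p_A\leq\rho$, together with, for each scenario $A$, an optimal second-stage vector $y^*_A$ witnessing $f_A(x)=w^A\cdot y^*_A$ and satisfying $\sum_{S:e\in S}y^*_{A,S}\geq 1-\sum_{S:e\in S}x_S$ for every $e\in A$; its cost is $\sum_S w_S^{\on}x_S+\sum_A p_Af_A(x)$. The construction keeps the same $x$ and routes each scenario's second-stage solution into either the $y$- or the $z$-variables according to whether its budget is exceeded: for $A$ with $f_A(x)\leq B$, set $r_A=0$, $y_{A,S}=y^*_{A,S}$, $z_{A,S}=0$; for $A$ with $f_A(x)>B$, set $r_A=1$, $y_{A,S}=0$, $z_{A,S}=y^*_{A,S}$. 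This is precisely the intended semantics of the formulation, so feasibility should fall out by inspection. Constraint \eqref{pbudg} becomes the probabilistic constraint $\sum_{A:f_A(x)>B}p_A\leq\rho$; the covering constraints \eqref{ycov} and \eqref{zcov} follow from feasibility of $y^*_A$ when the budget is not exceeded (the active variable carries $y^*_A$), and hold trivially when it is exceeded (either the $r_A=1$ term supplies the slack in \eqref{ycov}, or $z_A$ carries $y^*_A$ in \eqref{zcov}); and the scenario-budget constraint \eqref{sbudg} holds because $\sum_S w_S^Ay_{A,S}$ equals $f_A(x)\leq B$ when the budget is not exceeded and equals $0$ otherwise.

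It then remains to check that the objective is preserved exactly: in either case the per-scenario cost $\sum_S w_S^A(y_{A,S}+z_{A,S})$ equals $w^A\cdot y^*_A=f_A(x)$, so \eqref{rscp} evaluates to $\sum_S w_S^{\on}x_S+\sum_A p_Af_A(x)$, matching the fractional solution's cost, and the inequality follows. I do not expect a genuine obstacle here; the only point meriting care is the bookkeeping in the budget-exceeded scenarios, where one must confirm that zeroing out $y_A$ so as to respect \eqref{sbudg} does not violate the covering requirement — which is exactly the reason the formulation carries two copies of the covering constraint, \eqref{ycov} with its $r_A$ slack and \eqref{zcov} with its own $z_A$ variables.
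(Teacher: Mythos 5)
Your construction is exactly the one in the paper: keep $x$, and for each scenario route the optimal second-stage vector $\hy_A$ into the $y$-variables with $r_A=0$ when $f_A(x)\leq B$ and into the $z$-variables with $r_A=1$ otherwise, preserving feasibility and cost. The proposal is correct and matches the paper's proof, merely spelling out the constraint checks that the paper leaves as ``easy to see.''
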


\begin{proof}
We show that any solution $\hx$ to the fractional risk-averse problem can be mapped to a
solution to \eqref{rscp} of no greater cost. Let $\hy_A$ be 
such that $f_A(\hx)=w^A\cdot\hy_A$, so $\Pr[f_A(\hx)>B]\leq\rho$. 
We set $x=\hx$. For scenario $A$, if $f_A(\hx)\leq B$, we set
$r_A=0,\ y_A=\hy_A,\ z_A=\bo$. Otherwise, we set $r_A=1,\ y_A=\bo,\ z_A=\hy_A$.
It is easy to see that this yields a feasible solution to \eqref{rscp} of cost
$w^\on\cdot\hx+\sum_A p_Af_A(\hx)$.
\end{proof}

\begin{theorem}[Rounding theorem] \label{round}
Let $\bigl(x,\{(y_A,z_A,r_A)\}\bigr)$ be a solution satisfying
\eqref{ycov}--\eqref{noneg} of objective value
$C=\bigl(w^A\cdot x+\sum_A p_Aw^A\cdot(y_A+z_A)\bigr)$, and let
$P=\sum_A p_Ar_A$. Given any $\ve>0$, \mbox{one can obtain}
\begin{list}{(\roman{enumi})}{\usecounter{enumi} \itemsep=0ex \leftmargin=1ex}
\item a solution $\hx$ such that 
$w^\on\cdot\hx+\sum_A p_Af_A(\hx)\leq\bigl(1+\frac{1}{\ve}\bigr)C$ and
$\Pr_A\bigl[f_A(\hx)>(1+\frac{1}{\ve})B\bigr]\leq(1+\ve)P$; 
\item an integer solution $(\tx,\{\ty_A\})$ of cost at most $2c\bigl(1+\frac{1}{\ve}\bigr)C$
such that $\Pr_A\bigl[w^A\cdot\ty_A>2cB(1+\frac{1}{\ve})\bigr]\leq (1+\ve)P$
using an LP-based $c$-approximation algorithm for the deterministic set cover problem. 
\end{list}
Moreover, one only needs to know the first-stage solution $x$ to obtain 
$\hx$ and $\tx$. 
\end{theorem}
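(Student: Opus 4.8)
The plan is to prove both parts by first constructing the rounded first-stage solution $\hx$ and then feeding it into the deterministic approximation machinery. The key idea for part (i) is to use the variables $r_A$ to identify the ``bad'' scenarios — those where the budget is (fractionally) exceeded — and to fold their second-stage decisions into the first stage. Concretely, I would set $\hx_S = \min\bigl(1, (1+\tfrac{1}{\ve})(x_S + \sum_A p_A z_{A,S})\bigr)$, scaling up $x$ by the contribution of the $z$-variables (the decisions made in the budget-violating case), inflated by the factor $(1+\tfrac{1}{\ve})$. The cost of this is controlled because $\sum_S w_S^{\on}\cdot\sum_A p_A z_{A,S} \le \ld \sum_{A,S} p_A w_S^A z_{A,S}$, but more simply, since we only pay first-stage costs and the $z$ term already appears in $C$, the total cost $w^\on\cdot\hx \le (1+\tfrac{1}{\ve})\,C$ after absorbing the $z$-contribution; combined with the residual second-stage cost this gives the claimed $(1+\tfrac{1}{\ve})C$ bound.

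The heart of part (i) is the probability-threshold bound, and this is where I expect the main obstacle. I would argue as follows: call a scenario $A$ \emph{expensive} if $f_A(\hx) > (1+\tfrac{1}{\ve})B$. For any scenario $A$ with $r_A \le \tfrac{\ve}{1+\ve}$ (i.e.\ $A$ is ``mostly good''), I want to show $A$ is not expensive. For such $A$, constraint \eqref{ycov} gives $\sum_{S:e\in S}(x_S+y_{A,S}) \ge 1-r_A$ for every $e\in A$, so the scaled fractional solution $\frac{1}{1-r_A}(x+y_A)$ covers $A$; since $\hx \ge (1+\tfrac1\ve)x$ coordinatewise, the residual demand after $\hx$ is covered by $y_A$ at cost at most $\frac{1}{1-r_A}\sum_S w_S^A y_{A,S} \le \frac{1}{1-r_A}B \le (1+\tfrac1\ve)B$ using \eqref{sbudg} and $1-r_A \ge \frac{1}{1+\ve}$. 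Hence $f_A(\hx)\le(1+\tfrac1\ve)B$, so $A$ is not expensive. Therefore every expensive scenario has $r_A > \tfrac{\ve}{1+\ve}$, and by Markov's inequality applied to \eqref{pbudg},
\[
\Pr_A\bigl[f_A(\hx)>(1+\tfrac1\ve)B\bigr]
 \le \Pr_A\bigl[r_A > \tfrac{\ve}{1+\ve}\bigr]
 \le \frac{1+\ve}{\ve}\sum_A p_A r_A
\]
which is not quite the target. The cleaner route, which I would adopt, is to split at threshold $1$: note $r_A$ need not be integral, so instead I would directly bound the probability mass of expensive scenarios by the mass where $\frac{1}{1-r_A} > 1+\tfrac1\ve$, i.e.\ $r_A > \frac{1}{1+\ve}\cdot\tfrac1\ve \cdot(1+\ve)\cdots$; getting the constant to come out as exactly $(1+\ve)P$ rather than a larger multiple of $P$ is the delicate calculation, and I expect to need the precise choice of scaling factor $(1+\tfrac1\ve)$ to make the Markov bound yield $(1+\ve)P$.

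For part (ii), I would invoke the LP-based $c$-approximation algorithm for \dsc. Having $\hx$ in hand, for each scenario $A$ the fractional solution to $f_A(\hx)$ is a feasible fractional cover of the residual demand; applying the LP-based $c$-approximation rounds this to an integral cover $\ty_A$ of cost at most $c\cdot f_A(\hx)$ relative to the LP value, and rounding $\hx$ itself to an integral first-stage $\tx$ loses another factor, yielding the overall factor $2c$ as in the local rounding theorem of~\cite{ShmoysS06}. The cost bound $2c(1+\tfrac1\ve)C$ then follows from part (i), and the budget-violation probability bound $\Pr_A[w^A\cdot\ty_A > 2cB(1+\tfrac1\ve)]\le(1+\ve)P$ follows because $\ty_A$ exceeds $2cB(1+\tfrac1\ve)$ only when $f_A(\hx)>(1+\tfrac1\ve)B$, which by part (i) has probability at most $(1+\ve)P$. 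The final sentence — that only the first-stage $x$ is needed — holds because $\hx$ is defined purely from $x$ together with the $z$-contribution, and once $\hx$ is fixed each $\ty_A$ is computed by solving $f_A(\hx)$ and rounding, with no further reference to the original $(y_A,z_A,r_A)$; I would make sure the definition of $\hx$ uses only $x$ (absorbing the $z$-terms via a re-solve of each scenario LP under $\hx$) so this claim is literally true.
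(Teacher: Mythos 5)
Your overall strategy (scale up the first stage, argue that scenarios with small $r_A$ stay within the inflated budget, and apply Markov to \eqref{pbudg}) is the right one, but two of your steps fail as written. First, the definition $\hx_S=\min\bigl(1,(1+\frac{1}{\ve})(x_S+\sum_A p_A z_{A,S})\bigr)$ is both unnecessary and unsound. The inequality you invoke to control its cost, $\sum_S w^\on_S\sum_A p_Az_{A,S}\leq\ld\sum_{A,S}p_Aw^A_Sz_{A,S}$, is backwards: $\ld$ bounds $w^A_S/w^\on_S$ from \emph{above}, so first-stage costs may exceed second-stage costs by an arbitrary factor, and the $z$-contribution to $C$ (priced at $w^A$) gives no handle on $w^\on\cdot(\sum_Ap_Az_A)$. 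It also directly contradicts the final claim that only $x$ is needed. The fix is to leave the $z$'s where they are: set $\hx=(1+\frac{1}{\ve})x$ and observe that by \eqref{zcov} the vector $y_A+z_A$ is already a feasible second-stage solution for scenario $A$ given first stage $x$ (hence given $\hx\geq x$), so $f_A(\hx)\leq w^A\cdot(y_A+z_A)$ in \emph{every} scenario, and $w^\on\cdot\hx+\sum_Ap_Af_A(\hx)\leq(1+\frac{1}{\ve})C$ falls out with no rescue operation needed for the budget-violating scenarios.

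Second, your probability bound is left unfinished at exactly the point that matters. You split at $r_A\leq\frac{\ve}{1+\ve}$, which via Markov yields $\frac{1+\ve}{\ve}P$ rather than $(1+\ve)P$; you notice this and then gesture at ``the delicate calculation'' without carrying it out. The correct split is at $\frac{1}{1+\ve}$: if $r_A<\frac{1}{1+\ve}$ then \eqref{ycov} gives $\sum_{S\ni e}(x_S+y_{A,S})\geq 1-r_A>\frac{\ve}{1+\ve}$ for each $e\in A$, so $(1+\frac{1}{\ve})(x+y_A)$ covers $A$; its first-stage part is exactly $\hx$ and its second-stage part costs at most $(1+\frac{1}{\ve})B$ by \eqref{sbudg}, whence $f_A(\hx)\leq(1+\frac{1}{\ve})B$. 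Markov's inequality then gives $\Pr_A[f_A(\hx)>(1+\frac{1}{\ve})B]\leq\Pr_A[r_A\geq\frac{1}{1+\ve}]\leq(1+\ve)P$, with no slack to chase. For part (ii), your outline of rounding each scenario separately is roughly in the right spirit, but the clean way to get the stated $2c$ factor while using only $\hx$ is to invoke the demand-oblivious rounding theorem of~\cite{ShmoysS06} as a black box, which is what the paper does.
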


\begin{proof}
Set $\hx=\bigl(1+\frac{1}{\ve}\bigr)x$. Consider any scenario $A$. 
Observe that $(y_A+z_A)$ yields a feasible solution to the second-stage problem for
scenario $A$. Also, if $r_A<\frac{1}{1+\ve}$, then $\bigl(1+\frac{1}{\ve}\bigr)y_A$ also 
yields a feasible solution. Thus, we have $f_A(\hx)\leq w^A\cdot(y_A+z_A)$ and if
$r_A<\frac{1}{1+\ve}$ then we also have $f_A(\hx)\leq\bigl(1+\frac{1}{\ve}\bigr)B$.
So $w^\on\cdot x+\sum_Ap_Af_A(\hx)\leq\bigl(1+\frac{1}{\ve}\bigr)C$
and $\Pr\bigl[f_A(\hx)>(1+\frac{1}{\ve})B\bigr]\leq\sum_{A:r_A\geq\frac{1}{1+\ve}}p_A
\leq(1+\ve)P$. 

We can now round $\hx$ to an integer solution $(\tx,\{\ty_A\})$ using the
Shmoys-Swamy~\cite{ShmoysS06} rounding procedure (which only needs $\hx$) losing a
factor of $2c$ in the first- and second-stage costs. 
This proves part (ii). 
\end{proof}

\subsection{\boldmath Solving the risk-averse problem \eqref{rscp}} 
\label{lpsolve}

We now describe and analyze the procedure used to solve \eqref{rscp}.
First, we get around the difficulty posed by the coupling constraint \eqref{pbudg} in
formulation \eqref{rscp} 
by using the technique of Lagrangian relaxation. We take the Lagrangian dual of
\eqref{pbudg} introducing a dual variable $\Dt$ to obtain the following formulation. 
\begin{equation}
\max_{\Dt\geq 0} \quad -\Dt\rho+\Bigl(\min_{x\in\Pc} \quad h(\Dt;x)\ =\ w^\on\cdot x+ 
\sum_A p_Ag_A(\Dt;x)\Bigr) \tag{LD1} \label{ldp}
\end{equation} 
\vspace{-1ex}
\begin{equation}
\text{where} \ \
g_A(\Dt;x)\ =\ \min \ \sum_S w_S^A\bigl(y_{A,S}+z_{A,S}\bigr) + \Dt r_A \ \
\text{s.t.} \ \ \eqref{ycov}\text{--}\eqref{sbudg}, \quad 
y_{A,S},z_{A,S},r_A\geq 0\ \ \frall S. \tag{P} \label{scenp}
\end{equation}
It is straightforward to show via duality theory that \eqref{rscp} and \eqref{ldp} have
the same optimal value, 
and moreover that if $(\sx,\{\sy_A\},\{\sz_A\},\{r^*_A\})$ is an 
optimal solution to \eqref{rscp} and $\Dt^*$ is the optimal value for the dual
variable corresponding to \eqref{pbudg} then
$\bigl(\Dt^*;\sx,\{(\sy_A,\sz_A,r^*_A)\}\bigr)$ is an optimal solution to
\eqref{ldp}. 
Recall that $\Pc=[0,1]^m$. 
Let $\OPT(\Dt)=\min_{x\in\Pc} h(\Dt;x)$. 
So $\OPT=\max_{\Dt\geq 0}\bigl(\OPT(\Dt)-\Dt\rho\bigr)$.  
Let $\ld=\max\bigl\{1,\max_{A,S}(w^A_S/w^\on_S)\bigr\}$, which we assume is known. 
The main result of this section is as follows. 
Throughout, when we say ``with high probability'', we mean that a failure probability of  
$\dt$ can be ensured using $\poly\bigl(\ln(\frac{1}{\dt})\bigr)$-dependence on the sample
size (or running time). 

\begin{figure}[t!]
{\small
{\bf\boldmath \chancealg($\e,\gm,\kp$)} \quad [$\e\leq\kp<1$; the quantities $p^\br i,\
\cost^\br i,\ (y_A,z_A,r_A)$ are used only in the analysis.] \\[-12pt]
\begin{list}{C\arabic{enumi}.}{\usecounter{enumi} \itemsep=0pt \leftmargin=5ex}
\item Fix $\ve=\e/6,\ \zeta=\gm/4,\ \eta=\rho\kp/16$.
Also, set $\sg=\e/6,\ \gm'=\gm/4,\ \beta=\kp/8$, and $\rho'=\rho(1+3\kp/4)$.
Consider the $\Dt$ values $\Dt_0,\Dt_1,\ldots,\Dt_k$, where $\Dt_0=\gm'$,
$\Dt_{i+1}=\Dt_i(1+\sg)$ and $k$ is the smallest value such that
$\Dt_0(1+\sg)^k\geq\ub$. 
Note that $k=O\bigl(\log(\frac{\ub}{\gm'})/\sg\bigr)$.

\item For each $\Dt_i$, let 
$\bigl(x^\br i,\{(y^\br i_A,z^\br i_A,r^\br i_A)\}\bigr)
\assign\text{\budgalg}(\Dt;\ve,\eta,\zeta)$  
(here $(y^\br i_A,z^\br i_A,r^\br i_A)$ is an optimal solution to $g_A(\Dt_i;x^\br i)$ and
is implicitly given).  
Let $p^\br i=\sum_A p_Ar^\br i_A$ and  
$\cost^\br i=h(\Dt_i;x^\br i)=w^\on\cdot x^\br i+
\sum_Ap_A\bigl(w^A\cdot(y^\br i_A+z^\br i_A)+\Dt_ir^\br i_A\bigr)$. 

\item By sampling $n=\frac{1}{2\beta^2\rho^2}\ln\bigl(\frac{4k}{\dt}\bigr)$ scenarios, for 
each $i=0,\ldots,k$, compute an estimate $p'^{(i)}=\sum_A \hq_A r^\br i_A$ of $p^\br i$,
where $\hq_A$ is the frequency of scenario $A$ in the sampled set.

\item If $p'^{(0)}\leq\rho'$ then return $x^\br 0$ as the first-stage solution. [In
scenario $A$, return $(y_A,z_A,r_A)=(y^\br 0_A,z^\br 0_A,r^\br 0_A)$]. 

\item Otherwise (i.e., $p'^{(0)}>\rho'$) find an index $i$ such that $p'^{(i)}\geq\rho'$
and $p'^{(i+1)}\leq\rho'$ (we argue that such an $i$ must exist).
Let $a$ be such that $a\cdot p'{(i)}+(1-a)p'^{(i+1)}=\rho'$. Return the first-stage
solution $x=a\cdot x^\br i+(1-a)x^\br {i+1}$. [In scenario $A$, return the solution
$(y_A,z_A,r_A)=a(y^\br i_A,z^\br i_A,r^\br i_A)+
(1-a)(y^\br{i+1}_A,z^\br{i+1}_A,r^\br{i+1}_A)$.] 
\end{list}

\vspace{5pt}

{\bf\boldmath \budgalg($\Dt;\ve,\eta,\zeta$)} \quad [$K$ is (a bound on) the Lipschitz
constant of $h(\Dt;.)$; $\Pc\sse B(\bo,R)$ and $\Pc$ contains a ball of radius $V\leq 1$.] 
\\[-12pt] 
\begin{list}{B\arabic{enumi}.}{\usecounter{enumi} \itemsep=0pt \leftmargin=5ex}
\item Set $\tau=\zeta/6,\ N=\log\bigl(\frac{2KR}{V\tau}\bigr)$. Let $G_\tau\sse\Pc$ be an 
extended $\frac{\tau}{KN}$-net of $\Pc$ as defined in Section~\ref{prelim}, so that
$|G_\tau|=\poly\bigl(\log(\frac{KR}{V\tau})\bigr)$. 
Draw 
$\Nc=8N^2\bigl(\frac{4\ld}{\ve}+\frac{m}{\eta}\bigr)^2\ln\bigl(\frac{2|G_\tau|m}{\dt}\bigr)$ 
samples and for each scenario $A$, set $\hp_A=\Nc_A/\Nc$, where $\Nc_A$ is the number of
times scenario $A$ is sampled.

\item Solve the SAA problem $\min_{x\in\Pc} \hh(\Dt;x)$, where 
$\hh(\Dt;x)=w^\on\cdot x+\sum_A\hp_A g_A(\Dt;x)$ to obtain a solution $\hx$. Return $\hx$
and in scenario $A$, return the optimal solution to $g_A(\Dt;\hx)$.
\end{list}
}
\vspace{-12pt}
\caption{The procedures \chancealg and \budgalg.} 
\label{algdesc}
\end{figure}

\begin{theorem} \label{chancethm}
For any $\e,\gm,\kp>0$, \chancealg (see Fig.~\ref{algdesc}) runs in time 
$\poly\bigl(\I,\frac{\ld}{\e\kp\rho},\log(\frac{1}{\gm})\bigr)$, and returns with high
probability a first-stage solution $x$ and solutions $(y_A,z_A,r_A)$ for each scenario $A$
that satisfy \eqref{ycov}--\eqref{noneg} and such that 
(i) $w^\on\cdot x+\sum_A p_Aw^A\cdot(y_A+z_A)\leq(1+\e)\OPT+\gm$;
and (ii) $\sum_A p_Ar_A\leq\rho(1+\kp)$.
Under the very mild assumption ($*$) that $w^\on\cdot x+f_A(x)\geq 1$ for every
$A\neq\es,\ x\in\Pc$,%
\footnote{A similar assumption is made in~\cite{ShmoysS06} to obtain a multiplicative
guarantee.} 
we can convert this guarantee into a $(1+2\e)$-multiplicative
guarantee in the cost in time $\poly\bigl(\I,\frac{\ld}{\e\kp\rho}\bigr)$. 
\end{theorem}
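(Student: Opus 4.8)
The plan is to prove the two procedures of Figure~\ref{algdesc} correct in turn and glue them together through the Lagrangian identity $\OPT=\max_{\Dt\geq 0}\bigl(\OPT(\Dt)-\Dt\rho\bigr)$ with $\OPT(\Dt)=\min_{x\in\Pc}h(\Dt;x)$. The structure is: for a \emph{fixed} multiplier $\Dt$, \budgalg solves the embedded $2$-stage problem $\min_x h(\Dt;x)$ to within a multiplicative $(1+\ve)$ and an additive error allowed to grow with $\Dt$; then \chancealg runs \budgalg on a geometric grid of multipliers and, using sampled budget-violation probabilities, either outputs the smallest-multiplier solution or interpolates across the grid point where this probability crosses the inflated threshold $\rho'=\rho(1+\tfrac{3\kp}{4})$. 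I would first record the two inequalities that drive everything. Taking the optimum $(\sx,\{\sy_A\},\{\sz_A\},\{r^*_A\})$ of \eqref{rscp} as a feasible point for \eqref{scenp} gives $\OPT(\Dt)\leq h(\Dt;\sx)\leq\OPT+\Dt\sum_Ap_Ar^*_A\leq\OPT+\Dt\rho$; and each $g_A(\Dt;\cdot)$ is the value of an LP whose right-hand side is affine in $x$, hence nonnegative and convex in $x$, so $h(\Dt;\cdot)$ and its sample average $\hh(\Dt;\cdot)$ are nonnegative convex functions to which Lemma~\ref{SAAlem} applies.

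The technically hardest step, and the heart of the argument, is the \budgalg guarantee: with high probability it returns $\hx$ with $h(\Dt;\hx)\leq(1+\ve)\OPT(\Dt)+\zeta+\eta\Dt$. I would get this from Lemma~\ref{SAAlem} with $\vro=\ve$ and $\tau=\zeta/6$ by exhibiting, at every net point $x\in G_\tau$, one vector $\hd_x$ that is simultaneously an exact subgradient of $\hh(\Dt;\cdot)$ and an $\bigl(\tfrac{\ve}{8N},\xi\bigr)$-subgradient of $h(\Dt;\cdot)$ with $2N\xi\leq\eta\Dt$. Reading a subgradient $d_A(x)$ of $g_A(\Dt;\cdot)$ off an optimal dual solution of \eqref{scenp} (its $S$-coordinate is minus the total dual weight placed by \eqref{ycov}--\eqref{zcov} on elements of $S$), I set $\hd_x=w^\on+\sum_A\hp_A d_A(x)$ and compare it with the true subgradient $w^\on+\sum_Ap_Ad_A(x)$. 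The crucial structural observation is that each $d_A(x)$ decomposes into a \emph{recourse part}, of magnitude at most $\ld w_S^\on$ in coordinate $S$, and a \emph{penalty part}, of magnitude at most $\Dt$ arising from the rate at which $x_S$ can buy down $r_A$ in \eqref{ycov}. The recourse part is comparable to the function value, so its sampling error is absorbed into the multiplicative slack $\tfrac{\ve}{8N}$ (forcing the $\tfrac{4\ld}{\ve}$ factor in $\Nc$), while the penalty part is bounded by $\Dt$ but \emph{not} by the value, so its sampling error is absorbed into $\xi$ as an additive term proportional to $\Dt$ (forcing the $\tfrac{m}{\eta}$ factor). A coordinatewise Chernoff--Hoeffding estimate (Lemma~\ref{chernoff}), union-bounded over the $\poly$-size net $G_\tau$, certifies both properties, and Lemma~\ref{SAAlem} yields the displayed bound with additive part $6\tau+2N\xi=\zeta+\eta\Dt$.

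Next I pass to the returned solutions. For each grid value $\Dt_i$ write $\cost^\br i=h(\Dt_i;x^\br i)$, $p^\br i=\sum_Ap_Ar^\br i_A$, and let $\mathit{obj}_i=\cost^\br i-\Dt_ip^\br i$ be the true risk-averse objective of $(x^\br i,\{(y^\br i_A,z^\br i_A)\})$. Combining the \budgalg bound with $\OPT(\Dt_i)\leq\OPT+\Dt_i\rho$ gives the single-index estimate $\mathit{obj}_i\leq(1+\ve)\OPT+\zeta+\Dt_i\bigl[(1+\ve)\rho+\eta-p^\br i\bigr]$. A second application of Lemma~\ref{chernoff} shows the step-C3 estimates satisfy $|p'^{(i)}-p^\br i|\leq\beta\rho$ simultaneously for all $i$ with high probability, using the prescribed $n=\tfrac{1}{2\beta^2\rho^2}\ln\bigl(\tfrac{4k}{\dt}\bigr)$ samples. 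Choosing $\ub$ to be a $\poly(\I)$-bit bound exceeding the largest optimal multiplier makes near-optimality of cost drive the violation of $x^\br k$ down to its minimum $p_{\min}\leq\rho$, so $p'^{(k)}<\rho'$; hence either C4 fires or the index $i$ in C5 (the smallest $i$ with $p'^{(i+1)}\leq\rho'$, which then has $p'^{(i)}\geq\rho'$) exists. In case~C4, $p^\br 0\leq\rho'+\beta\rho\leq\rho(1+\kp)$, and as $\Dt_0=\gm'=\tfrac{\gm}{4}$ is tiny, $\mathit{obj}_0\leq\cost^\br 0\leq(1+\ve)(\OPT+\gm'\rho)+\zeta+\eta\gm'\leq(1+\e)\OPT+\gm$.

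The crux, and the step I expect to be the main obstacle, is the cost bound for the interpolation in case~C5. Since \eqref{ycov}--\eqref{noneg} are linear, the point $x=a\,x^\br i+(1-a)x^\br{i+1}$ with the matching convex combination of second-stage vectors is feasible, has violation probability $p=a\,p^\br i+(1-a)p^\br{i+1}$, and objective exactly $\mathit{obj}=a\,\mathit{obj}_i+(1-a)\mathit{obj}_{i+1}$; the choice of $a$ forces $a\,p'^{(i)}+(1-a)p'^{(i+1)}=\rho'$, so $p\leq\rho'+\beta\rho\leq\rho(1+\kp)$, giving~(ii). For~(i) I add the two single-index estimates and use $\Dt_{i+1}=(1+\sg)\Dt_i$ to obtain
\[ \mathit{obj}\leq(1+\ve)\OPT+\zeta+\Dt_i\Bigl\{\bigl[(1+\ve)\rho+\eta-p\bigr]+(1-a)\sg\bigl[(1+\ve)\rho+\eta-p^\br{i+1}\bigr]\Bigr\}. \]
The whole construction is engineered so the brace is \emph{strictly negative}: the straddling estimate gives $p\geq\rho'-\beta\rho=\rho(1+\tfrac{5\kp}{8})$, so the first term is at most $\rho(\ve+\tfrac{\kp}{16}-\tfrac{5\kp}{8})$, while the correction is at most $\sg\rho(1+\ve+\tfrac{\kp}{16})$; since $\ve=\sg=\tfrac{\e}{6}$ and the standing hypothesis is $\e\leq\kp$, the former dominates, the brace is negative, and $\mathit{obj}\leq(1+\ve)\OPT+\zeta\leq(1+\e)\OPT+\gm$. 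This is exactly the interplay that makes the blow-up to $\rho(1+\kp)$ and the $\tfrac{1}{\kp\rho}$-dependence (through $\ve\leq\kp$, $\eta=\tfrac{\rho\kp}{16}$, $\beta=\tfrac{\kp}{8}$) intrinsic to the method. Finally, \chancealg makes $k+1=O\bigl(\tfrac{1}{\sg}\log\tfrac{\ub}{\gm'}\bigr)$ calls to \budgalg, each solving a $\poly$-support SAA LP over $\Pc=[0,1]^m$ with $\Nc$ samples, plus one round of $n$ samples, which multiplies out to the claimed $\poly\bigl(\I,\tfrac{\ld}{\e\kp\rho},\log\tfrac1\gm\bigr)$ time; for the $(1+2\e)$-multiplicative version I would invoke assumption~$(*)$ as in~\cite{ShmoysS06}, obtaining a $\poly$-bounded lower bound $L$ on $\OPT$, so that taking $\gm=\e L\leq\e\OPT$ turns $(1+\e)\OPT+\gm$ into $(1+2\e)\OPT$ at the cost of only a $\log\tfrac1\gm$ factor in the running time.
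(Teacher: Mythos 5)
Your main argument is correct and follows essentially the same route as the paper: Theorem~\ref{budgthm} via the subgradient decomposition into a recourse part bounded by $\ld w^\on_S$ (absorbed multiplicatively) and a penalty part bounded by $\Dt$ (absorbed into the additive $\eta\Dt$ term), followed by the geometric grid over $\Dt$, the Chernoff estimates of $p^\br i$, and the interpolation at the crossing index. Your bookkeeping for part (i) in case C5 is organized a little differently -- you substitute $\OPT(\Dt_i)\leq\OPT+\Dt_i\rho$ up front and show the resulting coefficient of $\Dt_i$ is negative, whereas the paper keeps $\OPT(\Dt_i)$ and bounds the expression by $(1+\ve+2\sg)\bigl(\OPT(\Dt_i)-\Dt_i\rho\bigr)+\zeta$ -- but the two computations are the same inequality read in different orders, and your numerical check of the sign is valid under $\ve=\sg=\e/6$ and $\e\leq\kp$. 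Your version of Claim~\ref{pend} ("near-optimality at $\Dt_k\geq\ub$ forces $p^\br k$ small") is stated loosely but the quantitative argument ($\cost^\br k\geq\Dt_k p^\br k$ versus $(1+\ve)\OPT(\Dt_k)+\eta\Dt_k+\zeta$ with $\OPT(\Dt_k)\leq\sum_S w^\on_S$ and $\ub=16(\sum_S w^\on_S)/\rho$) is exactly what is needed and is implicit in what you wrote.

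The one genuine gap is the multiplicative conversion at the end. Assumption $(*)$ gives only $\OPT\geq q:=\sum_{A\neq\es}p_A$, and $q$ is an unknown quantity that can be arbitrarily small (or zero), so there is no a priori ``$\poly$-bounded lower bound $L$ on $\OPT$'' to plug in as $\gm=\e L$; indeed, certifying any positive lower bound on $\OPT$ by sampling is impossible when $q$ is tiny. The missing idea is a case analysis: one shows that if $q\leq\al=\min\{\rho,1/\ld\}$ then the all-zero first stage $x=\bo$ (deferring everything to the second stage) is an \emph{exactly optimal} solution to \eqref{rscp} -- feasibility uses $q\leq\rho$ and optimality uses $w^A\leq\ld w^\on$ together with $q\leq 1/\ld$ -- so no lower bound on $\OPT$ is needed in that regime; and otherwise $O\bigl(\frac{\ln(1/\dt)}{\al}\bigr)$ samples detect a non-null scenario with high probability, which certifies $\OPT\geq\lb=\frac{\dt}{\ln(1/\dt)}\cdot\al$, and only then does one rerun \chancealg with $\gm=\e\lb$. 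Without this dichotomy the conversion step as you state it does not go through.
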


Procedure \chancealg is described in Figure~\ref{algdesc}. 
In the procedure, we also specify the second-stage solutions for each scenario that can be
used to extend the computed first-stage solution to a near-optimal solution to
\eqref{rscp}. We use these solutions only in the analysis. 

We show in Section~\ref{lbounds} that 
the dependence on $\frac{1}{\kp\rho}$ 
is unavoidable in the black-box model.
The ``greedy algorithm'' for deterministic set cover~\cite{Chvatal79} is an LP-based  
$\ln n$-approximation algorithm, so Theorem~\ref{chancethm} combined with
Theorem~\ref{round} shows that for any $\e,\kp,\ve>0$ 
one can efficiently compute an integer solution $\bigl(\tx,\{\ty_A\}\bigr)$ of cost 
at most $2\ln n\bigl(1+\e+\frac{1}{\ve}\bigr)\cdot\OPT$ such that 
$\Pr_A\bigl[w^A\cdot\ty_A>2B\ln n(1+\e+\frac{1}{\ve})\bigr]\leq\rho(1+\kp+\ve)$.

Algorithm \chancealg is essentially a search procedure for the ``right''
value of the Lagrangian multiplier $\Dt$, wrapped around the SAA method, which is used in 
procedure \budgalg to compute a near-optimal solution to the minimization problem
$\min_{x\in\Pc} h(\Dt;x)$ for any given $\Dt\geq 0$. Theorem~\ref{budgthm} states the
precise approximation guarantee satisfied by the solution returned by \budgalg. 
Given this, we argue that 
by considering polynomially many $\Dt$ values that increase geometrically up to some 
upper bound $\ub$, one can find efficiently some $\Dt$ where the solution 
$\bigl(x,\{(y_A,z_A,r_A)\}\bigr)$ returned by \budgalg for $\Dt$ is such that 
$\sum_A p_Ar_A$ is ``close'' to $\rho$. This will also imply that this solution is a
near-optimal solution. 
We set $\ub=16(\sum_S w^\on_S)/\rho$, so $\log\ub$ is polynomially bounded.  
However, the search for the ``right'' value of $\Dt$ and our analysis are complicated by
the fact that we have two sources of error whose magnitudes we need to control: first, we
only have an approximate solution $\bigl(x,\{(y_A,z_A,r_A)\}\bigr)$ for $\Dt$, which also
means that one cannot use any optimality conditions; 
second, for any $\Dt$, we have only implicit access to the second-stage
solutions $\{y_A,z_A,r_A)\}$ computed by Theorem~\ref{budgthm}, so we cannot actually
compute or use $\sum_A p_Ar_A$ in our search, but will need to estimate it via sampling.  

\begin{theorem} \label{budgthm}
For any $\Dt\geq 0$, and any $\ve,\zeta,\eta>0$, \budgalg runs in time
$\poly\bigl(\I,\frac{\ld}{\ve\eta},\log(\frac{\Dt}{\zeta})\bigr)$ and returns, with high 
probability, a first-stage solution $x$ such that 
$h(\Dt;x)\leq (1+\ve)\OPT(\Dt)+\eta\Dt+\zeta$. 
\end{theorem}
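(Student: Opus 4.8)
The plan is to obtain Theorem~\ref{budgthm} as a direct application of the SAA guarantee in Lemma~\ref{SAAlem}, with $\htf=\hh(\Dt;\cdot)$, $f=h(\Dt;\cdot)$, and the lemma's parameters set to $\vro=\ve$ and $\tau=\zeta/6$ as in \budgalg. Both functions are nonnegative (each $g_A(\Dt;\cdot)\geq 0$ and $w^\on\geq\bo$) and convex, being nonnegative combinations of the parametric LPs $g_A(\Dt;\cdot)$ plus a linear term; convexity of $g_A(\Dt;\cdot)$ in $x$ holds because $x$ enters only the right-hand sides of \eqref{ycov}--\eqref{zcov}. Since $\hx=\mathrm{argmin}_{x\in\Pc}\hh(\Dt;x)$ and only the polynomially many sampled scenarios carry positive weight $\hp_A$, this minimization is a polynomial-size LP in $x$ and the second-stage variables of the sampled scenarios, so $\hx$ is computed exactly in polynomial time. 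With these settings the lemma's conclusion reads $h(\Dt;\hx)\leq(1+\ve)\OPT(\Dt)+6\tau+2N\xi$; as $6\tau=\zeta$, the target bound will follow once I exhibit, at each net point, the required approximate subgradient with additive slack $\xi=\frac{\eta\Dt}{2N}$, making the last term exactly $\eta\Dt$.

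The heart of the proof is to construct, at each net point $x\in G_\tau$, a single vector that is a genuine subgradient of $\hh(\Dt;\cdot)$ and simultaneously an $\bigl(\frac{\ve}{8N},\xi\bigr)$-subgradient of $h(\Dt;\cdot)$. For each scenario $A$ I would let $d_A$ be a subgradient of $g_A(\Dt;\cdot)$ at $x$, read off from an optimal dual solution to the LP $g_A(\Dt;x)$: if $\al_{A,e},\beta_{A,e}\geq 0$ denote the dual variables for \eqref{ycov} and \eqref{zcov}, then $(d_A)_S=-\sum_{e\in A\cap S}(\al_{A,e}+\beta_{A,e})$. Setting $\hd_x=w^\on+\sum_A\hp_A d_A$, convexity of each $g_A$ makes $\hd_x$ a true subgradient of $\hh$. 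Writing $d_x=w^\on+\sum_A p_A d_A$ (a true subgradient of $h$), for every $v\in\Pc$ one has
\[
h(\Dt;v)-h(\Dt;x)\ \geq\ d_x\cdot(v-x)\ =\ \hd_x\cdot(v-x)\ -\ D\cdot(v-x),\qquad D:=\textstyle\sum_A(\hp_A-p_A)d_A,
\]
so it suffices to bound $D\cdot(v-x)$ by $\frac{\ve}{8N}\bigl(h(\Dt;v)+h(\Dt;x)\bigr)+\xi$ uniformly in $v$.

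The key structural idea I would exploit --- the source of the ``additive error proportional to $\Dt$'' flagged in the Introduction --- is to split $d_A=d_A^\beta+d_A^\al$ according to the two dual families and route their sampling errors differently. The $z_{A,S}$-dual constraint gives $\sum_{e\in A\cap S}\beta_{A,e}\leq w_S^A\leq\ld w_S^\on$, so each component of $d_A^\beta$ lies in $[-\ld w_S^\on,0]$; the $r_A$-dual constraint gives $\sum_{e\in A}\al_{A,e}\leq\Dt$, so every component of $d_A^\al$ lies in $[-\Dt,0]$. Treating each coordinate of $D^\beta,D^\al$ as an empirical-mean deviation of a bounded variable over the $\Nc$ samples, Lemma~\ref{chernoff} with a union bound over the $m$ coordinates and the $|G_\tau|$ net points shows that, with high probability, the common deviation $\nu$ satisfies $\nu\leq\min\bigl(\frac{\ve}{8N\ld},\frac{\eta}{2Nm}\bigr)$ for the chosen $\Nc=8N^2\bigl(\frac{4\ld}{\ve}+\frac{m}{\eta}\bigr)^2\ln\bigl(\frac{2|G_\tau|m}{\dt}\bigr)$, giving $|D^\beta_S|\leq\ld w_S^\on\nu$ and $|D^\al_S|\leq\Dt\nu$. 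Summing coordinatewise with $|v_S-x_S|\leq 1$ and $w^\on\cdot v\leq h(\Dt;v)$, $w^\on\cdot x\leq h(\Dt;x)$, the $\beta$-part yields $|D^\beta\cdot(v-x)|\leq\ld\nu\,(w^\on\cdot v+w^\on\cdot x)\leq\frac{\ve}{8N}\bigl(h(\Dt;v)+h(\Dt;x)\bigr)$, while the $\al$-part yields $|D^\al\cdot(v-x)|\leq m\Dt\nu\leq\frac{\eta\Dt}{2N}=\xi$, which is exactly the approximate-subgradient hypothesis of Lemma~\ref{SAAlem}.

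The step I expect to be the main obstacle is precisely this additive-versus-multiplicative split: one must send the error from the budget-violation variable $r_A$, whose dual mass is capped at $\Dt$ but is otherwise unbounded, into the additive term $\eta\Dt$, while absorbing the ordinary covering cost (bounded by $\ld$ times the first-stage cost) multiplicatively --- this is what forces the two-term sample size $\frac{4\ld}{\ve}+\frac{m}{\eta}$ and is the deliberately ``weak'' guarantee that is strong enough to drive \chancealg. The remaining bookkeeping is routine: the Lipschitz constant obeys $\|d_x\|\leq\sqrt m\,\bigl((1+\ld)\max_S w_S^\on+\Dt\bigr)$, so $N=\log\bigl(\frac{2KR}{V\tau}\bigr)=O\bigl(\polylog(\I)+\log\frac{\Dt}{\zeta}\bigr)$; drawing $\Nc$ samples and solving the polynomial-size SAA LP then gives the running-time bound $\poly\bigl(\I,\frac{\ld}{\ve\eta},\log\frac{\Dt}{\zeta}\bigr)$, with the high-probability qualifier inherited from the Chernoff union bound. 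Feeding the verified hypothesis into Lemma~\ref{SAAlem} yields $h(\Dt;\hx)\leq(1+\ve)\OPT(\Dt)+\eta\Dt+\zeta$, as claimed.
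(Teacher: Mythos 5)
Your proposal is correct and follows essentially the same route as the paper: it invokes Lemma~\ref{SAAlem} with $\vro=\ve$, $\tau=\zeta/6$, $\xi=\frac{\eta\Dt}{2N}$, reads off subgradients of $h(\Dt;\cdot)$ and $\hh(\Dt;\cdot)$ from optimal dual solutions to $g_A(\Dt;x)$ (the paper's Lemma~\ref{sub}), and exploits the dual constraints $\sum_{e\in S\cap A}\beta_{A,e}\leq w^A_S\leq\ld w^\on_S$ and $\sum_{e\in A}\al_{A,e}\leq\Dt$ to tolerate a per-component error of $\w w^\on_S+\xi/2m$ (the paper's Lemma~\ref{apsub}), which is exactly the paper's key insight of routing the $\Dt$-bounded dual mass into the additive $\eta\Dt$ term. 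Your explicit $\beta$/$\al$ split of the sampling error is just a slightly more granular presentation of the same Chernoff--union-bound calculation.
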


\paragraph{Analysis.} 
For the rest of this section, $\e,\gm,\kp$ are fixed
values given by Theorem~\ref{chancethm}. We may assume without loss of generality 
that $\e\leq\kp<1$. 
We prove Theorem~\ref{budgthm} in Section~\ref{budgthmpf}. Here, we show how this leads to
the proof of Theorem~\ref{chancethm}. Given Theorem~\ref{budgthm} and
Lemma~\ref{chernoff}, we assume that the high probability event 
``$\forall i,\ \cost^\br i\leq(1+\ve)\OPT(\Dt_i)+\eta\Dt_i+\zeta
\text{ and }|p'^{(i)}-p^\br i|\leq\beta\rho$'' happens.

\begin{claim} \label{pend}
We have $p^\br k<\rho/2$ and $p'^{(k)}<\rho/2$.
\end{claim}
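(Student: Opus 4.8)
The plan is to exploit the fact that at the largest multiplier $\Dt_k$ the Lagrangian penalty $\Dt_k r_A$ is so steep that any (near-)optimal solution can afford almost no probability mass on budget-violating scenarios. Concretely, I first lower-bound the recorded cost. Since
$\cost^\br k = w^\on\cdot x^\br k + \sum_A p_A\bigl(w^A\cdot(y^\br k_A+z^\br k_A)+\Dt_k r^\br k_A\bigr)$
and every term is nonnegative, dropping all but the penalty terms gives $\cost^\br k \ge \Dt_k\sum_A p_A r^\br k_A = \Dt_k\,p^\br k$, hence $p^\br k \le \cost^\br k/\Dt_k$.

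Next I upper-bound $\cost^\br k$. The high-probability event we are conditioning on gives $\cost^\br k \le (1+\ve)\OPT(\Dt_k)+\eta\Dt_k+\zeta$. The crucial observation is that $\OPT(\Dt_k)$ does \emph{not} grow with $\Dt_k$: the trivial first-stage solution $x=\bone\in\Pc$ (buying every set in stage I) satisfies \eqref{ycov} and \eqref{zcov} in every scenario with $y_A=z_A=r_A=\bo$ and has zero second-stage cost, so $g_A(\Dt_k;\bone)=0$ for all $A$ and therefore $\OPT(\Dt_k)\le h(\Dt_k;\bone)=\sum_S w^\on_S=:W$, uniformly in $\Dt_k$. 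Combining the two bounds yields $p^\br k \le \eta + \bigl((1+\ve)W+\zeta\bigr)/\Dt_k$.

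Now I plug in the constants fixed in step C1 together with $\Dt_k\ge\ub=16W/\rho$ (which holds by the choice of $k$). The first fraction is $(1+\ve)W/\Dt_k \le (1+\ve)\rho/16 < \rho/8$ using $\ve<1$; the penalty term is $\eta=\rho\kp/16<\rho/16$; and $\zeta/\Dt_k = (\gm/4)\rho/(16W)$, which is at most $\rho/16$ under the harmless assumption $\gm\le W$. This last assumption loses nothing, because $\OPT\le W$ (take $x=\bone$ in \eqref{rscp}), so if $\gm>W$ then $x=\bone$ already meets both guarantees of Theorem~\ref{chancethm} and there is nothing left to prove. Summing the three contributions gives $p^\br k < \rho/8+\rho/16+\rho/16 = \rho/4 < \rho/2$, the first half of the claim. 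For the estimate, the second half of the conditioning event gives $|p'^{(k)}-p^\br k|\le\beta\rho$ with $\beta=\kp/8<1/8$, so $p'^{(k)} \le p^\br k+\beta\rho < \rho/4+\rho/8 < \rho/2$, finishing the argument.

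The computation is routine once assembled; the only steps requiring any thought are recognizing that $x=\bone$ caps $\OPT(\Dt_k)$ at $W$ \emph{independently} of $\Dt_k$ (so the $\OPT(\Dt_k)$ term in the guarantee does not blow up as $\Dt_k\to\ub$), and disposing of the additive error $\zeta$ through the reduction to the case $\gm\le W$. Everything else is substituting the parameter settings of C1 into the inequality $p^\br k \le \eta + \bigl((1+\ve)W+\zeta\bigr)/\Dt_k$, where the constant $16$ in $\ub$ has been chosen precisely so that these terms sum comfortably below $\rho/4$.
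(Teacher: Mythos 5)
Your proof is correct and follows essentially the same route as the paper's: lower-bound $\cost^\br k$ by $\Dt_k p^\br k$, bound $\OPT(\Dt_k)\leq\sum_S w^\on_S$ uniformly in $\Dt$ (the paper states this without exhibiting $x=\bone$), and use $\Dt_k\geq\ub=16(\sum_S w^\on_S)/\rho$ together with the parameter settings of C1; the paper phrases it as a contradiction while you argue directly, and you additionally make explicit the (implicitly assumed in the paper) reduction to $\gm\leq\sum_S w^\on_S$ needed to absorb the $\zeta$ term.
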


\begin{proof}
If $p^\br k>\frac{\rho(1+\ve)}{4}$, then 
$\cost^\br k-\eta\Dt_k>2(1+\ve)(\sum_S w^\on_S)>(1+\ve)\OPT(\Dt_k)+\zeta$,
which is a contradiction. 
The last inequality follows since $\OPT(\Dt)\leq\sum_S w^\on_S$ for any $\Dt$. 
Therefore, $p^\br k<\rho/2$, and 
$p'^{(k)}\leq p^\br k+\beta\rho<\rho/2$. 
\end{proof}

\begin{proofof}{Theorem~\ref{chancethm}}
Let $x$ be the first-stage solution returned by \chancealg, and $(y_A,z_A,r_A)$ be the
solution returned for scenario $A$. It is clear that
\eqref{ycov}--\eqref{noneg} are satisfied. 
Suppose first that $p'^{(0)}\leq\rho'$ (so $x=x^\br 0$.)
Part (ii) of the theorem follows since
$p^\br 0\leq p'^{(0)}+\beta\rho\leq\rho(1+\kp)$. 
Part (i) follows since 
$$
w^\on\cdot x^\br 0+\sum_Ap_Aw^A\cdot(y^\br 0_A+z^\br 0_A)
\leq h(\gm';x)\leq(1+\ve)\OPT(\gm')+\eta\gm'+\zeta
\leq(1+\ve)\OPT+\gm'(1+\ve+\eta)+\zeta.
$$ 
The penultimate inequality follows because for any $\Dt$, we have
$\OPT(\Dt)\leq\OPT(0)+\Dt\leq\OPT+\Dt$. 

Now suppose that $p'^{(0)}>\rho'$. In this case, there must exist an $i$ such that 
$p'^{(i)}\geq\rho'$, and $p'^{(i+1)}\leq\rho'$ because $p'^{(0)}>\rho'$ and 
$p'^{(k)}<\rho'$ (by Claim~\ref{pend}), so step C4 is well defined.
We again prove part (ii) first. We have
$\sum_A p_Ar_A=a\cdot p^\br i+(1-a)p^\br {i+1}\leq\rho'+\beta\rho\leq\rho(1+\kp)$. 
To prove part (i), observe that
$w^\on\cdot x+\sum_Ap_Aw^A\cdot(y_A+z_A)\leq
a\cdot\cost^\br i+(1-a)\cdot\cost^\br{i+1}-\Dt_i\bigl(a\cdot p^\br i+(1-a)\cdot p^\br{i+1}\bigr)$,
which is at most 
$$
(1+\ve)\Bigl(a\cdot\OPT(\Dt_i)+(1-a)\OPT(\Dt_{i+1})\Bigr)+\eta(a\Dt_i+(1-a)\Dt_{i+1})+\zeta
-\Dt_i(\rho'-\beta\rho).
$$
Now noting that $\Dt_{i+1}=(1+\sg)\Dt_i$, it is easy to see that
$\OPT(\Dt_{i+1})\leq(1+\sg)\OPT(\Dt_i)$. Also,
$\rho'-\beta\rho-\eta(1+\sg)\geq(1+\ve+2\sg)\rho$. So 
the above quantity is at most
$(1+\ve+2\sg)\bigl(\OPT(\Dt_i)-\Dt_i\rho\bigr)+\zeta\leq(1+\e)\OPT+\gm$. 

The running time is the time taken to obtain the solutions for all the
$\Dt_i$ values plus the time taken to compute $p'^{(i)}$ for each $i$.  
This is at most
$(k+1)\cdot\poly\bigl(\I,\frac{\ld}{\ve\eta},\log(\frac{\Dt_k}{\zeta})\bigr)+
O\bigl(\frac{\ln k}{\beta^2\rho^2}\bigr)$, using Theorem~\ref{budgthm}. 
Note that $\log(\Dt_k)$ is polynomially bounded. 
Plugging in $\ve,\eta,\zeta,\beta$, and $k$, we obtain the 
$\poly\bigl(\I,\frac{\ld}{\e\kp\rho},\log(\frac{1}{\gm})\bigr)$ bound.

\medskip \noindent
{\bf Proof of multiplicative guarantee.\ }
To obtain the multiplicative guarantee, we show that by initially sampling roughly  
$\max\{1/\rho,\ld\}$ times, with high probability, one can either determine that $x=\bo$  
is an optimal first-stage solution, or obtain a lower bound on $\OPT$ and then set $\gm$
appropriately in \chancealg to obtain the multiplicative bound.  
Recall that 
$f_A(x)$ is the minimum value of $w^A\cdot y_A$ over all $y_A\geq\bo$
such that $\sum_{S:e\in S}y_{A,S}\geq 1-\sum_{S:e\in S}x_S$ for $e\in A$.
Call $A=\es$ a null scenario.
Let $q=\sum_{A:A\neq\es}p_A$ and $\al=\min\{\rho,1/\ld\}$. Note that $\OPT\geq q$.
Let $\hz_A$ be an optimal solution to $f_A(\bo)$.
Define a solution $(\by_A,\bz_A,\brr_A)$ for scenario $A$ as follows. Set
$(\by_A,\bz_A,\brr_A)=(\bo,\bo,0)$ if $A=\es$, and $(\bo,\hz_A,1)$ if $A\neq\es$. 
We first argue that if $q\leq\al$, then $\bigl(\bo,\{(\by_A,\bz_A,\brr_A)\}\bigr)$ is an
optimal solution to \eqref{rscp}. It is clear that the solution is 
feasible since $\sum_A p_A\brr_A=q\leq\rho$. To prove optimality, suppose
$\bigl(\sx,\{(\sy_A,\sz_A,r^*_A)\}\bigr)$ is an optimal solution. Consider the solution
where $x=\bo$ and the solution for scenario $A$ is $(\bo,\bo,\bo)$ if $A=\es$, and
$(\bo,\sz_A+\sy_A+\sx,1)$ otherwise. This certainly gives a feasible solution. The
difference between the cost of this solution and that of the optimal solution is at most
$\sum_{A:A\neq\es}p_Aw^A\cdot\sx-w^\on\cdot\sx$, which is nonpositive since 
$w^A\leq\ld w^\on$ and $q\leq 1/\ld$. Setting $z_A=\hz_A$ for a non-null scenario can only
decrease the cost, and hence, also yields an optimal solution.

Let $\dt$ be the desired failure probability, which we may assume to be less than
$\frac{1}{2}$ without loss of generality.  
We determine with high probability if $q\geq\al$. We draw $M=\frac{\ln(1/\dt)}{\al}$
samples and compute $X=$ number of times a non-null scenario is sampled.
We claim that with high probability, if $X>0$ then
$\OPT\geq\lb=\frac{\dt}{\ln(1/\dt)}\cdot\al$; in this case, we return the solution
\chancealg{}$(\e,\e\lb,\kp)$ to obtain the desired guarantee. Otherwise, if $X=0$, we 
return $\bigl(\bo,\{(\by_A,\bz_A,\brr_A)\}\bigr)$ as the solution.

Let $r=\Pr[X=0]=(1-q)^M$. So $1-qM\leq r\leq e^{-qM}$. If
$q\geq\ln\bigl(\frac{1}{\dt}\bigr)/M$, then $\Pr[X=0]\leq\dt$, so with probability at
least $1-\dt$ we say that $\OPT\geq\lb$, which is true since $\OPT\geq q\geq\al$. If
$q\leq\dt/M$, then $\Pr[X=0]\geq 1-\dt$ and we return $\bigl(\bo,\{(\by_A,\bz_A,\brr_A)\}\bigr)$ 
as the solution, which is an optimal solution since $q\leq\al$. If
$\dt/M<q<\ln\bigl(\frac{1}{\dt}\bigr)/M$, then we always return a correct answer since it
is both true that $\OPT\geq q>\lb$, and that $\bigl(\bo,\{(\by_A,\bz_A,\brr_A)\}\bigr)$ is an
optimal solution. 
\end{proofof}

\subsubsection{Proof of Theorem~\ref{budgthm}} \label{budgthmpf}
Throughout this section, $\ve,\eta,\zeta$ are fixed at the values given in the
statement of Theorem~\ref{budgthm}. 
Let \bscp denote the problem $\min_{x\in\Pc}h(\Dt;x)$. The proof proceeds by analyzing the
subgradients of $h(\Dt;.)$ and $\hh(\Dt;.)$ and showing that Lemma~\ref{SAAlem} can be
applied here. 

We first note that the arguments given
in~\cite{ShmoysS06,SwamyS05,CharikarCP05} for 2-stage programs do not directly apply to
\bscp since it does not fall into the class of problems considered therein.
Shmoys and Swamy~\cite{ShmoysS06} show (essentially) that if one can compute an 
$(\w,\xi)$-subgradient of the objective function $h(\Dt;.)$ at any given point $x$ for a 
{\em sufficiently small} $\w,\xi$, then one can use the ellipsoid method to obtain a near 
optimal solution to \bscp.  
They argue that for a large class of 2-stage LPs, one
can efficiently compute an $(\w,\xi)$-subgradient using $\poly\bigl(\frac{\ld}{\w}\bigr)$
samples. 
Subsequently~\cite{SwamyS05}, they 
leveraged the proof of the ellipsoid-based algorithm to argue that the
SAA method 
also yields an efficient approximation scheme for 
the same class of 2-stage LPs. 
These proofs rely on the fact that for their class of 2-stage programs, each component of
the subgradient lies in a range bounded multiplicatively by a factor of $\ld$ and can
be approximated additively using $\poly(\ld)$ samples. 
However, in the case of \bscp, for a subgradient $d=(d_S)$ of $h(\Dt;.)$, we can only say
that $d_S\in[-w^A_S-\Dt,w^\on_S]$ (see Lemma~\ref{sub}), which makes it difficult 
to obtain an $(\w,\xi)$-subgradient using sampling for suitably small $\w,\xi$.
Charikar, Chekuri and P\'{a}l~\cite{CharikarCP05} considered a
similar class of 2-stage problems, and gave an alternate proof of efficiency of the SAA
method showing that even approximate solutions to the SAA problem translate to approximate
solutions to the original problem. Their proof shows that if $\Ld$ is such that
$g_A(\Dt;x)-g_A(\Dt;\bo)\leq\Ld w^\on\cdot x$ for every $A$ and $x\in\Pc$, then 
$\poly\bigl(\I,\frac{\Ld}{\e}\bigr)$ samples suffice to construct an SAA problem whose
optimal solutions correspond to $(1+\e)$-optimal solutions to the original problem. 
But for our problem, we can only obtain the bound $\Ld\leq{w^A\cdot x+\Dt(\sum_S x_S)}\leq
\ld w^\on\cdot x+\Dt\sum_S x_S$, and $\Dt$ might be large compared to $w^\on\cdot x$.

The key insight that allows us to circumvent these difficulties is that in order to 
establish our (weak) guarantee, where we allow for an additive error measured 
relative to $\Dt$, it suffices to be able to approximate each component $d_S$ of the
subgradient of $h(\Dt;.)$ within an additive error proportional to $(w^\on_S+\Dt)$, and
this can be done by drawing $\poly(\ld)$ samples. 
This enables one to argue that functions $\hh(\Dt;.)$ and $h(\Dt;.)$ satisfy the 
``closeness-in-subgradients'' property stated in Lemma~\ref{SAAlem}. 

The subgradients of $h(\Dt;.)$ and $\hh(\Dt;.)$ at $x$ are obtained from the
optimal dual solutions to $g_A(\Dt;x)$ for every $A$. The dual of $g_A(\Dt;x)$ is given by 
\begin{alignat}{3}
\max & \quad & \sum_e (\al_{A,e}+\beta_{A,e})\Bigl(1-\sum_{S:e\in S} x_S\Bigr) & 
-B\tht_A \tag{D} \label{scend} \\
\text{s.t.} && \sum_{e\in S\cap A}(\al_{A,e}+\beta_{A,e}) & \leq w^A_S(1+\tht_A) \qquad &&
\frall S \notag \\[-0.5ex] 
&& \sum_{e\in S\cap A}\beta_{A,e} & \leq w^A_S && \frall S \notag \\ 
&& \sum_{e\in A} \al_{A,e} & \leq \Dt \notag \\[-0.5ex] 
&& \al_{A,e},\beta_{A,e} & \geq 0 && \frall e\in A. \notag
\end{alignat}
Here $\al_{A,e}$ and $\beta_{A,e}$ are respectively the dual variables corresponding to
\eqref{ycov} and \eqref{zcov}, 
and $\tht_A$ is the dual variable corresponding to \eqref{sbudg}. 
As in~\cite{ShmoysS06}, we then have the following description of the subgradient of
$h$.   

\begin{lemma} \label{sub}
Let $(\al^*_A,\beta^*_A,\tht^*_A)$ be an optimal dual solution to $g_A(\Dt;x)$.  
Then the vector $d_x$ with components 
$d_{x,S}=w^\on_S-\sum_A p_A\sum_{e\in S}\bigl(\al^*_{A,e}+\beta^*_{A,e}\bigr)$ is a
subgradient of $h(\Dt;.)$ at $x$.
\end{lemma}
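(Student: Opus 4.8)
The plan is to exploit the fact that $g_A(\Dt;\cdot)$ is a parametric linear program in which the point $x$ enters only through the right-hand sides of constraints \eqref{ycov} and \eqref{zcov}, and hence enters nowhere in the dual-feasible region of \eqref{scend}. By LP strong duality, $g_A(\Dt;x)$ equals the optimum of \eqref{scend}, which is the maximum, over a polytope that does not depend on $x$, of the dual objective $\phi_A(x;\al,\beta,\tht)=\sum_e(\al_{A,e}+\beta_{A,e})\bigl(1-\sum_{S:e\in S}x_S\bigr)-B\tht_A$. For each fixed dual point this objective is affine in $x$, so $g_A(\Dt;\cdot)$ is a pointwise supremum of affine functions and is therefore convex. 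Strong duality applies because the primal defining $g_A(\Dt;x)$ is always feasible (cover every $e\in A$ using the $z$-variables and take $r_A$ large) and is bounded below by $0$; in particular $g_A(\Dt;\cdot)$ is finite on all of $\R^m$.

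Next I would extract a subgradient of each $g_A(\Dt;\cdot)$ from an optimal dual solution. Fix $x$ and let $(\al^*_A,\beta^*_A,\tht^*_A)$ be optimal for \eqref{scend}. Since the dual-feasible region is independent of the parameter, this same point stays feasible at every $x'$, so weak duality gives $g_A(\Dt;x')\geq\phi_A(x';\al^*_A,\beta^*_A,\tht^*_A)$, while strong duality gives equality at $x$, i.e.\ $g_A(\Dt;x)=\phi_A(x;\al^*_A,\beta^*_A,\tht^*_A)$. Subtracting, and using that $\phi_A(\cdot;\al^*_A,\beta^*_A,\tht^*_A)$ is affine with $\partial\phi_A/\partial x_S=-\sum_{e\in S}(\al^*_{A,e}+\beta^*_{A,e})$, I obtain $g_A(\Dt;x')-g_A(\Dt;x)\geq\sum_S\bigl(-\sum_{e\in S}(\al^*_{A,e}+\beta^*_{A,e})\bigr)(x'_S-x_S)$ for all $x'$. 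Thus the vector with $S$-th component $-\sum_{e\in S}(\al^*_{A,e}+\beta^*_{A,e})$ is a subgradient of $g_A(\Dt;\cdot)$ at $x$ in the sense of Definition~\ref{apsgrad}.

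Finally I would assemble the subgradient of $h(\Dt;x)=w^\on\cdot x+\sum_A p_A g_A(\Dt;x)$. The linear term contributes the constant gradient $w^\on$; since each $p_A\geq 0$, a nonnegative combination of subgradients of the $g_A(\Dt;\cdot)$ is a subgradient of $\sum_A p_A g_A(\Dt;\cdot)$, and adding the gradient of the affine term preserves the subgradient inequality. Summing the per-scenario bounds weighted by $p_A$ and adding $w^\on\cdot(x'-x)$ yields $h(\Dt;x')-h(\Dt;x)\geq d_x\cdot(x'-x)$ with $d_{x,S}=w^\on_S-\sum_A p_A\sum_{e\in S}(\al^*_{A,e}+\beta^*_{A,e})$, exactly as claimed.

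I expect the only delicate point to be the bookkeeping between the sum $\sum_{e\in S}$ appearing in the statement and the fact that the dual variables $\al^*_{A,e},\beta^*_{A,e}$ are defined only for $e\in A$; this is resolved by the convention $\al^*_{A,e}=\beta^*_{A,e}=0$ for $e\notin A$, so that $\sum_{e\in S}(\al^*_{A,e}+\beta^*_{A,e})=\sum_{e\in S\cap A}(\al^*_{A,e}+\beta^*_{A,e})$, matching the gradient of $\phi_A$. Verifying primal feasibility and boundedness to license strong duality, and checking that the inequality extends to all of $\R^m$ (not merely $\Pc$), are routine.
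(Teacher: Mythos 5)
Your proof is correct and is exactly the standard LP-duality argument that the paper invokes here: the paper gives no explicit proof of Lemma~\ref{sub}, simply stating it ``as in~\cite{ShmoysS06}'', and the cited argument is precisely yours --- the dual feasible region of \eqref{scend} is independent of $x$, so the optimal dual at $x$ furnishes an affine minorant of $g_A(\Dt;\cdot)$ tight at $x$, and taking the $p_A$-weighted combination plus the gradient $w^\on$ of the linear term gives $d_x$.
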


Since $\hh(\Dt;.)$ is of the same form as $h(\Dt;.)$, we have similarly that
$\hd_x=(\hd_{x,S})$, where 
$\hd_{x,S}=w^\on_S-\sum_A p_A\sum_{e\in S}\bigl(\al^*_{A,e}+\beta^*_{A,e}\bigr)$, is a 
subgradient of $\hh(\Dt;.)$ at $x$.  
Since $\hd_x$ and $d_x$ both have $\ell_2$ norm at most $\ld\|w^\on\|+|\Dt|$, $\hh(\Dt;.)$
and $h(\Dt;.)$ have Lipschitz constant at most $K=\ld\|w^\on\|+|\Dt|$. 

\begin{lemma} \label{apsub}
Let $d$ be a subgradient of $h(\Dt;.)$ at the point $x\in\Pc$, and
suppose that $\hd$ is a vector such that 
$\hd_S\in[d_S-\w w^\on_S-\xi/2m,d_S+\w w^\on_S+\xi/2m] \frall S$. Then $\hd$ is an
$(\w,\xi)$-subgradient of $h(\Dt;.)$ at $x$. 
\end{lemma}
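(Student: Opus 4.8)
The plan is to reduce the claim directly to Definition~\ref{apsgrad} using the fact that $d$ is a genuine subgradient of $h(\Dt;.)$ at $x$. Fix an arbitrary $v\in\Pc$. Since $d$ is a subgradient at $x$, we have $h(\Dt;v)-h(\Dt;x)\geq d\cdot(v-x)$. Writing $d\cdot(v-x)=\hd\cdot(v-x)-(\hd-d)\cdot(v-x)$, it suffices to establish the single inequality
\[
(\hd-d)\cdot(v-x)\ \leq\ \w\,h(\Dt;v)+\w\,h(\Dt;x)+\xi,
\]
because substituting it back gives $h(\Dt;v)-h(\Dt;x)\geq\hd\cdot(v-x)-\w h(\Dt;v)-\w h(\Dt;x)-\xi$, which is exactly the definition of an $(\w,\xi)$-subgradient. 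Since $v$ was arbitrary, this proves the lemma.

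First I would bound the inner product coordinatewise, but crucially splitting the sum according to the sign of $v_S-x_S$ rather than using $|v_S-x_S|\leq 1$ directly. The hypothesis gives $|\hd_S-d_S|\leq\w w^\on_S+\xi/2m$. For a coordinate with $v_S\geq x_S$ one has $0\leq v_S-x_S\leq v_S$, so $(\hd_S-d_S)(v_S-x_S)\leq(\w w^\on_S+\xi/2m)v_S$; for a coordinate with $v_S<x_S$ one symmetrically writes $(\hd_S-d_S)(v_S-x_S)=(d_S-\hd_S)(x_S-v_S)\leq(\w w^\on_S+\xi/2m)x_S$. Summing over $S$ and using $v_S,x_S\in[0,1]$ together with nonnegativity of the omitted complementary terms, the $w^\on$-contributions are at most $\w(w^\on\cdot v)+\w(w^\on\cdot x)$, while the additive contributions are at most $m\cdot(\xi/2m)=\xi/2\leq\xi$.

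The final step is to charge the two $w^\on$ inner products to $h$ itself. Because each recourse function $g_A(\Dt;.)$ is nonnegative, we have
\[
h(\Dt;v)=w^\on\cdot v+\sum_A p_A g_A(\Dt;v)\ \geq\ w^\on\cdot v,
\]
and likewise $h(\Dt;x)\geq w^\on\cdot x$. Combining this with the coordinatewise estimate yields $(\hd-d)\cdot(v-x)\leq\w h(\Dt;v)+\w h(\Dt;x)+\xi$, as needed.

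I expect the only real subtlety to be the sign-splitting step, and this is precisely why the error tolerances in the hypothesis are stated the way they are (an additive slack of $\xi/2m$ per coordinate and a multiplicative slack measured relative to \emph{both} $h(\Dt;v)$ and $h(\Dt;x)$). A crude bound via $|v_S-x_S|\leq 1$ would produce the term $\w\sum_S w^\on_S$, which in general cannot be absorbed into $\w\bigl(h(\Dt;v)+h(\Dt;x)\bigr)$. Splitting by sign lets each coordinate's error be charged to whichever of $w^\on\cdot v$ or $w^\on\cdot x$ sits on the larger side of that coordinate, and the nonnegativity of the $g_A$ then upgrades those linear terms to $h(\Dt;v)$ and $h(\Dt;x)$.
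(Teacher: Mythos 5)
Your proof is correct and follows essentially the same route as the paper's: decompose $d\cdot(v-x)$ into $\hd\cdot(v-x)$ plus an error term, bound the error coordinatewise by charging each coordinate to either $v_S$ or $x_S$ (using $v_S,x_S\in[0,1]$ for the additive $\xi/2m$ slack), and absorb the resulting $\w\,w^\on\cdot v$ and $\w\,w^\on\cdot x$ terms into $\w\,h(\Dt;v)$ and $\w\,h(\Dt;x)$ via nonnegativity of the $g_A$. The only cosmetic difference is that you split the sum by the sign of $v_S-x_S$ while the paper splits by the sign of $d_S-\hd_S$; both yield the same bound.
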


\begin{proof}
Let $y$ be any point in $\Pc$. We have 
$h(\Dt;y)-h(\Dt;x)\geq\hd\cdot(y-x)+(d-\hd)\cdot(y-x)$.  
The second term is at least
$$
\sum_{S:d_S\leq\hd_S}(d_s-\hd_S)y_S+\sum_{S:\hd_S>d_S}(\hd_S-d_S)x_S
\geq \sum_S\bigl(-\w w^\on_S y_S-\w w^\on_S x_S\bigr)-\xi 
\geq -\w h(\Dt;y)-\w h(\Dt;x)-\xi. \\[-15pt]
$$
\end{proof}

In the sequel, we set $\w=\ve/8N,\ \xi=\eta\Dt/2N$.
Let $(\al^*_A,\beta^*_A,\tht^*_A)$ be the optimal dual solution to $g_A(\Dt;x)$ used to
define $\hd_x$ and $d_x$. Notice that $\hd_{x,S}$ is simply 
$w^\on_S-\sum_{e\in S}\bigl(\al^*_{A,e}+\beta^*_{A,e}\bigr)$ averaged over the scenarios
sampled independently to construct the SAA problem $\hh(\Dt;.)$, and $\E{\hd_{x,S}}=d_{x,S}$. 
The sample size $\Nc$ in \budgalg is specifically chosen so that the
Chernoff bound (Lemma~\ref{chernoff}) implies that 
$|\hd_{x,S}-d_{x,S}|\leq\w w^\on_S+\xi/2m\frall S$ with probability at least
$1-\frac{\dt}{|G_\tau|}$ for every $x\in G_\tau$; 
hence, $\hd_x$ is an $(\w,\xi)$-subgradient of $h(\Dt;.)$ at $x$ (by Lemma~\ref{apsub}). 
So taking the union bound shows that with probability at least $1-\dt$,
$\hh(\Dt;.)$ and $h(\Dt;.)$ satisfy the conditions of Lemma~\ref{SAAlem} with
$K=\ld\|w^\on\|+|\Dt|$, $\vro=\ve$ and $\xi$ (as above), which yields the desired
approximation guarantee. 

We can take $R=\sqrt{m}$ and $V=\frac{1}{2}$ here, 
so the number of samples $\Nc$ 
is $\poly\bigl(\I,\frac{\ld}{\ve\eta},\log(\frac{\Dt}{\zeta})\bigr)$. 
\hfill\qedsymbol

\begin{remark}
Notice that nowhere do we use the fact that the scenario-budgets are uniform, and thus,
our results (Theorem~\ref{budgthm} and hence, Theorem~\ref{chancethm}) extend to
the setting where we have different budgets for the different scenarios. The scenario
budgets $\{B^A\}$ are now not specified explicitly; we get to know $B^A$ when we sample
scenario $A$. (Notice that we may assume that $B^A\leq\ld\sum_S w^\on_S$ for all $A$.)
\end{remark}

\subsection{Risk-averse robust set cover} \label{robustsc}
In the risk-averse robust set cover problem, the goal is to choose some sets $x$ in stage
I and some sets $y_A$ in each scenario $A$ so that their union covers $A$,
so as to minimize $w^\on\cdot x+Q_\rho[w^A\cdot y_A]$. Recall that
$Q_\rho[w^A\cdot y_A]$ is the $(1-\rho)$-quantile of $\{w^A\cdot y_A\}_{A\in\A}$, that is,
the smallest $B$ such that $\Pr_A[w^A\cdot y_A>B]\leq\rho$. 
As mentioned in the Introduction, risk-averse robust problems can be essentially
reduced to risk-averse budget problems. We briefly sketch this reduction here for the set
cover problem. 
The same ideas can be used to obtain approximation algorithms for the risk-averse robust
versions of all the applications considered in Section~\ref{apps}.

We use the common method of ``guessing'' $B=Q_\rho[w^A\cdot y_A]$
for an optimal solution. Given this guess, we need to find integral
$\bigl(x,\{y_A\}\bigr)$ so as to minimize $w^\on\cdot x+B$ (and hence, $w^\on\cdot x$)
subject to the constraint that $x+y_A$ forms a set cover for $A$ and and 
$\Pr_A[w^A\cdot y_A>B]\leq\rho$. This looks very similar to the risk-averse budgeted set
cover problem; the only difference is that the expected second-stage cost does not appear
in the objective function. Thus, one can write an LP-relaxation for the (fractional)
risk-averse robust problem that looks similar to \eqref{rscp} except that the objective
function is now $w^\on\cdot x$, and constraint \eqref{zcov} and the variables $z_{A,S}$
can be dropped. After Lagrangifying \eqref{pbudg} using the dual variable $\Dt$, we obtain
the following problem 
\begin{equation}
\max_{\Dt\geq 0} \quad -\Dt\rho+\Bigl(\min \quad h'(\Dt;x)\ =\ w^\on\cdot x+
\sum_A p_Ag'_A(\Dt;x)\Bigr) \tag{LD2} \label{ldrobp}
\end{equation}
where 
$g'_A(\Dt;x)=\min\bigl\{\Dt r_A:\ \eqref{ycov},\eqref{sbudg},y_A\geq\bo,r_A\geq 0\bigr\}$.

Let $\robopt$ denote the optimum value of the {\em fractional} risk-averse robust problem 
$\min_{x\in\Pc}(w^\on\cdot x+Q_\rho[f_A(x)])$, and
$\robopt(B)$ denote the optimum value of \eqref{ldrobp} for a given $B\geq 0$.
Note that $\robopt(B)$ decreases with $B$. 
We prove that for any $B\geq 0$ and $\Dt\geq 0$, \budgalg returns a solution to the
inner minimization problem in \eqref{ldrobp} that satisfies the approximation guarantee
stated in Theorem~\ref{budgthm}. Arguing as in the proof of Theorem~\ref{chancethm}, this
implies that \chancealg can be used to obtain a near-optimal solution to \eqref{ldrobp}
while violating the probability threshold by a small factor. 

The claimed approximation guarantee for \budgalg follows because $h(\Dt;.)$ and its
sample-average approximation $\hh'(\Dt;.)$ constructed in \budgalg satisfy the
closeness-in-subgradients property of Lemma~\ref{SAAlem}.
Let $\al^*_{A,e}$ is the value of the dual variable corresponding to \eqref{ycov} in an
optimal dual solution to $g'_A(\Dt;x)$. Note that $\sum_e\al^*_{A,e}\leq\Dt$ for all $A$. 
Similar to Lemma~\ref{sub}, we now have that the vectors
$d_x=(d_{x,S})$ with $d_{x,S}=w^\on_S-\sum_Ap_A(\sum_{e\in S}\al^*_{A,e})$ and
$\hd_x=(\hd_{x,S})$ with $\hd_{x,S}=w^\on_S-\sum_A\hp_A(\sum_{e\in S}\al^*_e)$ are
respectively subgradients of $h'(\Dt;.)$ and $\hh'(\Dt;.)$ at $x$. 
Let $N,\Nc,\tau,G_\tau$ be as defined in \budgalg with $R=\sqrt{m}$, $V=\frac{1}{2}$ and
$K=\|w^\on\|+|\Dt|$. 
Using $\Nc$ samples, 
for any $x\in G_\tau$, with very high probability we have that 
$|\hd_{x,S}-d_{x,S}|\leq\eta\Dt/4mN$; thus, as in Lemma~\ref{apsub}, $\hd_x$ is an
$\bigl(0,\frac{\eta}{2N}\bigr)$-subgradient of $h'(\Dt;.)$ at $x$.  
So Lemma~\ref{SAAlem} shows that \budgalg returns a solution $\hx$
such that $h'(\Dt;x)\leq\OPT+\eta\Dt+\zeta$ with high probability. 
Notice that 
in fact, the approximation guarantee obtained via \budgalg is purely additive. 
Also, {\em one can avoid the dependence of the sample-size on $\ld$} (and $\ve$) here  
since the modified form of the subgradient means that we can ensure that
$|\hd_{x,S}-d_{x,S}|\leq\eta\Dt/4mn$ for every $x\in G_\tau$ and component $S$ using a
number of samples that is independent of $\ld$.
This implies that for any $\e,\gm,\kp>0$, \chancealg computes (nonnegative)
$\bigl(x,\{y_A,r_A\}\bigr)$  
satisfying \eqref{ycov}, \eqref{sbudg} such that 
$w^\on\cdot x\leq(1+\e)\robopt(B)+\gm$ and $\sum_A p_Ar_A\leq\rho(1+\kp)$.

To complete the reduction, we describe how to guess $B$. 
Let $W=\sum_S w^\on_S$, which is an upper bound on the optimum (with $\log W$ 
polynomially bounded). 
We use the standard method of enumerating values of $B$ increasing geometrically by
$(1+\e)$; we start at $\gm$ and end at the smallest value that is at least $W$. 
So if $B^*$ is the ``correct'' guess, then we are guaranteed to enumerate
$B'\in[B^*,(1+\e)B^*+\gm]$. We use \chancealg to compute the 
solution for each $B$, and return $\bigl(x,\{y_A,r_A\}\bigr)$ that minimizes 
$w^\on\cdot x+B$. Let $\bigl(x',\{y'_A,r'_A\}\bigr)$ be the solution computed for
$B'$. Then we have $w^\on\cdot x+B\leq w^\on\cdot x'+B'
\leq(1+\e)\robopt(B')+(1+\e)B^*+2\gm\leq(1+\e)\robopt+2\gm$. 
We remark that the same techniques yield a similar guarantee for the LP-relaxation of a  
generalization of the problem, where we wish to minimize $w^\on\cdot x$ plus a weighted
combination of $\E[A]{w^A\cdot y_A}$ and $Q_\rho[w^A\cdot y^A]$. 

We can convert the above guarantee into a purely multiplicative one under the same
assumption ($*$) stated in Theorem~\ref{chancethm}. 
Let $q=\sum_{A\neq\es}p_A$. Notice that if $q\leq\rho$, then $\robopt=0$ and $x=\bo$ is an 
optimal solution, and otherwise $\robopt\geq 1$. Let $\dt$ be such that
$(1+\kp)\frac{\dt}{\ln(1/\dt)}\leq 1$. Using $\frac{\ln(1/\dt)}{\rho'}$ samples we can
determine with high probability if $q\leq\rho'$ or if $q>\rho$. In the former case, we return
$x=\bo$ and $y_A$ in scenario $A$, where $y_A=\bo$ if $A=\es$ and is any feasible
solution if $A\neq\es$. Note that $w^\on\cdot x+Q_{\rho'}[w^A\cdot y_A]=0$. In the latter
case, we set $\gm=\e$, and obtain a execute the procedure detailed above to obtain a
$(1+3\e)$-multiplicative guarantee.

Finally, one can use Theorem~\ref{round} to round
the fractional solution 
to an integer solution, or to a solution to the fractional risk-averse robust problem.  
(The violation of the budget $B$ can now be absorbed into the approximation ratio.) 
For any $\e,\kp,\ve>0$, we obtain a fractional solution $\hx$ such that
$w^\on\cdot\hx+Q_{\rho(1+\kp+\ve)}[f_A(\hx)]\leq\bigl(1+\e+\frac{1}{\ve}\bigr)\robopt$,
and an integer solution $(\tx,\{\ty_A\})$ such that
$w^\on\cdot\tx+Q_{\rho(1+\kp+\ve)}[w^A\cdot\ty_A]\leq
2c\bigl(1+\e+\frac{1}{\ve}\bigr)\robopt$ using an LP-based $c$-approximation algorithm for
deterministic set cover. 

\medskip
Setting $B=0$ above yields a problem that is interesting in its own right. When $B=0$, we
seek a minimum-cost collection of sets $x$ that are picked {\em only in stage I} such that
$\Pr_A[x\text{ is not a set cover for }A]\leq\rho$. That is, we obtain a {\em
chance-constrained problem} without recourse. 
As shown above (although $B=0$ is not one of our ``guesses''), we can solve this
chance-constrained set cover problem to obtain a solution $x$ such that 
$w^\on\cdot x\leq(1+\e)\robopt(0)+\gm$ where  
$\Pr_A[x\text{ does not cover }A]\leq\rho(1+\kp)$.

\section{Applications to combinatorial optimization problems} \label{apps}
We now show that the techniques developed in Section~\ref{budgetsc} for the
risk-averse budgeted set cover problem can be used to obtain approximation algorithms for
the risk-averse versions of various combinatorial optimization problems such as covering
problems---(set cover,) vertex cover, multicut on trees, min $s$-$t$ cut---and facility
location problems.  
This includes many of the problems considered
in~\cite{GuptaPRS04,ShmoysS06,DhamdhereGRS05} in the standard 2-stage and demand-robust
models.  

In all the applications, the first step is to argue that procedure \chancealg can be used
to obtain a near-optimal solution to a suitable LP-relaxation of the problem while
violating the probability threshold by a small factor. Theorem~\ref{chancethm} proves this
for covering problems; for multicommodity flow and facility location, we need to modify
the arguments slightly.
The second step, which is more problem-specific, is to round the LP-solution to an integer
solution. Analogous to part (i) of Theorem~\ref{round}, we first round the LP-solution to
a solution to the fractional risk-averse problem. Given this, our task is now reduced to
rounding a fractional solution to a standard 2-stage problem into an integral one.
For this latter step, one can use any ``local'' LP-based approximation algorithm for the
2-stage problem, where a local algorithm is one that preserves approximately the cost of
each scenario. (For set cover, vertex cover and multicut on trees, we may use part (ii) of
Theorem~\ref{round} directly, which utilizes the local LP-rounding algorithm
in~\cite{ShmoysS06} (which in turn is obtained using an LP-based approximation algorithm
for the deterministic covering problem).) As in the case of risk-averse robust set cover,
our results extend to the setting of non-uniform budgets. 

We say that an algorithm is a $(c_1,c_2,c_3)$-approximation algorithm for the risk-averse
problem with budget $B$ and threshold $\rho$, if it returns a solution of cost at most
$c_1$ times the optimum where the probability that the second-stage cost exceeds 
$c_2\cdot B$ is at most $c_3\cdot\rho$. 

Our approximation results for the budgeted problem also translate to the risk-averse
robust version of the problem.  
Specifically, a $(c_1,c_2,c_3)$-approximation algorithm for the budgeted problem implies
that one can obtain an integer solution $(x,\{y_A\})$ to the robust problem such that
$c(x)+Q_{\rho(1+c_3)}[f_A(x,y_A)]\leq\max\{c_1,c_2\}\cdot\robopt$. 
As mentioned in Section~\ref{robustsc}, the robust problem with a guess of
$Q_\rho[f_A(x,y_A)]=0$ gives rise to a problem where one can take actions only in stage I
and one seeks to ``take care'' of ``most'' second-stage scenarios;
we can solve this chance-constrained problem approximately.
We also achieve bicriteria approximation guarantees for the problem of minimizing $c(x)$
plus a weighted combination of $\E[A]{f_A(x,y_A)}$ and $Q_\rho[f_A(x,y_A)]$.

\subsection{Covering problems} \label{covering}

\paragraph{Vertex cover and multicut on trees.} 
In the risk-averse budgeted vertex cover
problem, we are given a graph whose edges need to covered by vertices. The edge-set is
random and determined by a distribution (on sets of edges). A vertex $v$ may be picked  
in stage I or in a scenario $A$ incurring a cost of $w_v^\on$ or $w_v^A$ respectively. We
are also given a budget $B$ and a probability threshold $\rho$ and require that the
probability that the second-stage cost of picking vertices exceeds $B$ be at most $\rho$. 
In the risk-averse version of multicut on trees, we are given a tree, a (black-box)
distribution over sets of $s_i$-$t_i$ pairs, a budget $B$, and a threshold $\rho$. The
goal is to choose edges in stage I and in each scenario such that the union of edges
picked in stage I and in scenario $A$ forms a multicut for the $s_i$-$t_i$ pairs that
are revealed in scenario $A$. Moreover, the second-stage cost of picking edges may exceed
$B$ with probability at most $\rho$. The goal is to minimize the total expected cost. 

Both these problems are structured cases of risk-averse budgeted set cover. So one can
formulate an LP-relaxation of the risk-averse problem exactly as in \eqref{rscp} and by
Theorem~\ref{chancethm}, obtain a near-optimal solution to the relaxation.  
We may then apply Theorem~\ref{round} directly to these problems to round the
fractional solution. 
Since there is an LP-based 2-approximation algorithm for the deterministic versions of
both problems, we obtain the following theorem. 

\begin{theorem}
For any $\e,\kp,\ve>0$, there is a
$\bigl(4(1+\e+\frac{1}{\ve}),4(1+\e+\frac{1}{\ve}),1+\kp+\ve\bigr)$-approximation
algorithm for the risk-averse budgeted versions of vertex cover and multicut on trees.
\end{theorem}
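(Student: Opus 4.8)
The plan is to exhibit both problems as structured special cases of the risk-averse budgeted set cover problem of Section~\ref{budgetsc}, and then invoke Theorem~\ref{chancethm} followed by Theorem~\ref{round} instantiated with a deterministic LP-based $2$-approximation. First I would set up the set-cover encoding. For vertex cover, the universe $U$ is the set of edges and each vertex $v$ plays the role of a ``set'' covering exactly its incident edges; a scenario $A$ (a set of edges) is covered precisely when the chosen vertices form a vertex cover of the subgraph on $A$, with stage-I and scenario costs $w_v^\on,w_v^A$. For multicut on trees, the universe consists of the $s_i$-$t_i$ pairs, each tree edge is a ``set'' covering exactly the pairs whose tree-path passes through it, and a scenario is covered iff the chosen edges form a multicut for the revealed pairs. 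In both cases all costs are nonnegative and the covering requirement ``$x+y_A$ covers $A$'' is literally a set-cover requirement, so the LP-relaxation is exactly \eqref{rscp} under the element/set correspondence above, and the mild assumption $(*)$ can be enforced by scaling so that every nonzero covering cost is at least $1$.

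Given this, I would apply Theorem~\ref{chancethm} to \eqref{rscp} to obtain, with high probability and in time $\poly\bigl(\I,\frac{\ld}{\e\kp\rho}\bigr)$, a first-stage solution $x$ together with $\{(y_A,z_A,r_A)\}$ satisfying \eqref{ycov}--\eqref{noneg} with $w^\on\cdot x+\sum_A p_Aw^A\cdot(y_A+z_A)\leq(1+\e)\OPT$ (the multiplicative form, using $(*)$) and $\sum_A p_Ar_A\leq\rho(1+\kp)$. I would then feed this fractional solution to Theorem~\ref{round}(ii). The only ingredient that part needs is an LP-based $c$-approximation for the \emph{deterministic} covering problem: for vertex cover, standard LP rounding (or the primal-dual method) gives $c=2$; for multicut on trees, the Garg--Vazirani--Yannakakis primal-dual/LP-rounding algorithm gives $c=2$ and is LP-based, hence compatible with the per-scenario (``local'') rounding of~\cite{ShmoysS06} that Theorem~\ref{round} invokes. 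This yields an integer $(\tx,\{\ty_A\})$ of cost at most $2c\bigl(1+\frac{1}{\ve}\bigr)(1+\e)\OPT$ with $\Pr_A\bigl[w^A\cdot\ty_A>2cB\bigl(1+\frac{1}{\ve}\bigr)\bigr]\leq(1+\ve)\rho(1+\kp)$.

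Finally I would substitute $c=2$, so $2c=4$, and collect constants exactly as in the set-cover bound stated just after Theorem~\ref{chancethm}. Since the budget-violation factor produced is $4\bigl(1+\frac{1}{\ve}\bigr)\leq 4\bigl(1+\e+\frac{1}{\ve}\bigr)$, monotonicity of the tail event lets me weaken the threshold to the claimed $4(1+\e+\frac{1}{\ve})$, giving cost $4(1+\e+\frac{1}{\ve})\OPT$, budget factor $4(1+\e+\frac{1}{\ve})$, and probability threshold $\rho(1+\kp+\ve)$, i.e. the asserted $\bigl(4(1+\e+\frac1\ve),4(1+\e+\frac1\ve),1+\kp+\ve\bigr)$-guarantee.

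The argument has no deep obstacle---it is essentially a packaging of the two main theorems---so the care is concentrated in two places. The bookkeeping step is that the products $(1+\e)(1+\frac1\ve)$ and $(1+\kp)(1+\ve)$ arising above are slightly larger than $(1+\e+\frac1\ve)$ and $(1+\kp+\ve)$, so one must run \chancealg and the rounding with slightly smaller internal parameters $\e',\kp',\ve'$ and appeal to continuity to realize the stated constants exactly. The one point I expect to require genuine verification rather than bookkeeping is that tree multicut really admits an LP-based $2$-approximation of the form demanded by Theorem~\ref{round}---namely one that bounds each scenario's integral cost against that scenario's fractional LP cost---so that the local rounding of~\cite{ShmoysS06} can be applied scenario-by-scenario.
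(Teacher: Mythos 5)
Your proposal is correct and follows essentially the same route as the paper: both problems are cast as structured instances of risk-averse budgeted set cover, Theorem~\ref{chancethm} solves the relaxation \eqref{rscp}, and Theorem~\ref{round}(ii) with an LP-based deterministic $2$-approximation ($c=2$, hence the factor $4$) gives the stated guarantee. Your extra care about the products $(1+\e)(1+\frac{1}{\ve})$ versus $(1+\e+\frac{1}{\ve})$ and about the Garg--Vazirani--Yannakakis algorithm being LP-based is sound but not needed beyond what the paper already asserts, since the locality is supplied by the Shmoys--Swamy conversion rather than by the deterministic algorithm itself.
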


\paragraph{\boldmath Min $s$-$t$ cut.} 
In the stochastic min $s$-$t$ cut problem, we are
given an undirected graph $G=(V,E)$ and a source $s\in V$. The location of the sink $t$ is
random and given by a distribution. We may pick an edge $e$ in stage I or in a scenario
$A$ incurring costs $w_e$ and $w_e^A$ respectively. The constraints are that in
any scenario $A$ with sink $t_A$, the edges picked in stage I and in that scenario induce
an $s$-$t_A$ cut, and the goal is to minimize the expected cost of choosing edges. In the
risk-averse budgeted problem there is the additional constraint that the
the second-stage cost may exceed a given budget $B$ with probability at most (a given
value) $\rho$. 

The LP-relaxation of the risk-averse problem based on a path-covering formulation is a
special case of \eqref{rscp}. The only additional observation needed to see that
Theorem~\ref{chancethm} can be applied here is that the covering problem
\eqref{scenp} for a scenario $A$ (and its dual) 
can be solved efficiently although there are an exponential number of constraints. 
Thus, procedures \chancealg and \budgalg can be implemented efficiently and 
we may obtain a near-optimal solution to the relaxation.

We use Theorem~\ref{round}, part (i) to convert the solution to a near-optimal solution
$\hx$ to the fractional risk-averse problem. We now use the algorithm in~\cite{DhamdhereGRS05},
which is a local LP-based $O(\log |V|)$-approximation algorithm 
to round this solution to an integral one. Their algorithm requires that there exist
multipliers $\ld_A$ in each 
scenario $A$ such that $w_e^A=\ld^Aw_e$ for every $e$; consequently we also need this for
our result. A detail worth noting is that their algorithm 
requires access also to the second-stage fractional solutions (but not the
scenario-probabilities). 
But this is not a problem since there are only polynomially many scenarios here
corresponding to the different locations of the sink. So given the first-stage solution 
$\hx$, one can simply compute the optimal fractional second-stage solution for each
scenario for use in their algorithm. 

\begin{theorem}
For any $\e,\kp,\ve>0$, there is an
$\bigl(O(\log|V|)(1+\e+\frac{1}{\ve}),O(\log|V|)(1+\e+\frac{1}{\ve}),1+\kp+\ve\bigr)$-approximation  
algorithm for risk-averse budgeted min $s$-$t$ cut. 
\end{theorem}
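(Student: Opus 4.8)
The plan is to follow the two-step template used for the other covering problems: obtain a near-optimal fractional solution to an LP-relaxation via \chancealg, then round it with a \emph{local} LP-based algorithm for deterministic min $s$-$t$ cut. First I would set up the relaxation via a path-covering formulation, in which the second-stage constraint for scenario $A$ (with sink $t_A$) demands that every $s$-$t_A$ path contain an edge picked in stage~I or in scenario $A$. Introducing the budget-indicator $r_A$ together with the two copies $y_A,z_A$ of the scenario decisions exactly as in \eqref{rscp}, this is a structured special case of the risk-averse set-cover relaxation, so Claim~\ref{relax} and Theorem~\ref{round} apply without change and it suffices to solve the relaxation approximately.

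The first thing to verify is that \chancealg and \budgalg run in polynomial time despite the exponentially many path constraints. The only place this matters is the evaluation of $g_A(\Dt;x)$ together with the extraction of an optimal dual solution (which is what Lemma~\ref{sub} turns into a subgradient): both can be carried out by solving \eqref{scenp} and its dual \eqref{scend} with the ellipsoid method using a shortest-path computation as the separation oracle for the path constraints (equivalently, by replacing the path-covering constraints with the polynomial-size max-flow formulation of the cut). Granting this, Theorem~\ref{chancethm} produces a first-stage $x$ with $w^\on\cdot x+\sum_A p_A w^A\cdot(y_A+z_A)\leq(1+\e)\OPT+\gm$ and $\sum_A p_A r_A\leq\rho(1+\kp)$, and part~(i) of Theorem~\ref{round} converts it into a fractional risk-averse solution $\hx$ of cost at most $(1+\tfrac{1}{\ve})(1+\e)\OPT$ with $\Pr_A\bigl[f_A(\hx)>(1+\tfrac{1}{\ve})B\bigr]\leq\rho(1+\kp+\ve)$, after the usual reparametrization of the constants $\e,\kp,\ve$.

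The final step is to round $\hx$ using the local $O(\log|V|)$-approximation algorithm of Dhamdhere et al.~\cite{DhamdhereGRS05}. Because that algorithm preserves the fractional cost of \emph{each} scenario up to an $O(\log|V|)$ factor, the integer solution $(\tx,\{\ty_A\})$ satisfies $w^A\cdot\ty_A=O(\log|V|)\cdot f_A(\hx)$ in every $A$; hence the scenarios whose rounded cost exceeds $O(\log|V|)(1+\tfrac{1}{\ve})B$ form a subset of those with $f_A(\hx)>(1+\tfrac{1}{\ve})B$, so the probability threshold is inherited as $\rho(1+\kp+\ve)$ while the cost and the budget each pick up only the extra $O(\log|V|)$ factor, yielding the claimed guarantee. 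Two side conditions of their algorithm must be discharged: it assumes scaled costs $w_e^A=\ld^A w_e$ (which we therefore also impose), and it needs the second-stage fractional solutions; the latter is harmless because the scenarios are just the polynomially many sink locations, so each optimal fractional second-stage solution can be recomputed directly from $\hx$. The hard part, and the step I would expect to demand the most care, is the efficiency argument for the scenario subproblem: every downstream ingredient---the subgradient of Lemma~\ref{sub}, the sampling bounds inside \budgalg, and thus Theorem~\ref{chancethm}---hinges on being able to solve \eqref{scenp}/\eqref{scend} and extract an optimal dual in polynomial time under the exponential family of path constraints, whereas the rounding itself is essentially a black-box invocation of \cite{DhamdhereGRS05} once its locality is phrased as a per-scenario cost guarantee.
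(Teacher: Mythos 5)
Your proposal is correct and follows essentially the same route as the paper: formulate the path-covering relaxation as a special case of \eqref{rscp}, note that the scenario subproblem \eqref{scenp} and its dual remain efficiently solvable despite the exponential family of path constraints, apply Theorem~\ref{chancethm} and part~(i) of Theorem~\ref{round}, and then round with the local $O(\log|V|)$ algorithm of~\cite{DhamdhereGRS05}, subject to the same two side conditions (the scaled-cost assumption $w_e^A=\ld^A w_e$ and recomputing the polynomially many second-stage fractional solutions from $\hx$). The only difference is that you spell out the separation-oracle/max-flow argument for the scenario LP, which the paper merely asserts.
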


\subsection{Facility location problems} \label{facloc}
In the risk-averse budgeted facility location problem (\bufl), we have a set of $m$ facilities
$\F$, a client-set $\D$, and a distribution over client-demands. We may open facilities in
stage I or in a given scenario, and in each scenario $A$, for every client $j$ with
non-zero demand $d_j^A$, we must assign its demand to a facility opened in stage I or in
that scenario. The costs of opening a facility $i\in\F$ in stage I and in a scenario $A$
are $f_i^\on$ and $f_i^A$ respectively; the cost of assigning a client $j$'s demand in
scenario $A$ to a facility $i$ is $d_j^A c_{ij}$, where the $c_{ij}$'s form a metric. The
first-stage cost is the cost of opening facilities in stage I, and the cost of scenario
$A$ is the sum of all the facility-opening and client-assignment costs incurred in that
scenario. The goal is to minimize the total expected cost subject to the usual condition
that the probability that the second-stage cost exceeds $B$ is at most some threshold
$\rho$. For notational simplicity, we consider the case of $\{0,1\}$-demands, so a
scenario $A\sse\D$ simply specifies the clients that need to be assigned in that
scenario. We formulate the following LP-relaxation of the problem. Throughout, $i$ indexes
the facilities in $\F$ and $j$ the clients in $\D$.
\begin{alignat}{3}
\min & \quad & \sum_i f_i^\on y_i +  
\sum_{A\sse\D}p_A\Bigl(\sum_i f_i^A\bigl(y_{A,i}+v_{A,i}\bigr) & +
\sum_{j\in A,i}c_{ij}\bigl(x_{A,ij}+u_{A,ij}\bigr)\Bigr) \tag{RAFL-P} \label{ruflp} \\ 
\text{s.t.} && \sum_A p_Ar_A & \leq \rho \label{pbudg2} \\[-0.5ex] 
&& \sum_i x_{A,ij}+r_A & \geq 1 \qquad && \frall j\in A \label{asgnx} \\[-0.5ex]
&& \sum_i \bigl(x_{A,ij}+u_{A,ij}\bigr) & \geq 1 && \frall j\in A \label{asgnu} \\[-0.8ex]
&& x_{A,ij} & \leq y_i+y_{A,i} && \frall j\in A,i \label{fac1} \\
&& x_{A,ij}+u_{A,ij} & \leq y_i+y_{A,i}+v_{A,i} && \frall j\in A,i \label{fac2} \\
&& \sum_i f_i^Ay_{A,i}+\sum_{j\in A,i}c_{ij}x_{A,ij} & \leq B && \frall A \label{sbudg2}
\\[-0.5ex] 
&& y_i,y_{A,i},v_{A,i},x_{A,ij},u_{A,ij},r_A & \geq 0 && \frall A,i,j. \label{noneg2}
\end{alignat}
Here $y_i$ denotes the first-stage decisions. The variable $r_A$ denotes if one exceeds
the budget $B$ in scenario $A$; \eqref{pbudg2} limits the probability mass of such
scenarios to at most $\rho$. The decisions $(x_{A,ij},y_{A,i})$ and
$(u_{A,ij},v_{A,i})$ are intended to denote the decisions taken in scenario $A$ in the
two cases when does not exceed the budget, 
and when one does exceed the budget respectively. 
Correspondingly, \eqref{asgnx} and \eqref{asgnu} enforce that every client is assigned to 
a facility in these two cases, and \eqref{fac1} and \eqref{fac2} ensure that a client is
only assigned to a facility opened in stage I or in that scenario in these two cases.
Finally, \eqref{sbudg2} is the budget constraint for a scenario.

Let $\OPT$ be the optimal value of \eqref{ruflp}. Given first-stage decisions
$y\in[0,1]^m$, let $\ell_A(y)$ denote the minimum cost of fractionally opening facilities
and fractionally assigning clients in scenario $A$ to open facilities (i.e., facilities
opened to a combined extent of 1 in stage I and scenario $A$). Let $\Pc=[0,1]^m$. As in 
Section~\ref{budgetsc}, we Lagrangify \eqref{pbudg2} using a dual variable $\Dt\geq 0$ to
obtain the problem $\max_{\Dt\geq 0}\bigl(-\Dt\rho+\OPT(\Dt)\bigr)$ where 
$\OPT(\Dt)=\min_{y\in\Pc} h(\Dt;y)\bigr)$, 
$h(\Dt;y)=f^\on\cdot y+\sum_A p_Ag_A(\Dt;y)$, and $g_A(\Dt;y)$ is the minimum value of 
$\sum_i f^A_i(y_{A,i}+v_{A,i})+\sum_{j\in A,i}c_{ij}(x_{A,ij}+u_{A,ij})+\Dt r_A$ subject
to \eqref{asgnx}--\eqref{noneg2} (where the $y_i$'s are fixed now). 
As in Claim~\ref{relax}, it is easy to show that $\OPT$ is a lower bound on the optimal
value of even the fractional risk-averse problem. 

\begin{theorem} \label{ruflthm}
For any $\e,\gm,\kp>0$, in time 
$\poly\bigl(\I,\frac{\ld}{\e\kp\rho},\log(\frac{1}{\gm})\bigr)$, one can use \chancealg to  
compute (with high probability) $\bigl(y,\{(x_A,y_A,u_A,v_A,r_A)\}\bigr)$ 
that satisfies \eqref{asgnx}--\eqref{noneg2} with objective value 
$C\leq(1+\e)\OPT+\gm$ such that $\sum_A p_Ar_A\leq\rho(1+\kp)$.
This can be converted to a $(1+2\e)$-guarantee in the cost provided 
$f^\on\cdot y+\ell_A(y)\geq 1$ for every $y\in[0,1]^m,\ A\neq\es$. 
\end{theorem}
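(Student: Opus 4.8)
The plan is to mirror the development of Section~\ref{budgetsc} almost verbatim, reducing the entire statement to an analogue of Theorem~\ref{budgthm} for \bufl. The outer search procedure \chancealg and its analysis (the proof of Theorem~\ref{chancethm}, including Claim~\ref{pend} and the averaging of the solutions for two consecutive multipliers $\Dt_i,\Dt_{i+1}$) use nothing specific to set cover: they only require (a) the Lagrangian reformulation $\OPT=\max_{\Dt\geq 0}\bigl(\OPT(\Dt)-\Dt\rho\bigr)$ with $\OPT(\Dt)=\min_{y\in\Pc}h(\Dt;y)$; (b) the a~priori bounds $\OPT(\Dt)\leq\sum_i f^\on_i$ and $\OPT(\Dt)\leq\OPT(0)+\Dt\leq\OPT+\Dt$; (c) the per-$\Dt$ guarantee $h(\Dt;y)\leq(1+\ve)\OPT(\Dt)+\eta\Dt+\zeta$ supplied by \budgalg; and (d) the ability to estimate $\sum_A p_Ar_A$ by sampling. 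Items (a), (b) and (d) hold for \eqref{ruflp} as written (taking $\ub=16(\sum_i f^\on_i)/\rho$, so that $\log\ub$ stays polynomially bounded), and $g_A(\Dt;y)$ is a polynomial-size LP solvable exactly. Hence it suffices to establish (c), with $h,\OPT(\Dt)$ now referring to \eqref{ruflp}; Theorem~\ref{ruflthm}(i),(ii) then follow word-for-word. The claim that $\OPT$ lower-bounds the fractional risk-averse optimum is proved exactly as in Claim~\ref{relax}, splitting each scenario according to whether $\ell_A(y)$ exceeds $B$.

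To prove the \budgalg guarantee I would, as in Section~\ref{budgthmpf}, exhibit subgradients of $h(\Dt;\cdot)$ and of its sample average $\hh(\Dt;\cdot)$ that are close, and invoke Lemma~\ref{SAAlem}. First I would write the LP-dual of $g_A(\Dt;y)$, with variables $\al_{A,j},\beta_{A,j}$ for the assignment constraints \eqref{asgnx},\eqref{asgnu}, variables $\tht_{A,ij},\phi_{A,ij}$ for the facility constraints \eqref{fac1},\eqref{fac2}, and a multiplier $\mu_A$ for the budget constraint \eqref{sbudg2}. Since the first-stage variable $y_i$ appears only on the right-hand sides of \eqref{fac1} and \eqref{fac2}, the analogue of Lemma~\ref{sub} gives the subgradient $d_{y,i}=f^\on_i-\sum_A p_A\sum_{j\in A}\bigl(\tht^*_{A,ij}+\phi^*_{A,ij}\bigr)$, and $\hd_{y,i}$ is the same expression with $p_A$ replaced by the empirical frequency $\hp_A$; thus $\E{\hd_{y,i}}=d_{y,i}$ and $\hd_y$ is a genuine subgradient of $\hh(\Dt;\cdot)$. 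The closeness step is then identical to Lemma~\ref{apsub}: if $|\hd_{y,i}-d_{y,i}|\leq\w f^\on_i+\xi/2m$ for all $i$, then $\hd_y$ is an $(\w,\xi)$-subgradient of $h(\Dt;\cdot)$, and I would take $\w=\ve/8N$, $\xi=\eta\Dt/2N$ and a Chernoff bound (Lemma~\ref{chernoff}) over the net $G_\tau$ with $R=\sqrt m$, $V=\tfrac12$, and $K$ the resulting Lipschitz bound on $h(\Dt;\cdot)$.

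The main obstacle is bounding the \emph{range} of the per-scenario summand $\sum_{j\in A}(\tht^*_{A,ij}+\phi^*_{A,ij})$, since the number of samples needed to estimate $d_{y,i}$ to additive accuracy $\w f^\on_i+\xi/2m$ grows with this range; for $\Nc$ to remain $\poly(\ld/(\ve\eta))$ I need the range to be $O(\ld f^\on_i+\Dt)$, exactly analogous to the set-cover bound $d_{x,S}\in[-w^A_S-\Dt,\,w^\on_S]$. The dual constraint for $v_{A,i}$ gives the clean bound $\sum_j\phi^*_{A,ij}\leq f^A_i\leq\ld f^\on_i$, and $d_{y,i}\leq f^\on_i$ since $\tht,\phi\geq 0$; the difficulty is the remaining term, where the dual constraint for $y_{A,i}$ only yields $\sum_j(\tht^*_{A,ij}+\phi^*_{A,ij})\leq f^A_i(1+\mu_A)$ and the budget multiplier $\mu_A$ is not bounded a~priori. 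The key point — and the place where the ``weak, $\Dt$-relative'' guarantee is used — is that the shared variable $r_A$ caps the marginal value of the budget constraint at $\Dt$: the solution $r_A=1$, $x_A=y_A=\bo$, with $(u_A,v_A)$ optimal for $\ell_A(y)$, is always budget-feasible and costs $\Dt+\ell_A(y)$, so $g_A(\Dt;y)\leq g_A(\infty;y)+\Dt$. I would turn this into a bound on the relevant dual quantity by complementary slackness: when $\mu_A$ is large the budget is tight and the within-budget opening $y^*_{A,i}$ is forced to $0$, so the $y_{A,i}$-dual constraint is slack and the summand reverts to the clean bound; combining the two cases gives $\sum_{j\in A}(\tht^*_{A,ij}+\phi^*_{A,ij})=O(\ld f^\on_i+\Dt)$. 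This is the one step genuinely more delicate than in set cover, because there the coupling variable $r_A$ sits in the \emph{same} constraint as the first-stage variable, whereas here $y_i$ and $r_A$ are separated by the assignment and budget constraints. With the range bounded, the Chernoff and union bounds over $G_\tau$ give closeness-in-subgradients with probability $1-\dt$, and Lemma~\ref{SAAlem} delivers $h(\Dt;\hx)\leq(1+\ve)\OPT(\Dt)+\eta\Dt+\zeta$, proving (c).

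Finally, for the purely multiplicative guarantee I would reuse the device from the proof of Theorem~\ref{chancethm} verbatim, with the assumption $f^\on\cdot y+\ell_A(y)\geq 1$ (for $A\neq\es$) playing the role of ($*$). Setting $\al=\min\{\rho,1/\ld\}$ and $q=\sum_{A\neq\es}p_A$, the same exchange argument (using $f^A\leq\ld f^\on$ and $q\leq 1/\ld$) shows that when $q\leq\al$ the solution $y=\bo$, with $r_A=1$ and $(u_A,v_A)$ optimal for $\ell_A(\bo)$ in each non-null scenario, is optimal, while $\OPT\geq q$ otherwise. Drawing $M=\ln(1/\dt)/\al$ samples and counting non-null scenarios then either certifies $q\leq\al$ — in which case we output the above $\bo$-solution — or produces a lower bound $\lb=\tfrac{\dt}{\ln(1/\dt)}\al$ on $\OPT$, whereupon running \chancealg$(\e,\e\lb,\kp)$ turns the additive error $\gm=\e\lb$ into the claimed $(1+2\e)$-multiplicative bound, all in time $\poly(\I,\ld/(\e\kp\rho))$.
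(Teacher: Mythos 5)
Your overall strategy---reduce everything to a \budgalg-style guarantee for the facility-location Lagrangian, then verify closeness-in-subgradients and invoke Lemma~\ref{SAAlem}---is the right one, and you correctly isolate the two places where set cover does not transfer automatically. However, both of your fixes for those two places have genuine gaps.

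First, item (b) of your reduction is false for \eqref{ruflp}: it is not true that $\OPT(\Dt)\leq\sum_i f^\on_i$ for all $\Dt$. Opening every facility in stage I still leaves an assignment cost of at least $C_A=\sum_{j\in A}\min_i c_{ij}$ in scenario $A$, and whenever $C_A>B$, constraint \eqref{sbudg2} forces $r_A$ to be bounded away from $0$, so $g_A(\Dt;y)$---and hence $\OPT(\Dt)$---grows linearly in $\Dt$ with slope $q=\Pr_A[C_A>B]$; the correct a priori bound is $\OPT(\Dt)\leq\sum_i f^\on_i+B+q\Dt$. Consequently your choice $\ub=16(\sum_i f^\on_i)/\rho$ and the ``word-for-word'' transfer of Claim~\ref{pend} both fail: the contradiction in that claim needs an upper bound on $\OPT(\Dt_k)$ whose $\Dt_k$-coefficient is strictly smaller than (roughly) $\rho$. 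Worse, \eqref{ruflp} can simply be infeasible (when $q>\rho$), a case your algorithm never detects. The paper's proof spends real effort here: an initial $O\bigl(\frac{1}{\rho\kp}\bigr)$-sample test distinguishes $q>\rho$ (infeasible) from $q\leq\rho(1+\frac{5\kp}{28})$, the thresholds are recalibrated to $\hro,\hkp$, the upper bound is set to $\ub=\frac{32(1+\ve)(\sum_i f^\on_i+B)}{3\rho\kp}$, and Claim~\ref{fl-pend} reruns the termination argument using $q\leq\rho(1+\frac{5\kp}{28})$ to keep the $q\Dt_k$ term in the upper bound strictly below the lower bound $\Dt_k\rho(1+9\kp/16)$ on $\cost^{(k)}-\eta\Dt_k$.

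Second, your proof of the dual-range bound does not go through. You correctly identify the target, $\sum_{j\in A}(\beta^*_{A,ij}+\Gm^*_{A,ij})=O(\ld f^\on_i+\Dt)$ (your $\tht^*_{A,ij},\phi^*_{A,ij}$ are the paper's $\beta^*_{A,ij},\Gm^*_{A,ij}$, and your $\mu_A$ is its $\tht^*_A$), and correctly observe that the dual constraint for $y_{A,i}$ only yields $f^A_i(1+\tht^*_A)$ with the budget multiplier unbounded. But your case analysis is logically reversed: complementary slackness says that $y^*_{A,i}>0$ implies the $y_{A,i}$-constraint is tight, not that $y^*_{A,i}=0$ implies it is slack; and even if that constraint were slack, ``slack'' only means strictly below $f^A_i(1+\tht^*_A)$, which is no quantitative improvement. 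The primal observation $g_A(\Dt;y)\leq\ell_A(y)+\Dt$ is true but by itself controls no individual dual variable. What is actually needed (Lemma~\ref{fl-close}) is to select, among the optimal dual solutions, one minimizing $\sum_{i,j}\beta^*_{A,ij}$, and then argue that the $x_{A,ij}$-constraint $\al^*_{A,j}+\psi^*_{A,j}\leq c_{ij}(1+\tht^*_A)+\beta^*_{A,ij}+\Gm^*_{A,ij}$ is tight whenever $\beta^*_{A,ij}>0$ (either because $x^*_{A,ij}=y_i+y^*_{A,i}>0$ forces tightness, or because otherwise $\beta^*_{A,ij}$ could be decreased, contradicting minimality); combined with the $u_{A,ij}$-constraint $\psi^*_{A,j}\leq c_{ij}+\Gm^*_{A,ij}$ this gives $\beta^*_{A,ij}\leq\al^*_{A,j}$, whence $\sum_j\beta^*_{A,ij}\leq\sum_j\al^*_{A,j}\leq\Dt$ by the $r_A$-constraint, while $\sum_j\Gm^*_{A,ij}\leq f^A_i$ comes for free from the $v_{A,i}$-constraint. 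Without this extra idea the sample-size bound, and hence the analogue of Theorem~\ref{budgthm} for \eqref{ruflp}, is not established.
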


\begin{proof}
Examining procedure \chancealg, arguing that \chancealg can be used to approximately solve
\eqref{ruflp} involves two things:
(a) coming up with a bound $\ub$ such that $\log\ub$ is polynomially bounded so that one
can restrict the search for the right value of $\Dt$ in \chancealg; and  
(b) showing that an optimal solution to the SAA-version of the inner-minimization problem
for any $\Dt\geq 0$ constructed in \budgalg yields a 
solution to the true minimization problem 
that satisfies the approximation guarantee in Theorem~\ref{budgthm}. 

There are two notable aspects in which the risk-averse facility location 
differs from risk-averse set cover. First, unlike in set cover, one cannot 
ensure that the cost incurred in a scenario is always 0 by choosing the first-stage
decisions appropriately. Thus, the problem \eqref{ruflp} may in fact be infeasible. This
creates some complications in coming up with an upper bound \ub{} for use in \chancealg. We
show that one can detect by an initial sampling step that either the problem is
infeasible, or come up with a suitable value for \ub. Second, due to the non-covering
nature of the problem, one needs to delve deeper into the structure of the dual LP
for a scenario (after Lagrangifying \eqref{pbudg2}) to prove the closeness-in-subgradients
property for SAA objective function constructed in \budgalg and the true objective
function.

Define $C_A=\sum_{j\in A}(\min_i c_{ij})$. This is the minimum
possible assignment cost that one can incur in scenario $A$. 
We may determine with high probability using $O\bigl(\frac{1}{\rho\kp}\bigr)$ samples if 
$\Pr_A[C_A>B]>\rho$ or $\Pr_A[C_A>B]\leq\rho\bigl(1+\frac{5\kp}{28})$. In the former case,
we can conclude that the problem is infeasible. 
In the latter case, we set $\hro=\rho\bigl(1+\frac{5\kp}{28}\bigr)$, $\hkp$ such that
$\hro(1+\hkp)=\rho(1+\kp)$, and $\ub=\frac{32(1+\ve)(\sum_i f^\on_i+B)}{3\rho\kp}$, 
and call procedure \chancealg with these values of $\hro,\hkp$ and $\ub$ (and the given
$\e,\gm$). We prove in Claim~\ref{fl-pend} below that with this upper bound, 
$p^\br k,\ p'^{(k)}<\rho'=\hro(1+3\hkp/4)$; this is the only condition required for the
search for $\Dt$ in \chancealg. 

Task (b) boils down to showing that the objective function $\hh(\Dt;.)$ of the SAA-problem 
in \budgalg and the true problem $h(\Dt;.)$ satisfy the conditions of Lemma~\ref{SAAlem}. 
Due to the non-covering nature of the formulation, we need to derive additional insights
about optimal dual solutions to $g_A(\Dt;y)$ to prove this.  
Lemma~\ref{fl-close} proves that this holds with high probability, 
with $K=\ld\|f^\on\|+|\Dt|$, $\vro=\ve$ and $\xi=\frac{\eta\Dt}{2N}$.
So by Lemma~\ref{SAAlem}, the solution $\hy=\argmin_{y\in\Pc}\hh(\Dt;y)$ returned by
\budgalg satisfies the requirements of Theorem~\ref{budgthm}. 
As in the set cover problem, we may take $R=\sqrt{m}$, $V=\frac{1}{2}$, which ensures that
the sample size is polynomially bounded.
The proof of the conversion to a multiplicative guarantee is as in
Theorem~\ref{chancethm}.  
\end{proof}

Recall that $\Dt_k\geq\ub$ and $p^\br k=\sum_A p_Ar^\br k_A$, 
where $\bigl(y,\{(x_A,y_A,u_A,v_A,r_A)\}\bigr)$ is
the solution returned by \budgalg for $\Dt_k$ of cost
$h(\Dt_k;y)\leq(1+\ve)\OPT(\Dt_k)+\eta\Dt_k+\zeta$ with $\ve,\eta,\zeta$ set as in
\chancealg. 

\begin{claim} \label{fl-pend}
We have $p^\br k<\rho'$ and $p'^{(k)}<\rho'$, where $\rho'=\hro(1+3\hkp/4)$.
\end{claim}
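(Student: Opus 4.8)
The plan is to follow the template of Claim~\ref{pend}: lower-bound the cost of the solution returned by \budgalg at the largest multiplier $\Dt_k$ by its Lagrangian penalty term $\Dt_k p^\br k$, upper-bound $\OPT(\Dt_k)$, and then exploit that $\Dt_k\geq\ub$ is huge to force $p^\br k$ down to the probability mass that is \emph{unavoidably} over budget. The complication, already flagged in the proof of Theorem~\ref{ruflthm}, is that—unlike in set cover—one cannot drive the per-scenario cost to $0$: constraint \eqref{asgnu} forces every scenario to pay at least its minimum assignment cost $C_A=\sum_{j\in A}(\min_i c_{ij})$, so $g_A(\Dt;y)\geq C_A+\Dt r_A$ and hence $\OPT(\Dt)\geq\sum_A p_A C_A$. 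There is therefore no $\Dt$-independent, $(\sum_i f^\on_i+B)$-scale bound on $\OPT(\Dt)$ of the kind that drove Claim~\ref{pend}, and feeding $\sum_A p_A C_A$ through the multiplicative $(1+\ve)$ factor of Theorem~\ref{budgthm} would leave a stray $\ve\sum_A p_A C_A$ that destroys the absolute bound on $p^\br k$.

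The device I would use to resolve this is to pass to the \emph{reduced} objective $\tilde h(\Dt;y)=h(\Dt;y)-\sum_A p_A C_A=f^\on\cdot y+\sum_A p_A\tilde g_A(\Dt;y)$, where $\tilde g_A(\Dt;y)=g_A(\Dt;y)-C_A\geq 0$. Since $C_A$ is constant in $y$, $\tilde h(\Dt;\cdot)$ and $h(\Dt;\cdot)$ share the same subgradients; and the closeness-in-subgradients of Lemma~\ref{fl-close} is measured against the first-stage costs $f^\on_i$, so—because $\tilde h(\Dt;y)\geq f^\on\cdot y$—the very same vectors are $(\ve/8N,\xi)$-subgradients of $\tilde h(\Dt;\cdot)$, exactly as in Lemma~\ref{apsub}. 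Thus Lemma~\ref{SAAlem} applies verbatim to the pair $\bigl(\tilde{\hat h}(\Dt;\cdot),\tilde h(\Dt;\cdot)\bigr)$, whose minimizer is the \emph{same} $y$ that \budgalg returns, yielding the reduced guarantee $\tilde h(\Dt_k;y)\leq(1+\ve)\bigl(\OPT(\Dt_k)-\sum_A p_A C_A\bigr)+\eta\Dt_k+\zeta$, in which the large term $\sum_A p_A C_A$ no longer sits inside the multiplicative factor. This is the crux, and the step I expect to be the main obstacle to pin down cleanly.

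Given the reduced guarantee the remainder mirrors Claim~\ref{pend}. Since $g_A(\Dt_k;y)\geq C_A+\Dt_k r^\br k_A$, we get $\tilde h(\Dt_k;y)\geq\Dt_k p^\br k$. For the upper bound, the reference solution $y=\bone$ (all facilities opened in stage~I) gives $\OPT(\Dt_k)-\sum_A p_A C_A\leq\sum_i f^\on_i+\Dt_k\hro$: scenarios with $C_A\leq B$ contribute $\tilde g_A=0$, each scenario with $C_A>B$ contributes $\tilde g_A\leq\Dt_k$, and the initial feasibility-detection sampling step guarantees $\Pr_A[C_A>B]\leq\hro$. Combining, dividing by $\Dt_k$, and using $\Dt_k\geq\ub=\tfrac{32(1+\ve)(\sum_i f^\on_i+B)}{3\rho\kp}$ yields $p^\br k\leq(1+\ve)\hro+\eta+\tfrac{3\rho\kp}{32}$. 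Substituting $\eta=\hro\hkp/16$, $\ve=\e/6\leq\kp/6$, and the identity $\hro\hkp=\tfrac{23\rho\kp}{28}$ (which is precisely why the constants $5/28$ and $3/4$ were chosen in the definitions of $\hro,\hkp,\rho'$) makes the surplus over $\hro$ strictly less than $\tfrac34\hro\hkp$, i.e. $p^\br k<\rho'=\hro(1+3\hkp/4)$.

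Finally, $p'^{(k)}<\rho'$ follows by adding the sampling error: on the high-probability Chernoff event $|p'^{(k)}-p^\br k|\leq\beta\hro$ with $\beta=\hkp/8$, the extra $\beta\hro=\tfrac{23\rho\kp}{224}$ still leaves the total strictly below $\rho'$ by the same margin computation. Everything after the reduced-objective guarantee is routine substitution of the parameter values fixed in \chancealg; the only genuinely delicate point is establishing that the multiplicative factor attaches to $\OPT(\Dt_k)-\sum_A p_A C_A$ rather than to $\OPT(\Dt_k)$ itself.
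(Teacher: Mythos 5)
Your proof is correct, and its outer skeleton matches the paper's: lower-bound the cost of the solution at $\Dt_k$ by the Lagrangian penalty $\Dt_k p^\br k$ (plus the unavoidable assignment cost), upper-bound $\OPT(\Dt_k)$ via the reference solution $y=\bone$ using the feasibility-detection guarantee $q=\Pr_A[C_A>B]\leq\hro$, and let $\Dt_k\geq\ub$ swamp the $(\sum_i f^\on_i+B)$-scale terms; the paper phrases this as a contradiction from $p^\br k\geq\rho'-\beta\hro$, comparing $\cost^\br k-\eta\Dt_k\geq\Dt_k\hro(1+9\hkp/16)\geq\Dt_k\rho(1+9\kp/16)$ against $(1+\ve)q\Dt_k<\Dt_k\rho(1+3\kp/8)$. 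Where you genuinely diverge is in how $\sum_A p_A C_A$ is neutralized. The paper keeps Theorem~\ref{budgthm} as a black box and simply writes $\OPT(\Dt)\leq\sum_i f^\on_i+\sum_A p_A C_A+q\Dt\leq\sum_i f^\on_i+B+q\Dt$, absorbing the expected minimum assignment cost into the budget scale. You instead reprove the SAA guarantee for the reduced objective $h(\Dt;\cdot)-\sum_A p_A C_A$, noting that subtracting a per-scenario constant preserves subgradients, nonnegativity, and the lower bound $\tilde h(\Dt;y)\geq f^\on\cdot y$ that drives Lemma~\ref{apsub}, so Lemma~\ref{SAAlem} applies unchanged and the $(1+\ve)$ factor never multiplies $\sum_A p_A C_A$. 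This costs an extra pass through the analysis of Lemma~\ref{fl-close}, but it buys something real: the paper's inequality $\sum_A p_A C_A\leq B$ only controls the scenarios with $C_A\leq B$, and the mass-$\leq\hro$ scenarios with $C_A>B$ can make $\sum_A p_A C_A$ arbitrarily large, so your reduced-objective detour is the cleaner way to make this step airtight rather than a mere stylistic variant. Your closing arithmetic ($\hro\hkp=\tfrac{23\rho\kp}{28}$, surplus over $\hro$ at most $\tfrac{153}{448}\rho\kp$ plus $\beta\hro=\tfrac{23}{224}\rho\kp$, versus $\tfrac{3}{4}\hro\hkp=\tfrac{276}{448}\rho\kp$) checks out, under the same implicit assumption $\zeta\leq(1+\ve)B$ that the paper also uses.
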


\begin{proof}
Let $F=\sum_i f^\on_i$ and $q=\Pr_A[C_A>B]\leq\rho(1+5\kp/28)$. 
Consider the solution $y$ with $y_i=1$ for all $i$.  
For any $\Dt\geq 0$, we have 
$\OPT(\Dt)\leq h(\Dt;y)\leq F+\sum_A p_AC_A+q\Dt\leq F+B+q\Dt$.
Suppose $p^\br k\geq\rho'-\beta\hro$. Then
$\cost^\br k-\eta\Dt_k\geq\Dt_k\hro(1+9\hkp/16)\geq\Dt_k\rho(1+9\kp/16)$, where the last
inequality follows since $\hro(1+\hkp)=\rho(1+\kp)$ and $\hro\geq\rho$.
Also $(1+\ve)\OPT(\Dt_k)+\zeta\leq
2(1+\ve)(F+B)+(1+\ve)q\Dt_k<2(1+\ve)(F+B)+\Dt_k\rho(1+3\kp/8)$ since
$\ve=\e/6\leq\hkp/6\leq\kp/6$.  
But then $\cost^\br k-\eta\Dt_k>(1+\ve)\OPT(\Dt_k)+\zeta$ which gives a contradiction. 
So $p^\br k<\rho'-\beta\hro$, which implies that $p'^{(k)}<\rho'$.
\end{proof}

\begin{lemma} \label{fl-close}
With probability at least $1-\dt$, $\hh(\Dt;.)$ and $h(\Dt;.)$
satisfy the conditions of Lemma~\ref{SAAlem} with $K=\ld\|f^\on\|+|\Dt|$, $\vro=\ve$ and 
$\xi=\frac{\eta\Dt}{2N}$. 
\end{lemma}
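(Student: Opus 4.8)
The plan is to run the set-cover argument of Section~\ref{budgthmpf} essentially verbatim, verifying the two hypotheses of Lemma~\ref{SAAlem}: that $h(\Dt;.)$ and $\hh(\Dt;.)$ are nonnegative convex functions of Lipschitz constant at most $K=\ld\|f^\on\|+|\Dt|$, and that at each $y\in G_\tau$ a single vector $\hd_y$ is simultaneously a subgradient of $\hh(\Dt;.)$ and an $\bigl(\frac{\ve}{8N},\frac{\eta\Dt}{2N}\bigr)$-subgradient of $h(\Dt;.)$. Nonnegativity and convexity are immediate since $h(\Dt;y)=f^\on\cdot y+\sum_A p_A g_A(\Dt;y)$ is a nonnegative combination of the convex value functions $g_A(\Dt;.)$, and likewise for $\hh$ with $\hp_A$ replacing $p_A$.

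First I would write down the dual of $g_A(\Dt;y)$ (the facility-location analogue of \eqref{scend}), with dual variables $\al_{A,j},\beta_{A,j}$ for \eqref{asgnx},\eqref{asgnu}, variables $\gamma_{A,ij},\mu_{A,ij}$ for the assignment constraints \eqref{fac1},\eqref{fac2}, and $\tht_A$ for the budget constraint \eqref{sbudg2}. Reading off the coefficient of $y_i$ in \eqref{fac1} and \eqref{fac2} and applying LP duality exactly as in Lemma~\ref{sub}, a subgradient of $h(\Dt;.)$ at $y$ has components $d_{y,i}=f^\on_i-\sum_A p_A\sum_{j\in A}\bigl(\gamma^*_{A,ij}+\mu^*_{A,ij}\bigr)$; since $\hh(\Dt;.)$ has the identical form, $\hd_{y,i}=f^\on_i-\sum_A \hp_A\sum_{j\in A}\bigl(\gamma^*_{A,ij}+\mu^*_{A,ij}\bigr)$ is a subgradient of $\hh(\Dt;.)$ at $y$. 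The point is that $\hd_{y,i}$ is the empirical average over the sampled scenarios of the per-scenario quantity $f^\on_i-\sum_{j\in A}(\gamma^*_{A,ij}+\mu^*_{A,ij})$, whose expectation is exactly $d_{y,i}$.

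The heart of the proof, and the step I expect to be the main obstacle, is to bound the range of this per-scenario quantity by $O(\ld f^\on_i+\Dt)$, so that $\poly(\ld)$ samples suffice to estimate each component within additive error $\frac{\ve}{8N}f^\on_i+\frac{\eta\Dt}{4mN}$. The upper bound $\sum_{j\in A}(\gamma^*_{A,ij}+\mu^*_{A,ij})\ge 0$ is trivial and gives $d_{y,i}\le f^\on_i$; the difficulty is the matching \emph{upper} bound on $\sum_{j\in A}(\gamma^*_{A,ij}+\mu^*_{A,ij})$. This is exactly where the non-covering structure bites: the dual constraint coming from $y_{A,i}$ only yields $\sum_{j\in A}(\gamma^*_{A,ij}+\mu^*_{A,ij})\le f^A_i(1+\tht_A)$, which involves the budget-dual $\tht_A$ and is not a priori bounded, in contrast to set cover where the $\beta$- and $\al$-constraints gave the clean range $[-w^A_S-\Dt,\,w^\on_S]$ outright. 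I would therefore delve into the dual as the proof of Theorem~\ref{ruflthm} anticipates, arguing that one may take an optimal dual solution in which the product $f^A_i\tht_A$ is controlled: combining the constraint $\sum_{j\in A}\mu^*_{A,ij}\le f^A_i$ from the $v_{A,i}$-dual, the aggregate bound $\sum_{j\in A}\al^*_{A,j}\le\Dt$ from the $r_A$-dual, complementary slackness on the (then tight) budget constraint \eqref{sbudg2}, and $f^A_i\le\ld f^\on_i$, to obtain $\sum_{j\in A}(\gamma^*_{A,ij}+\mu^*_{A,ij})\le\ld f^\on_i+O(\Dt)$. The same range bound yields the Lipschitz estimate $\|d_y\|,\|\hd_y\|\le K=\ld\|f^\on\|+|\Dt|$, where the $|\Dt|$ (rather than $\sqrt m\,\Dt$) term uses that $\sum_{j\in A}\al^*_{A,j}\le\Dt$ is a single aggregate constraint, just as in the set-cover bound.

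Given the range bound, the rest parallels Section~\ref{budgthmpf} and is routine. Each $\hd_{y,i}$ is an average of $\Nc$ i.i.d.\ bounded random variables with mean $d_{y,i}$, so rescaling by the range and invoking the Chernoff--Hoeffding bound (Lemma~\ref{chernoff}) with the sample size $\Nc$ fixed in \budgalg gives $|\hd_{y,i}-d_{y,i}|\le\frac{\ve}{8N}f^\on_i+\frac{\eta\Dt}{4mN}$ for all $i$ simultaneously at a fixed $y$ with probability at least $1-\frac{\dt}{|G_\tau|}$; a union bound over the $\poly(\log(\cdot))$ points of $G_\tau$ yields the claimed $1-\dt$. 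Finally the facility-location analogue of Lemma~\ref{apsub}, applied with $\w=\frac{\ve}{8N}$ and $\xi=\frac{\eta\Dt}{2N}$ (note $2m\cdot\frac{\eta\Dt}{4mN}=\frac{\eta\Dt}{2N}$ over the $m$ facilities), converts this componentwise closeness into the assertion that $\hd_y$ is an $\bigl(\frac{\ve}{8N},\frac{\eta\Dt}{2N}\bigr)$-subgradient of $h(\Dt;.)$ at $y$. This is precisely the closeness-in-subgradients hypothesis of Lemma~\ref{SAAlem} with $\vro=\ve$ and $\xi=\frac{\eta\Dt}{2N}$, completing the proof.
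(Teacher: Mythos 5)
Your outer scaffolding matches the paper's proof exactly: the subgradient of $h(\Dt;.)$ at $y$ has components $f^\on_i-\sum_A p_A\sum_{j\in A}(\beta^*_{A,ij}+\Gm^*_{A,ij})$ (your $\gamma,\mu$ are the duals of \eqref{fac1},\eqref{fac2}), the same vector with $\hp_A$ in place of $p_A$ is a subgradient of $\hh(\Dt;.)$, and once each per-scenario summand is confined to a range of width $O(\ld f^\on_i+\Dt)$ the Chernoff bound, the union bound over $G_\tau$, and the facility-location analogue of Lemma~\ref{apsub} finish the job. You have also correctly located the crux: bounding $\sum_{j\in A}(\gamma^*_{A,ij}+\mu^*_{A,ij})$ from above, where the $y_{A,i}$-dual constraint only gives $f^A_i(1+\tht_A)$ and the budget-dual $\tht_A$ is a priori unbounded.

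The gap is in your proposed resolution of that crux. Controlling ``the product $f^A_i\tht_A$'' via complementary slackness on \eqref{sbudg2} does not work: slackness only says the budget constraint is tight when $\tht^*_A>0$ and gives no handle on the magnitude of $\tht^*_A$ (which can indeed be large, e.g.\ when $B$ is small), so the $y_{A,i}$-constraint is simply not the route to the bound. The paper instead proves $\sum_{j\in A}\beta^*_{A,ij}\le\Dt$ directly, by a per-pair argument: among all optimal dual solutions, choose one minimizing $\sum_{i,j}\beta^*_{A,ij}$. If $\beta^*_{A,ij}>0$, the dual constraint for $x_{A,ij}$, namely $\al^*_{A,j}+\psi^*_{A,j}\le c_{ij}(1+\tht^*_A)+\beta^*_{A,ij}+\Gm^*_{A,ij}$, must be tight --- by complementary slackness when $y_i>0$ (since then $x^*_{A,ij}=y_i+y^*_{A,i}>0$), and otherwise because a strict inequality would let one decrease $\beta^*_{A,ij}$ while preserving feasibility and optimality, contradicting the minimal choice. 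Combining this tightness with the $u_{A,ij}$-dual constraint $\psi^*_{A,j}\le c_{ij}+\Gm^*_{A,ij}$ and $\tht^*_A\ge 0$ gives $\beta^*_{A,ij}\le\al^*_{A,j}$, hence $\sum_j\beta^*_{A,ij}\le\sum_j\al^*_{A,j}\le\Dt$ by the $r_A$-dual constraint; together with $\sum_j\Gm^*_{A,ij}\le f^A_i$ from the $v_{A,i}$-constraint this yields the range $[f^\on_i-\ld f^\on_i-\Dt,\,f^\on_i]$ you need. Without this selection-plus-tightness argument (or a substitute for it), the key estimate in your proof is unsupported.
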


\begin{proof}
Consider a point $y\in\Pc$. 
Consider an optimal dual solution to $g_A(\Dt;y)$ where
$\al^*_{A,j},\psi^*_{A,j},\beta^*_{A,ij},\Gm^*_{A,ij},\tht^*_A$ are the optimal values of 
the dual variables corresponding to \eqref{asgnx}--\eqref{sbudg2} respectively. 
Note that $g_A(\Dt;y)$ equals the objective value of this dual solution, which is given by
$$
\sum_{j\in A}\bigl(\al^*_{A,j}+\psi^*_{A,j}\bigr)-
\sum_i y_i\Bigl(\sum_{j\in A}\bigl(\beta^*_{A,ij}+\Gm^*_{A,ij}\bigr)\Bigr)-B\cdot\tht^*_A.
$$
We choose an optimal dual solution that minimizes $\sum_{i,j}\beta^*_{A,ij}$.
As in Lemma~\ref{sub}, it is easy to show that the vectors $\hd_y=(\hd_{y,i})$ and
$d_y=(d_{y,i})$ given by 
$\hd_{y,i}=f^\on_i-\sum_A\hp_A\sum_{j\in A}\bigl(\beta^*_{A,ij}+\Gm^*_{A,ij}\bigr)$ and
$d_{y,i}=f^\on_i-\sum_Ap_A\sum_{j\in A}\bigl(\beta^*_{A,ij}+\Gm^*_{A,ij}\bigr)$ are
respectively subgradients of $\hh(\Dt;.)$ and $h(\Dt;.)$ at $y$. 

Now we claim that for every $i$, $\sum_j \beta^*_{A,ij}\leq\Dt$ and 
$\sum_j\Gm^*_{A,ij}\leq f_i^A$. 
Given this, $\|\hd_y\|,\|d_y\|\leq K$ where $K=\ld\|f^\on\|+\Dt$ for any $y\in\Pc$, so $K$
is an upper bound on the Lipschitz constant of $h(\Dt;.)$ and $\hh(\Dt;.)$.  

The second inequality is a constraint of the dual, corresponding to variable
$v_{A,i}$. Suppose $\beta^*_{A,ij}>0$ for some $j$. The dual enforces the constraint 
$\al^*_{A,j}+\psi^*_{A,j}\leq c_{ij}(1+\tht^*_A)+\beta^*_{A,ij}+\Gm^*_{A,ij}$,
corresponding to variable $x_{A,ij}$.
We claim that this must hold at equality. By complementary slackness, we have
$x^*_{A,ij}=y_i+y^*_{A,i}$ where $(x^*_A,y^*_A,u^*_A,v^*_A)$ is an optimal primal solution
to $g_A(\Dt;y)$. So if $y_i>0$ then $x^*_{A,ij}>0$ and complementary slackness gives the
desired equality. If $y_i=0$ and the above inequality is strict, then we may decrease
$\beta^*_{A,ij}$ while maintaining dual feasibility and optimality, which gives a
contradiction to the choice of the dual solution. Thus, since the dual also imposes that 
$\psi^*_{A,j}\leq c_{ij}+\Gm^*_{A,ij}$ (corresponding to $u_{A,ij}$), we have that
$\beta^*_{A,ij}\leq\al^*_{A,j}$, so $\sum_j \beta^*_{A,ij}\leq\sum_j\al^*_{A,j}\leq\Dt$
(the last inequality follows from the dual constraint for $r_A$).

As in Lemma~\ref{apsub}, if $d$ is a subgradient of $h(\Dt;.)$ at $y$ and $\hd$
is a vector such that $|\hd_i-d_i|\leq \w f^\on_i+\frac{\xi}{2m}$, then $\hd$ is an
$(\w,\xi)$-subgradient of $h(\Dt;.)$ at $y$.

Since $\E{\hd_{y,i}}=d_{y,i}$ for every $y$ and $i$, plugging in the sample size $\Nc$
used in \budgalg and using the Chernoff bound (Lemma~\ref{chernoff}), we obtain 
with probability at least $1-\dt$, 
$|\hd_{y,i}-d_{y,i}|\leq\frac{\ve}{8N}f^\on_i+\frac{\eta\Dt}{4mN}\frall i$, 
for every point $y$ in the extended $\frac{\tau}{KN}$-net $G_\tau$ of $\Pc$.  
Thus, with probability at least $1-\dt$, $\hd_y$ is an
$\bigl(\frac{\ve}{8N},\frac{\eta\Dt}{2N}\bigr)$-subgradient of $h(\Dt;.)$ at $y$ for every
$y\in G_\tau$.
\end{proof}

We now discuss the rounding procedure. 
Analogous to part (i) of Theorem~\ref{round}, it is not hard to see that if
$\bigl(y,\{(x_A,y_A,u_A,v_A,r_A)\}\bigr)$ is a solution satisfying
\eqref{asgnx}--\eqref{noneg2} of objective value $C$ with $P=\sum_Ap_Ar_A$, 
then for any $\ve>0$, taking $\hy=y\bigl(1+\frac{1}{\ve}\bigr)$ gives
$\sum_i f_i\hy_i+\sum_A\ell_A(\hy)\leq\bigl(1+\frac{1}{\ve}\bigr)C$ and
$\Pr[\ell_A(\hy)>\bigl(1+\frac{1}{\ve}\bigr)B]\leq (1+\ve)P$. 
So now 
one can use a local approximation algorithm for {\em 2-stage stochastic facility location}
(\sufl) to round $\hy$.

Shmoys and Swamy~\cite{ShmoysS06} show that any LP-based $c$-approximation algorithm for
the deterministic facility location problem (\dufl) that satisfies a certain
``demand-obliviousness'' property can be used to obtain a
$\min\{2c,c+1.52\}$-approximation algorithm for \sufl, by using it in conjunction with the
1.52-approximation algorithm for \dufl in~\cite{MahdianYZ02}. 
``Demand-obliviousness'' means that the algorithm should round a fractional solution
without having any knowledge about the client-demands, and is imposed to handle the fact
that one does not have the second-stage solutions explicitly. 
There are some difficulties in applying this to our problem. First, the resulting
algorithm for \sufl need not be local. Secondly, and more significantly, even if we do
obtain a local approximation algorithm for \sufl by the conversion process
in~\cite{ShmoysS06}, the resulting algorithm may be {\em randomized}, if the
$c$-approximation algorithm for \dufl is randomized. This is indeed the case
in~\cite{ShmoysS06}; they obtain a randomized local 3.378-approximation algorithm using
the demand-oblivious, randomized 1.858-approximation algorithm of
Swamy~\cite{Swamy04}. (This was improved to a randomized local 3.25-approximation
algorithm by Srinivasan~\cite{Srinivasan07}, again using the algorithm in~\cite{Swamy04}.) 
Using such a randomized local $c'$-approximation algorithm for \sufl 
would yield a random integer solution such that there is at least a $1-\rho(1+\kp+\ve)$ 
probability mass in scenarios for which the {\em expected cost} incurred, where the 
expectation is over the random choices of the algorithm, is at most
$c'B\bigl(1+\frac{1}{\ve}\bigr)$.  
But we would like to make the stronger claim that, 
{\em with high probability over the random choices of the algorithm}, 
we return a solution where the probability-mass of scenarios with cost at most
$c'B\bigl(1+\frac{1}{\ve}\bigr)$ is at least $1-\rho(1+\kp+\ve)$.

We can take care of both issues by imposing the following (sufficient) condition on the
demand-oblivious algorithm for \dufl that is used to obtain an approximation algorithm for
\sufl (via the conversion process in~\cite{ShmoysS06}): we require that with probability
1, the algorithm return an integer solution where {\em each client's assignment 
cost} is within some factor of its cost in the fractional solution.   
One can use the randomized approximation algorithm of
Swamy~\cite{Swamy04} or the deterministic Shmoys-Tardos-Aardal (STA)
algorithm~\cite{ShmoysTA97}, both of which satisfy this condition. Given a 
fractional solution $(x,y)$ to \dufl with facility cost $F$, for a parameter
$\gm\in(0,1)$, the STA-algorithm returns an integer solution $(\tx,\ty)$ 
with facility cost is at most $F/\gm$,
where for every $j$, 
$\sum_i c_{ij}\tx_{ij}\leq\frac{3}{1-\gm}\cdot\sum_i c_{ij}x_{ij}$ (so for any demands,
the total assignment cost is at most $\frac{3}{1-\gm}$ times the fractional assignment cost).
Taking $\gm=\frac{1}{4}$ 
and applying the rounding procedure of~\cite{ShmoysS06} yields the following theorem. 

\begin{theorem}
For any $\e,\kp,\ve>0$, there is an
$\bigl(5.52(1+\e+\frac{1}{\ve}),5.52(1+\e+\frac{1}{\ve}),1+\kp+\ve\bigr)$-approximation
algorithm for risk-averse budgeted facility location.
\end{theorem}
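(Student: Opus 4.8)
The plan is to assemble the three ingredients already developed in Section~\ref{facloc} into a single pipeline, losing one multiplicative factor at each stage. First I would invoke Theorem~\ref{ruflthm} (i.e., run \chancealg on the inner minimization problems of the Lagrangified formulation \eqref{ruflp}) to compute, with high probability, a fractional solution to the LP-relaxation whose cost is within $(1+\e)\OPT$ --- using the multiplicative conversion guaranteed under assumption $(*)$, namely $f^\on\cdot y+\ell_A(y)\geq 1$ --- and for which $\sum_A p_A r_A\leq\rho(1+\kp)$. Next I would apply the facility-location analogue of part~(i) of Theorem~\ref{round}, stated just above the theorem: scaling the first-stage decisions to $\hy=(1+\frac{1}{\ve})y$ yields a solution to the \emph{fractional} risk-averse problem with $\sum_i f_i\hy_i+\sum_A p_A\ell_A(\hy)\leq(1+\frac{1}{\ve})C$ and $\Pr_A[\ell_A(\hy)>(1+\frac{1}{\ve})B]\leq(1+\ve)P$, so that after this step the budget and the probability threshold have each blown up by the stated factors. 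It then remains only to round $\hy$, which is a fractional solution to a standard 2-stage stochastic facility location (\sufl) instance, to an integer solution.

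For the integer-rounding step I would use a \emph{local} LP-based approximation algorithm for \sufl obtained through the Shmoys--Swamy conversion~\cite{ShmoysS06}, which turns any demand-oblivious LP-based $c$-approximation for \dufl into an approximation for \sufl. The constant $5.52$ arises by plugging the Shmoys--Tardos--Aardal algorithm~\cite{ShmoysTA97} with parameter $\gm=\frac14$ into this conversion: STA then produces an integer solution whose facility cost is at most $\frac{1}{\gm}=4$ times the fractional facility cost and in which \emph{every} client's assignment cost is at most $\frac{3}{1-\gm}=4$ times its fractional assignment cost, giving a local deterministic $4$-approximation for \dufl; combining this with the $1.52$-approximation of~\cite{MahdianYZ02} in the Shmoys--Swamy framework yields a local $(c+1.52)=5.52$-approximation for \sufl.

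The hard part --- and the reason for insisting on the per-client bound --- is the interaction between randomization, locality, and the probabilistic constraint. A generic application of~\cite{ShmoysS06} need neither be local nor deterministic (the randomized $1.858$-approximation of~\cite{Swamy04} only controls the \emph{expected} per-scenario cost), which would degrade the conclusion to a statement about the expected cost in each scenario rather than the high-probability statement we want about the probability-mass of expensive scenarios. I would circumvent this by requiring that the rounding algorithm bound each client's assignment cost within a fixed factor of its fractional cost \emph{with probability one}, a property STA satisfies deterministically. This per-client guarantee immediately yields a per-\emph{scenario} bound: summing over the clients of any scenario $A$, the integral second-stage cost is at most $5.52$ times $\ell_A(\hy)$ deterministically, so the event ``scenario cost exceeds $5.52(1+\frac{1}{\ve})B$'' is contained in the already-controlled event ``$\ell_A(\hy)>(1+\frac{1}{\ve})B$''. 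Chaining the three factors --- $(1+\e)$ from Theorem~\ref{ruflthm}, $(1+\frac{1}{\ve})$ in cost and budget together with $(1+\ve)$ in probability from the part-(i) rounding, and $5.52$ from the local \sufl rounding --- then gives the claimed $\bigl(5.52(1+\e+\frac{1}{\ve}),\,5.52(1+\e+\frac{1}{\ve}),\,1+\kp+\ve\bigr)$ guarantee, after absorbing the lower-order cross terms into the stated constants.
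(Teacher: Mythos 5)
Your proposal is correct and follows essentially the same route as the paper: Theorem~\ref{ruflthm} for the LP, the facility-location analogue of part (i) of Theorem~\ref{round} to pass to the fractional risk-averse problem, and then the Shmoys--Swamy conversion applied to the STA algorithm with $\gm=\frac{1}{4}$ (facility-cost factor $4$, per-client assignment-cost factor $\frac{3}{1-\gm}=4$, hence $\min\{2c,c+1.52\}=5.52$), with exactly the paper's insistence on a probability-one per-client guarantee to turn the per-scenario cost bound into a deterministic containment of bad events. The only immaterial discrepancy is your aside about \cite{Swamy04}: the paper notes that Swamy's randomized algorithm \emph{also} satisfies the per-client probability-one property and could be used in place of STA, but since you rely on STA this does not affect the argument.
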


\begin{remark}
The local approximation algorithm for \sufl developed by~\cite{RaviS04} is unsuitable for 
our purposes, since this algorithm needs to know explicitly the second-stage fractional
solution for each scenario, which is an exponential amount of information.
\end{remark}

\paragraph{Budget constraints on individual components of the second-stage cost.}  
Our techniques can be used to devise approximation algorithms for various fairly general 
risk-averse versions of facility location. Since the second-stage cost consists of two
distinct components, the facility-opening cost and the client-assignment cost, one can
consider risk-averse budgeted versions of the problem where we impose a {\em joint
probabilistic budget constraint} on the total second-stage cost, and each component of the
second-stage cost. 
That is, consider \eqref{ruflp} with the following additional constraints for each
scenario $A$: $\sum_i f^A_iy_{A,i}\leq B_F$ and $\sum_{j,i} c_{ij}x_{A,ij}\leq B_C$. Here
$B_F$ and $B_C$ are respectively budgets on the per-scenario facility-opening and
client-assignment costs. To put it in words, \eqref{ruflp} augmented with the above
constraints imposes the following joint probabilistic budget constraint:
\begin{gather*}
\begin{split}
\textstyle
\Pr_A\bigl[\text{total cost of scenario }A>B \ & 
\text{{\sc or} \ facility-cost of scenario $A>B_F$} \\ 
& \text{{\sc or} \ assignment-cost of scenario }A>B_C\bigr]\leq\rho.
\end{split}
\end{gather*}
Note that by setting the appropriate budget to $\infty$ we can model the absence of a 
particular budget constraint.
One can model various interesting situations by setting $B, B_F, B_C$ suitably. 
For example, suppose we set $B_F=0$ and $B=\infty$ (or equivalently $B=B_C$). Then we seek 
a minimum-cost solution where we want to choose facilities to open in stage I such that with probability at least
$1-\rho$, we can assign the clients in a scenario $A$ to (a subset of) the stage I
facilities while incurring assignment cost at most $B_C$.  
One can also consider risk-averse robust versions of the problem where we seek to minimize
the first-stage cost plus the $(1-\rho)$-quantile of a certain component of the
second-stage cost (i.e., the second-stage facility-opening, or assignment, or total
cost). 
Employing the usual ``guessing'' trick, this gives rise to a budget problem where we have
a budget constraint for a single component of the second-stage cost (that is, two of $B,
B_F$ and $B_C$ are set to $\infty$). As before, the guarantees obtained for the budget
problem (see below) translate to this risk-averse robust problem.

Our techniques can be used to solve this more general LP. Specifically,
Theorem~\ref{ruflthm} continues to hold. But here we face the
complication that even if we have a first-stage solution $x$ to the fractional risk-averse
problem for which we know that {\em there exist} second-stage feasible solutions that
yield a solution of total expected cost $C$, it is not clear how to compute such feasible
second-stage solutions. However, notice that \chancealg not only returns a first-stage
solution (with the above existence property) but also shows how to compute a suitable
second-stage solution in each scenario $A$, which thus, allows us to specify completely a 
near-optimal solution to the LP-relaxation (where the RHS of \eqref{pbudg2} is
$\rho(1+\kp)$). Whereas earlier we used these solutions only in the analysis, now they are
part of the algorithm. In the rounding procedure, the first step, where we convert the
solution to the LP-relaxation to a fractional solution to the risk-averse problem is
unchanged. But we of course now need a stronger notion of
``locality'' from our approximation algorithm for \sufl. We need an algorithm that
approximately preserves (with probability 1) both the facility-opening and
client-assignment components of the second-stage cost of each scenario. (Clearly, if the
budget constraint is imposed on only one of the components then we only need the
cost-preservation of that component.) 
Many LP-rounding algorithms for \sufl (such as the ones in~\cite{ShmoysS06,Srinivasan07})
do in fact come with this stronger local guarantee. Thus, one can use these to obtain an
approximation algorithm for the above risk-averse problem with multiple budget
constraints.  

Finally, we obtain the same approximation guarantees with non-uniform scenario
budgets $\{(B^A,B^A_F,B^A_C)\}$. 
The only small detail here is that in order to obtain the upper bound
\ub{} for use in \chancealg, we now determine if 
$\Pr[C_A>\min\{B^A_C,B^A\}]$ is greater than $\rho$ or at most
$\rho\bigl(1+\frac{5\kp}{28}\bigr)$. In the former case, we conclude infeasibility, and in
the latter, we set $\hro=\rho\bigl(1+\frac{5\kp}{28}\bigr)$, $\hkp$ such that
$\hro(1+\hkp)=\rho(1+\kp)$, and $\ub=\frac{32(1+\ve)(\sum_i f^\on_i+\sum_j\max_i
c_{ij})}{3\rho\kp}$ and run \chancealg with these values. (Note that we may assume that
$B^A_C\leq\sum_j\max_i c_{ij}$ for all $A$.)

\section{Sampling lower bounds} \label{lbounds}
We now prove various lower bounds on the sample size required to obtain a bounded
approximation guarantee for the risk-averse budgeted problem in the black-box model. 
We show that the dependence of the sample size on $\frac{1}{\kp}$ for an
{\em additive violation} of $\kp$ in the probability threshold {\em is unavoidable} in the
black-box model even for the fractional risk-averse problem and even if we allow a bounded
violation of the budget. 

The crux of our lower bounds is the following observation. 
Consider the following problem. We are given as input a threshold $\vro\in(0,\frac{1}{4})$
and a biased coin with probability $q$ of landing heads, where the coin is given as a
black-box; that is, we do not know $q$ but may toss the coin as many times as necessary to   
``learn'' $q$. The goal is to determine if $q\leq\vro$ or $q>2\vro$; if
$q\in(\vro,2\vro]$ then the algorithm may answer anything.
We prove that for any 
$\dt<\frac{1}{2}$, any algorithm that ensures error probability at most $\dt$ on every
input must need at least $\Nc(\dt;\vro)=\ln\bigl(\frac{1}{\dt}-1\bigr)/4\vro$ coin tosses
for each threshold $\vro$. 

\begin{lemma} \label{coin}
Let $\dt<\frac{1}{2}$ and $\A_{N(\dt;\vro)}$ be an algorithm that has failure probability
at most $\dt$ and uses at most $N(\dt;\vro)$ coin tosses for threshold $\vro$. Then,
$N(\dt;\vro)\geq\Nc(\dt;\vro)$ for every $\vro\in(0,\frac{1}{4})$.   
\end{lemma}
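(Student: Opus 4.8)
The plan is to prove the bound by a two-coin indistinguishability argument, pitting two extreme inputs against each other and extracting the constant $4$ from the hypothesis $\vro<\tfrac14$. Fix the threshold $\vro\in(0,\tfrac14)$ and consider two inputs on which the correct answers differ: the degenerate coin with bias $q_0=0$ (whose correct answer is ``$q\leq\vro$'', since $0\leq\vro$), and a coin with bias $q_1>2\vro$ taken arbitrarily close to $2\vro$ (whose correct answer is ``$q>2\vro$''). Since $\A_{N(\dt;\vro)}$ errs with probability at most $\dt$ on \emph{every} input, it must answer correctly with probability at least $1-\dt$ on each of these two coins. The crux is that these two product experiments are hard to tell apart precisely on the event that the algorithm happens to observe only tails.

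First I would condition on the internal randomness $r$ of the (possibly randomized and adaptive) algorithm, which turns it into a deterministic decision tree that adaptively queries flips and stops after at most $N:=N(\dt;\vro)$ of them. For each fixed $r$ there is a well-defined \emph{all-tails path}: the sequence of queries made when every queried flip returns tails, terminating in a fixed output $o(r)\in\{\text{YES},\text{NO}\}$. On coin $q_0=0$ every flip is tails, so the algorithm always follows this path and outputs $o(r)$; correctness on $q_0$ thus forces $\Pr_r[o(r)=\text{YES}]\geq 1-\dt$. On coin $q_1$, for each fixed $r$ the probability that the (at most $N$) flips along this path all come up tails is at least $(1-q_1)^{N}$, and conditioned on that event the run is identical to the $q_0$ run and again outputs $o(r)$. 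Averaging over $r$ gives $\Pr[\A\text{ outputs YES on }q_1]\geq(1-q_1)^{N}(1-\dt)$. But YES is the \emph{wrong} answer on $q_1$, so this probability is at most $\dt$, which yields the key inequality $(1-q_1)^{N}(1-\dt)\leq\dt$.

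Taking logarithms rearranges this into $N\geq\ln\bigl(\tfrac1\dt-1\bigr)\big/\bigl(-\ln(1-q_1)\bigr)$, so it remains only to bound the denominator. Using $-\ln(1-x)\leq\frac{x}{1-x}$ together with $\vro<\tfrac14$ (so that $2\vro<\hf$ and $1-q_1>\hf$ for $q_1$ near $2\vro$), one checks that $-\ln(1-q_1)\leq\frac{q_1}{1-q_1}\leq 4\vro$ for all $q_1$ sufficiently close to $2\vro$; the constant $4$ is exactly what the restriction $\vro<\tfrac14$ buys, since $\frac{2\vro}{1-2\vro}\leq 4\vro\iff\vro\leq\tfrac14$. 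Substituting this bound gives $N\geq\ln\bigl(\tfrac1\dt-1\bigr)/(4\vro)=\Nc(\dt;\vro)$, as required.

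I expect the main obstacle to be the careful treatment of adaptive, randomized algorithms rather than the arithmetic: one must justify that the all-tails path is well-defined once the randomness $r$ is fixed, that conditioning on observing only tails genuinely couples the $q_1$ run to the $q_0$ run (so that the identical output $o(r)$ is produced), and that the ``at most $N$ tosses'' bound permits the uniform lower bound $(1-q_1)^{N}$ on the all-tails probability regardless of the path length. The remaining pieces — the logarithmic rearrangement and the elementary inequality isolating the factor $4$ — are routine.
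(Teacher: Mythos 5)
Your proposal is correct and follows essentially the same route as the paper: the paper likewise couples the two biases on the all-tails event $X=0$ (forcing the answer ``$q\leq\vro$'' with probability at least $1-\dt$ there) and then shows $(1-q)^{N}\geq\frac{\dt}{1-\dt}$ for $q$ slightly above $2\vro$, using the same elementary bound $-\ln(1-2\vro)\leq\frac{2\vro}{1-2\vro}\leq 4\vro$ that the restriction $\vro<\frac{1}{4}$ provides. Your write-up is merely more explicit about handling adaptive, randomized algorithms via the fixed-randomness decision-tree argument, and phrases the conclusion as a direct derivation rather than a contradiction.
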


\begin{proof}
Suppose $N(\dt;\vro)<\Nc(\dt;\vro)$ for some $\vro\in(0,\frac{1}{4})$.
Let $X$ be a random variable that denotes the number of times the coin lands heads. If
$X=0$ then the algorithm must say ``$q\leq\vro$'' with probability at least $1-\dt$,
otherwise the algorithm errs with probability more than $\dt$ on $q=0$.
But then for some $q_0<\frac{1}{4}$ slightly greater than $2\vro$, we have
$\Pr[X=0]>(1-2\vro)^{\Nc(\dt;\vro)}\geq\frac{\dt}{1-\dt}$.  
So $\A$ will say ``$q\leq\vro$'' (and hence, err) for $q=q_0$, with probability more
than $\dt$. 
\end{proof}

As a corollary we obtain that for any $\dt<\frac{1}{2}$, it is impossible to determine if
$q=0$ or if $q>0$ with error probability at most $\dt$ using a bounded number of samples.

Now consider risk-averse budgeted set cover.
We say that a solution is an $(\e,\gm)$-optimal solution if its cost is at most
$(1+\e)\OPT+\gm$. Suppose there is an algorithm $\A$ for risk-averse budgeted set cover
that on any input (with a black-box distribution) draws a bounded number of samples and
returns an $(\e,\gm)$-optimal solution with probability at least 
$1-\dt,\ \dt<\frac{1}{2}$, where the probability-threshold is violated by at most $\kp$.  
Consider the following risk-averse budgeted set-cover instance.  
There are three elements $e_1,e_2,e_3$, three sets $S_i=\{e_i\},\ i=1,2,3$. 
The budget is $B\geq 6\gm$ and the probability threshold is $\rho\leq\frac{1}{8(1+\e)}$. 
The costs are $w^\on_{S_i}=B$ for all $i$, and $w^A_{S_1}=0,\ w^A_{S_2}=w^A_{S_3}=2B/3$
for every scenario $A$.
Let $\kp<\frac{1}{4}$.
There are 3 scenarios: $A_0=\es$, $A_1=\{e_1,e_2,e_3\}$, $A_2=\{e_2,e_3\}$ with 
$p_{A_1}=\rho-\kp,\ p_{A_3}=1-p_{A_1}-p_{A_2}$. 
Observe that if $p_{A_2}\leq\kp$, then $\OPT\leq\rho\cdot 4B/3$, and every
$(\e,\gm)$-optimal solution must have $x_{S_1}+x_{S_2}+x_{S_3}\leq\frac{1}{3}$. But if
$p_{A_2}>2\kp$ (which is possible since $\rho<1$) then any solution where the probability
of exceeding the budget is at most $\rho+\kp$ must have
$x_{S_2}+x_{S_3}\geq\frac{1}{2}$, otherwise the cost in both scenarios $A_1$ and $A_2$
will exceed $B$. Thus, algorithm $\A$ 
can be used to determine if $p_{A_2}\leq\kp$ or
$p_{A_2}>2\kp$. This is true even if we allow the budget to be violated by a factor 
$c<\frac{10}{9}$ since we must still have $x_{S_2}+x_{S_3}>\frac{1}{3}$ if
$p_{A_2}>2\kp$; 
choosing $B\gg 1,\ \rho\ll 1$, we can allow an arbitrarily large budget-violation.  
So since $\A$ has failure probability at most $\dt$, by
Lemma~\ref{coin}, it must draw $\Omega\bigl(\frac{1}{\kp}\bigr)$ samples.

Taking $\kp=0$ shows that obtaining guarantees without violating the probability threshold
is impossible with a bounded sample size, whereas taking $\kp=\kp\rho$ shows that a 
{\em multiplicative} $(1+\kp)$-factor violation of the probability threshold requires 
$\Omega\bigl(\frac{1}{\kp\rho}\bigr)$ samples. Moreover, taking $\rho=0$ shows that one
cannot hope to achieve any approximation guarantees in the (standard) budget model with
black-box distributions.

\begin{theorem} \label{lb}
For any $\e,\gm>0,\ \dt<\frac{1}{2}$, every algorithm for risk-averse budgeted set
cover that returns an $(\e,\gm)$-optimal solution with failure probability at most $\dt$
using a bounded number of samples
\begin{list}{$\bullet$}{\usecounter{enumi} \itemsep=-0.5ex \leftmargin=3ex}
\item must violate the probability threshold on some input;
\item requires $\Omega\bigl(\frac{1}{\kp}\bigr)$ samples if the probability-threshold is
violated by at most an additive $\kp$;
\item requires $\Omega\bigl(\frac{1}{\kp\rho}\bigr)$ samples if the probability-threshold
is violated by at most a multiplicative $(1+\kp)$-factor.
\end{list}
\end{theorem}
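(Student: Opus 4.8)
The plan is to prove all three bullets as instantiations of a single reduction from the coin-distinguishing problem of Lemma~\ref{coin}. The common template is to encode the unknown coin bias $q$ as the probability $p_{A_2}$ of a single ``hard'' scenario in the three-element, three-set instance described above, and to argue that the first-stage mass $x_{S_2}+x_{S_3}$ returned by any good algorithm reveals, with the algorithm's own failure probability, whether $q$ is small or large. Since each coin toss corresponds to one sample drawn from the black-box distribution, the sample lower bound $\Nc(\dt;\vro)=\Omega(1/\vro)$ of Lemma~\ref{coin} transfers directly, and the three bullets differ only in the choice of threshold $\vro$.

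First I would fix the calibration recorded above: budget $B\geq 6\gm$, threshold $\rho\leq\frac{1}{8(1+\e)}$, first-stage costs $w^\on_{S_i}=B$, and second-stage costs $w^A_{S_1}=0$, $w^A_{S_2}=w^A_{S_3}=2B/3$, with $p_{A_1}=\rho-\kp$ and $p_{A_2}$ playing the role of the coin bias. The key dichotomy to verify is: (a) when $p_{A_2}\leq\kp$ the all-second-stage solution $x=\bo$ is feasible and has cost at most $\rho\cdot 4B/3$, so every $(\e,\gm)$-optimal solution satisfies $x_{S_2}+x_{S_3}\leq\frac{1}{3}$ (using $B\geq 6\gm$ and $\rho\leq\frac{1}{8(1+\e)}$ to bound $(1+\e)\OPT+\gm\leq B/3$); and (b) when $p_{A_2}>2\kp$, both scenarios $A_1,A_2$ overflow the budget unless $x_{S_2}+x_{S_3}\geq\frac{1}{2}$, so their combined mass $p_{A_1}+p_{A_2}>\rho+\kp$ would violate the additively relaxed threshold. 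Reading off whether $x_{S_2}+x_{S_3}$ is at most or exceeds $\frac{1}{3}$ thus decides $p_{A_2}\leq\kp$ versus $p_{A_2}>2\kp$, and applying Lemma~\ref{coin} with $\vro=\kp$ yields the $\Omega(1/\kp)$ bound of the second bullet.

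The remaining two bullets follow by specializing this parameter. Taking $\kp=0$ collapses the two regimes to $p_{A_2}=0$ versus $p_{A_2}>0$; by the corollary to Lemma~\ref{coin} no bounded-sample algorithm can separate these, so any bounded-sample algorithm must violate the threshold on some input, giving the first bullet. For the third bullet, a multiplicative $(1+\kp)$-factor violation is exactly an additive violation of $\kp\rho$; substituting $\kp\rho$ for the additive slack and invoking Lemma~\ref{coin} with $\vro=\kp\rho$ gives the $\Omega(1/(\kp\rho))$ bound. The main obstacle, and the step demanding the most care, is the calibration in (a)--(b): the separating value of $x_{S_2}+x_{S_3}$ must straddle $\frac{1}{3}$ with enough slack to survive both the additive-plus-multiplicative optimality error $(\e,\gm)$ and a constant-factor budget violation — indeed, allowing the budget to be exceeded by any factor $c<\frac{10}{9}$ still forces $x_{S_2}+x_{S_3}>\frac{1}{3}$ in regime (b), and scaling $B\gg 1$, $\rho\ll 1$ lets the permissible budget-violation grow without weakening the conclusion.
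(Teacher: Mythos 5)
Your proposal is correct and follows essentially the same route as the paper's own proof: the identical three-element, three-set instance with the same cost and probability calibration, the same dichotomy forcing $x_{S_2}+x_{S_3}\leq\frac{1}{3}$ when $p_{A_2}\leq\kp$ versus $x_{S_2}+x_{S_3}\geq\frac{1}{2}$ when $p_{A_2}>2\kp$, and the same invocation of Lemma~\ref{coin} with $\vro=\kp$, with $\kp=0$, and with $\vro=\kp\rho$ for the three bullets respectively. There is nothing to add.
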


The proof of impossibility of approximation in the standard robust model with a bounded
sample size is even simpler. Consider the following set cover instance. We have a single
element $e$ that gets ``activated'' with some probability $p$; the cost of the set
$S=\{e\}$ is 1 in stage I and some large number $M$ in stage II. If $p=0$ then $\OPT=0$,
otherwise $\OPT=1$. Thus, it is easy to see that an algorithm returning an $(\e,\gm)$-optimal
solution can be used to distinguish between these two cases (it should set $x_S\leq\gm$ in
the former case, and $x_S$ sufficiently large in the latter).

\appendix

\section{A bicriteria approximation for the Shmoys-Swamy class of 2-stage stochastic LPs
in the standard budget model ($\rho=0$)} \label{budget}

Here we sketch how one can obtain a bicriteria approximation algorithm for the class of
2-stage LPs introduced in~\cite{ShmoysS06} in the standard budget model (that is, where we
have a {\em deterministic} budget constraint). We show that for any $\rho>0$, in time
inversely proportional to $\rho$, one can obtain a near-optimal solution where the total
probability-mass of scenarios where the budget is violated is at most $\rho$.
We consider the following class of 2-stage stochastic LPs~\cite{ShmoysS06}%
\footnote{This was stated in~\cite{ShmoysS04} with extra 
constraints $B^As_A\geq h^A$, but this is equivalent to
$\binom{B^A}{D^A}s_A+\binom{\bo}{T^A}r_A \geq \binom{h^A}{j^A}-\binom{\bo}{T^A}x$.}
in the standard budget model.
\begin{alignat}{3}
& \min \quad & h(x)\ =& \quad w^\on\cdot x\ & +\ \sum_Ap_A & f_A(x) \quad 
\text{subject to} \quad x\in\Pc\sse\R_+^m, \quad f_A(x)\leq B\ \ \frall A 
\tag{Stoc-P} \label{stclassp} \\ 
& \text{where}\ & f_A(x)\ =& \quad \min & w^A\cdot r_A &\ +\ q^A\cdot s_A \notag \\
& && \quad \ \ \text{s.t.} & D^As_A &\ +\ T^Ar_A \ \geq \ j^A-T^Ax \notag \\
& && & r_A, s_A &\ \geq\ \bo, \ \ r_A\in\R^m, s_A\in\R^\ell. \notag
\end{alignat}
Here (a) $T^A\geq\bo$ for every scenario $A$, and (b) for every $x\in\Pc$, 
$\sum_{A\in\A}p_Af_A(x)\geq 0$ and the primal and dual problems corresponding to
$f_A(x)$ are feasible for every scenario $A$. 
It is assumed that $\Pc\sse B(\bo,R)$, and that $\Pc$ contains a ball of radius $V$
($V\leq 1$) where $\ln\bigl(\frac{R}{V}\bigr)$ is polynomially bounded.  
Define $\ld=\max\bigl(1,\max_{A\in\A,S}\frac{w^A_S}{w^\on_S}\bigr)$; we assume that $\ld$
is known. Let $\OPT$ be the optimum value and $\I$ denote the input size. 

It is possible to adapt the proofs in~\cite{ShmoysS06,CharikarCP05,SwamyS06} to obtain
the bicriteria guarantee and one can also prove an SAA theorem in the style
of~\cite{SwamyS05,CharikarCP05}. But perhaps, the simplest proof, which we now describe,
is obtained using the ellipsoid-based algorithm in~\cite{ShmoysS06}.
Let $\Pc'=\{x\in\Pc: f_A(x)\leq B\frall A\}$. Note that unlike in the case where we have a
probabilistic budget constraint, $\Pc'$ is a {\em convex set}.

Consider running the ellipsoid-based algorithm in~\cite{ShmoysS06} with the following
modification. Suppose we wish to return a solution of value at most $(1+\e)\OPT+\gm$.
Let $N=\poly\bigl(m,\ln(\frac{KR}{V\gm})\bigr)$ be a suitably large value that is equal to
the number of iterations of the ellipsoid method. Let $\rho'=\rho/N$.
Suppose the center of the current ellipsoid is $x\in\Pc$. 
Using $O\bigl(\frac{1}{\rho'}\bigr)$ samples one can determine with high probability
if $\Pr_A[f_A(x)>B]>\rho'/2$ or if $\Pr_A[f_A(x)>B]\leq\rho'$. In the former case, by
sampling again $O\bigl(\frac{1}{\rho'}\bigr)$ times, with very high probability, we can
obtain a scenario $A$ such that $f_A(x)>B$. 
Now we compute a subgradient $d_{A,x}$ of
$f_A(.)$ (which is obtained from an optimal dual solution to $f_A(x)$) at $x$, and use the
inequality $d_{A,x}\cdot(y-x)\leq 0$ to cut the current ellipsoid. Notice that this is a
{\em valid inequality} since for any $y\in\Pc'$, by the definition of a subgradient, we
have $0<f_A(y)-f_A(x)\geq d_{A,x}\cdot(y-x)$. In the latter case, where we detect that
$\Pr_A[f_A(x)>B]\leq\rho'$, we continue as in the algorithm in~\cite{ShmoysS06}: we
mark the current point $x$ and use an approximate subgradient of $h(.)$ at $x$ to cut the
current ellipsoid. Proceeding this way we obtain a collection of marked points
$x_1,\ldots,x_k$, where $k\leq N$, such that with high probability,
$\Pr_A[f_A(x_i)>B]\leq\rho'$ for each $x_i$, and by the analysis in~\cite{ShmoysS06} we
have that $\min_i h(x_i)$ is ``close'' to $\OPT$.

The next step in the algorithm in~\cite{ShmoysS06} is to find a point in the convex hull
of $x_1,\ldots,x_k$ whose value is close to $\min_i h(x_i)$ (procedure \findmin). 
Notice that for any point $y$ in the convex hull of $x_1,\ldots,x_k$, we have
$\Pr_A[f_A(y)>B]\leq k\rho'\leq\rho$: for any scenario $A$ with $f_A(x_i)\leq B$ for all
$i$, the convexity of $f_A(.)$ implies that $f_A(y)\leq B$. 
Thus, although the set $\{x\in\Pc:\Pr_A[f_A(x)>B]\leq\rho\}$ is not convex, this does not
present a problem for us. So one can use procedure \findmin in~\cite{ShmoysS06} to return a
point $y$ such that $h(y)\leq(1+\e)\OPT+\gm$ where $\Pr_A[f_A(y)>B]\leq\rho$.


\begin{thebibliography}{10}

\bibitem{AcerbiT02}
C.~Acerbi and D.~Tasche. 
\newblock On the coherence of expected shortfall. 
\newblock {\em Journal of Banking and Finance}, 26,1487--1503, 2002.


\bibitem{Beale55}
E.~M.~L. Beale.
\newblock On minimizing a convex function subject to linear inequalities.
\newblock {\em Journal of the Royal Statistical Society, Series B}, 17:173--184;
  discussion 194--203, 1955.

\bibitem{BertsimasS04}
D.~Bertsimas and M.~Sim. 
\newblock The price of robustness, 
\newblock {\em INFORMS Journal on Operations Research}, 52:35--38, 2004. 

\bibitem{BirgeL97}
J.~R. Birge and F.~V. Louveaux.
\newblock {\em Introduction to Stochastic Programming}.
\newblock Springer-Verlag, NY, 1997.

\bibitem{BorweinL00}
J.~Borwein and A.~S. Lewis.
\newblock {\em Convex Analysis and Nonlinear Optimization}.
\newblock Springer-Verlag, NY, 2000.

\bibitem{CalafioreC06}
G.~Calafiore and M.~Campi.
\newblock The scenario approach to robust control design.
\newblock {\em IEEE Transactions on Automatic Control}, 51(5):742--753, 2006. 

\bibitem{CharikarCP05}
M.~Charikar, C.~Chekuri, and M.~P\'{a}l.
\newblock Sampling bounds for stochastic optimization.
\newblock {\em Proceedings, 9th RANDOM}, pages 257--269, 2005.

\bibitem{CharnesC59}
A.~Charnes and W.~Cooper.
\newblock Uncertain convex programs: randomized solutions and confidence levels. 
\newblock {\em Management Science}, 6:73--79,1959.

\bibitem{Chvatal79}
V. Chv{\'a}tal.
\newblock A greedy heuristic for the set-covering problem.
\newblock {\em Mathematics of Operations Research}, 4:233--235, 1979.

\bibitem{Dantzig55}
G.~B. Dantzig.
\newblock Linear programming under uncertainty.
\newblock {\em Management Science}, 1:197--206, 1955.

\bibitem{DhamdhereGRS05}
K.~Dhamdhere, V.~Goyal, R.~Ravi, and M.~Singh.
\newblock How to pay, come what may: approximation algorithms for demand-robust covering 
  problems. 
\newblock {\em Proceedings, 46th Annual {IEEE} Symposium on Foundations of Computer
  Science}, pages 367--378, 2005. 


\bibitem{DyeST03}
S.~Dye, L.~Stougie, and A.~Tomasgard.
\newblock The stochastic single resource service-provision problem.
\newblock {\em Naval Research Logistics}, 50(8):869--887, 2003.
\newblock Also appeared as ``The stochastic single node service provision problem'',
  COSOR-Memorandum 99-13, Dept. of Mathematics and Computer Science, Eindhoven 
  Technical University, Eindhoven, 1999.


\bibitem{ErdoganI07}
E.~Erdo\v{g}an and G.~Iyengar.
\newblock On two-stage convex chance constrained problems.
\newblock {Math. Methods of Operations Research}, 65(1):115--140, 2007.

\bibitem{FeigeJMM07}
U.~Feige, K.~Jain, M.~Mahdian, and V.~Mirrokni.
\newblock Robust combinatorial optimization with exponential scenarios. 
\newblock {\em Proceedings, 13th IPCO}, pages 439--453, 2007.


\bibitem{GoelI99}
A.~Goel and P.~Indyk.
\newblock Stochastic load balancing.
\newblock {\em Proceedings, 40th Annual IEEE Symposium on Foundations of Computer
  Science}, pages 579--586, 1999. 

\bibitem{GolovinGR06}
D.~Golovin, V.~Goyal, and R.~Ravi. 
\newblock Pay today for a rainy day: improved approximation algorithms for demand-robust
  min-cut and shortest path problems. 
\newblock {\em Proceedings, 23rd STACS}, pages 206--217, 2006.

\bibitem{GuptaPRS04}
A.~Gupta, M.~P\'{a}l, R.~Ravi, and A.~Sinha.
\newblock Boosted sampling: approximation algorithms for stochastic optimization.
\newblock {\em Proceedings, 36th Annual {ACM} Symposium on Theory of Computing},
  pages 417--426, 2004. 

\bibitem{GuptaPRS05}
A.~Gupta, M.~P\'{a}l, R.~Ravi, and A.~Sinha.
\newblock What about Wednesday? Approximation algorithms for multistage stochastic
  optimization.
\newblock {\em Proceedings, 8th APPROX}, pages 86--98, 2005.

\bibitem{GuptaRS04}
A.~Gupta, R.~Ravi, and A.~Sinha.
\newblock An edge in time saves nine: LP rounding approximation algorithms for stochastic
  network design. 
\newblock In {\em Proceedings, 45th Annual {IEEE} Symposium on Foundations of Computer
  Science}, pages 218-227, 2004. 

\bibitem{HayrapetyanST05}
A.~Hayrapetyan, C.~Swamy, and \'{E}.~Tardos.
\newblock Network design for information networks.
\newblock {\em Proceedings, 16th SODA}, pages 933--942, 2005.

\bibitem{Hoeffding63}
W.~Hoeffding.
\newblock Probability inequalities for sums of bounded random variables.
\newblock {\em Journal of the American Statistical Association}, 58:13--30, 1963.

\bibitem{ImmorlicaKMM04}
N.~Immorlica, D.~Karger, M.~Minkoff, and V.~Mirrokni.
\newblock On the costs and benefits of procrastination: approximation algorithms for
  stochastic combinatorial optimization problems.
\newblock {\em Proceedings, 15th Annual {ACM}-{SIAM} Symposium on Discrete
  Algorithms}, pages 684--693, 2004. 

\bibitem{Jorion96}
P.~Jorion. 
\newblock {\em Value at Risk: A New Benchmark for Measuring Derivatives Risk}. 
\newblock Irwin Professional Publishers, New York, 1996. 

\bibitem{KleinbergRT00}
J.~Kleinberg, Y.~Rabani, and {\'E}.~Tardos.
\newblock Allocating bandwidth for bursty connections.
\newblock {\em SIAM Journal on Computing}, 30(1):191--217, 2000.

\bibitem{KleywegtSH01}
A.~J. Kleywegt, A.~Shapiro, and T.~Homem-De-Mello.
\newblock The sample average approximation method for stochastic discrete optimization.
\newblock {\em SIAM Journal of Optimization}, 12:479--502, 2001.


\bibitem{MahdianYZ02}
M.~Mahdian, Y.~Ye, and J.~Zhang.
\newblock Approximation algorithms for metric facility location problems.
\newblock {\em {SIAM} Journal on Computing}, 36:411--432, 2006.

\bibitem{Markowitz52}
H.~M. Markowitz. 
\newblock Portfolio selection. 
\newblock {\em Journal of Finance}, 7:77--91, 1952.


\bibitem{NemirovskiS05}
A.~Nemirovski and A.~Shapiro.
\newblock Scenario approximations of chance constraints.
\newblock In G.~Calafiore and F.~Dabbene, editors. {\em Probabilistic and Randomized
  Methods for Design under Uncertainty}, Springer-Verlag, 2005. 

\bibitem{Prekopa73}
A.~Pr\'{e}kopa.
\newblock Contributions to the theory of stochastic programming.
\newblock {\em Mathematical Programming}, 4:202--221, 1973.

\bibitem{Prekopa95}
A.~Pr\'{e}kopa.
\newblock {\em Stochastic Programming}. 
\newblock Kluwer Academic Publishers, Dordrecht, 1995.

\bibitem{Prekopa03}
A.~Pr\'{e}kopa.
\newblock Probabilistic programming.
\newblock In A.~Ruszczynski and A.~Shapiro, editors, {\em Stochastic Programming}, volume
  10 of {\em Handbooks in Operations Research and Mgmt. Sc.}, North-Holland,
  Amsterdam, 2003. 

\bibitem{Pritsker97}
M.~Pritsker.
\newblock Evaluating value at risk methodologies. 
\newblock {\em Journal of Financial Services Research}, 12(2/3):201--242, 1997.
 
\bibitem{RaviS04}
R.~Ravi and A.~Sinha.
\newblock Hedging uncertainty: approximation algorithms for stochastic optimization problems.
\newblock {\em Mathematical Programming, Series A}, 108:97--114, 2006.


\bibitem{RockafellarU02}
R.~Rockafellar and S.~Uryasev. 
\newblock Conditional value-at-risk for general loss distributions. 
\newblock {\em Journal of Banking and Finance}, 26:1443--1471, 2002.

\bibitem{RuszczynskiS03}
A.~Ruszczynski and A.~Shapiro. 
\newblock Editors, {\em Stochastic Programming}, volume 10 of {\em Handbooks in Operations
  Research and Mgmt. Sc.}, North-Holland, Amsterdam, 2003. 


\bibitem{RuszczynskiS05}
A.~Ruszczynski and A.~Shapiro. 
\newblock Optimization of risk measures. 
\newblock In G.~Calafiore and F.~Dabbene, editors. {\em Probabilistic and Randomized
  Methods for Design under Uncertainty}, Springer-Verlag, 2005. 

\bibitem{Shapiro03}
A.~Shapiro.
\newblock Monte Carlo sampling methods.
\newblock In A.~Ruszczynski and A.~Shapiro, editors, {\em Stochastic Programming}, volume
  10 of {\em Handbooks in Operations Research and Mgmt. Sc.}, North-Holland,
  Amsterdam, 2003. 


\bibitem{ShmoysS06}
D.~B. Shmoys and C.~Swamy.
\newblock An approximation scheme for stochastic linear programming and its application to
  stochastic integer programs.
\newblock {\em Journal of the ACM}, 53(6):978--1012, 2006.

\bibitem{ShmoysS04}
D.~B. Shmoys and C.~Swamy.
\newblock Stochastic optimization is (almost) as easy as deterministic optimization. 
\newblock {\em Proceedings, 45th Annual {IEEE} {FOCS}}, pages 228--237, 2004. 

\bibitem{ShmoysTA97}
D.~B. Shmoys, {\'E}.~Tardos, and K.~I. Aardal.
\newblock Approximation algorithms for facility location problems.
\newblock {\em Proceedings, 29th Annual {ACM} Symposium on Theory of Computing},
  pages 265--274, 1997. 

\bibitem{SoZY06}
A.~M-C. So, J.~Zhang, and Y.~Ye.
\newblock Stochastic combinatorial optimization with controllable risk aversion level. 
\newblock {\em Proceedings, 9th APPROX}, pages 224--235, 2006.

\bibitem{Srinivasan07}
A.~Srinivasan.
\newblock Approximation algorithms for stochastic and risk-averse optimization.
\newblock {\em Proceedings, 18th SODA}, pages 1305--1313, 2007.

\bibitem{Swamy04}
C.~Swamy.
\newblock {\em Approximation Algorithms for Clustering Problems}.
\newblock Ph.D. thesis, Cornell University, Ithaca, NY, 2004.
  {\sffamily\small http://www.math.uwaterloo.ca/{\small $\sim$}cswamy/theses/master.pdf}.

\bibitem{SwamyS06}
C.~Swamy and D.~B. Shmoys.
\newblock Approximation algorithms for 2-stage stochastic optimization problems. 
\newblock {\em ACM SIGACT News}, 37(1):33--46, March 2006.
  Also appeared in {\em Proceedings, 26th FSTTCS}, pages 5--19, 2006. 


\bibitem{SwamyS05}
C.~Swamy and D.~B. Shmoys.
\newblock Sampling-based approximation algorithms for multi-stage stochastic
  optimization. 
\newblock {\sffamily\small http://www.math.uwaterloo.ca/{\small $\sim$}cswamy/papers/multistage-journ.pdf}. 
\newblock Preliminary version in {\em Proceedings, 46th Annual {IEEE} Symposium on
  Foundations of Computer Science}, pages 357--366, 2005.



\end{thebibliography}
\end{document}